\newif\ifsubm\submfalse
\definecolor{myred}{RGB}{220,43,25}
\definecolor{mygreen}{RGB}{0,146,64}
\definecolor{myblue}{RGB}{0,143,224}
\pgfplotsset{compat=newest} 
\tikzstyle{vertex} = [shape=circle,draw=black]
\tikzstyle{namedVertex} = [shape=circle,draw=black]
\tikzstyle{edge} = [draw,->,thick]
\tikzstyle{labeledNodeS}=[circle, color=black!75!white, draw, inner sep = 0.1em, minimum size = 1.5em, scale=1.25]
\tikzstyle{normalEdge}=[very thick, >=stealth]
\definecolor{purplekeywords}{rgb}{0.5,0,0.33}
\definecolor{greencomments}{rgb}{0,0.5,0}
\definecolor{bluestrings}{rgb}{0,0,1}
\definecolor{types}{rgb}{0.17,0.57,0.68}
\crefname{cons}{constraint}{constraints}
\Crefname{cons}{Constraint}{Constraints}
\crefname{claim}{claim}{claims}
\Crefname{claim}{Claim}{Claims}
\theoremstyle{definition}
\newtheorem{defn}{Definition}[section]
\theoremstyle{plain}
\newtheorem{cor}[defn]{Corollary}
\newtheorem{lemma}[defn]{Lemma}
\newtheorem{theorem}[defn]{Theorem}
\newtheorem{corollary}[defn]{Corollary}
\newtheorem{claim}{Claim}
\newenvironment{proofClaim}[1][]{\ifthenelse{\equal{#1}{}}{\begin{proof}}{\begin{proof}[#1]}}{\end{proof}}
\newenvironment{applemma}[1]
{\innerapplemma}
{\endinnerapplemma}
\theoremstyle{remark}
\newtheorem{remark}[defn]{Remark}
\newtheorem{obs}[defn]{Observation}
\newcommand{\IN}{\mathbb{N}}	% Natürliche Zahlen
\newcommand{\INo}{\IN_0}		%	mit Null
\newcommand{\INs}{\IN^\ast}		%	ohne Null
\newcommand{\IR}{\mathbb{R}}
\newcommand{\R}{\mathbb{R}}	% Reelle Zahlen
\newcommand{\BigO}{\mathcal{O}}
\newcommand{\ceil}[1]{\left\lceil#1\right\rceil}
\newcommand{\floor}[1]{\left\lfloor#1\right\rfloor}
\newcommand{\abs}[1]{\left|#1\right|}
\newcommand{\noqed}{\let\qed\relax}
\setlist[itemize]{leftmargin=1.4em}
\setlist{nosep}
\newcommand{\tauMinDiff}{\tau_\Delta}
\newcommand{\vol}[3]{\mathrm{vol}_{#1}(#2,#3)}
\newcommand{\inflowOver}[2]{\int_{#1}^{#2}f_e^-(\theta)d\theta}
\newcommand{\outflowOver}[2]{\int_{#1}^{#2}f_e^+(\theta)d\theta}
\newcommand{\distP}[1]{\tilde{d}_{#1}}
\newcommand{\tPmax}{\tau(P_{\max})}
\newcommand{\tPmin}{\tau(P_{\min})}
\newcommand{\edgeLoad}[1][]{%
	\ifthenelse{\equal{#1}{}}
	{F^{\Delta}}
	{F^{\Delta}_{#1}}
}
\newcommand{\pClosEL}[1]{F_{\leq #1}}
\newcommand{\termTime}[1][]{%
	\ifthenelse{\equal{#1}{}}
	{\Theta}
	{\Theta_{\mathrm{#1}}}
}
\newcommand{\PoA}{\ensuremath{\mathrm{PoA}}}
\newcommand{\bigO}{\mathcal{O}}
\newcommand{\bigOm}{\Omega}
\newcommand{\network}{\mathcal{N}}
\newcommand{\CharF}[1][]{%
	\ifthenelse{\equal{#1}{}}
	{\mathds{1}}
	{\mathds{1}_{#1}}
}
\newcommand{\edgesLeaving}[1]{\delta^+_{#1}}
\newcommand{\edgesEntering}[1]{\delta^-_{#1}}
\pretocmd{\NAT@citexnum}{\@ifnum{\NAT@ctype>\z@}{\let\NAT@hyper@\relax}{}}{}{}
\newcommand{\sumOfTraveltimes}[1][]{%
	\ifthenelse{\equal{#1}{}}
	{\Xi}
	{\Xi_{\mathrm{#1}}}
}
\newcommand{\MPoA}{\ensuremath{{\operatorname{M-PoA}}}}
\newcommand{\TPoA}{\ensuremath{{\operatorname{T-PoA}}}}
\title{The Price of Anarchy \\ for Instantaneous Dynamic Equilibria}
\author{Lukas Graf}
\author{Tobias Harks}
\affil{\small Augsburg University, Institute of Mathematics, 86135 Augsburg\\
\href{mailto:lukas.graf@math.uni-augsburg.de}{\{\texttt{lukas.graf,tobias.harks\}@math.uni-augsburg.de}}}
\begin{document}
\maketitle
\begin{abstract}
	We consider flows over time within the deterministic queueing model of Vickrey
	and study the solution concept of instantaneous dynamic equilibrium (IDE) in which flow particles select at every decision point
	a currently shortest path. 
	The length of such a path is measured
	by the physical travel time plus the time spent in queues. 
	Although IDE have been studied since 
	the eighties, the worst-case efficiency of the solution concept with respect to the travel time is not well understood.
	We study the  price of anarchy for this model under the makespan and total travel time objective
	and show the first known upper bound for both measures of order $\bigO(U\cdot \tau)$ for single-sink instances, where $U$ denotes the total inflow volume and $\tau $ the sum
	of edge travel times. We complement this upper bound with
	a family of quite complex instances proving a lower bound of order $\bigOm(U\cdot \log \tau)$.
\end{abstract}

%!TEX root = ../article-PoA.tex

\section{Introduction}
Dynamic flows have gained substantial interest
over the last decades in modeling dynamic network systems
such as urban traffic  systems. As one of the earliest works in this area,  Ford and Fulkerson~\cite{Ford62} proposed flows over time as a generalization of static flows incorporating a time component. Shortly after, Vickrey~\cite{Vickrey69} introduced 
a game-theoretic variant using a fluid queueing model. This model -- known as the Vickrey bottleneck model -- is arguably one of the most important traffic models to date  (see Li, Huang and Yang~\cite{LI2020} for a research overview of the past 50 years).
In the network-variant of the fluid queueing model, there is a directed graph $G=(V,E)$, where edges
$e\in E$ are associated with a queue with positive rate capacity $\nu_e\in\IR_+$ and a physical transit time $\tau_e\in \IR_+$.
If the total inflow into an edge  $e=vw\in E$ exceeds the rate capacity
$\nu_e$, a queue builds up and arriving flow particles need to 
wait in the queue before they are forwarded along the edge (cf. \Cref{fig:QueueingModel}). The total travel time
along $e$  is thus composed of the waiting time spent in the queue plus the physical transit time $\tau_e$.

\begin{figure}[bh!]\centering
	\begin{adjustbox}{max width=.3\textwidth}
		\begin{tikzpicture}
	\newcommand{\flowColor}{red!30}

	\node[namedVertex,opacity=0] (v) at (0,0) {$v$};
	\node[namedVertex,opacity=0] (w) at (3,0) {$w$};

	\draw[line width=.2cm,\flowColor] ($(v.150)+(.01,.2)$) -- +(-1,.5);
	\draw[line width=.2cm,\flowColor] ($(v.150)+(0,.2)$) to[out=330,in=180] ($(v.east)+(0,.2)$);
	\draw[line width=.4cm,\flowColor] ($(v.210)+(.01,-.1)$) -- +(-1,-.5);
	\draw[line width=.4cm,\flowColor] ($(v.210)+(0,-.1)$) to[out=30,in=180] ($(v.east)+(0,-.1)$);
	
	\draw[line width=.3cm,\flowColor] (v) -- (w);
	\draw[line width=.6cm,\flowColor] ($(v)+(.69,.2)$) -- +(0,.62);
	
	\draw[line width=.3cm,\flowColor] (w.west) to[out=0,in=150] (w.330);
	\draw[line width=.3cm,\flowColor] ($(w.330)+(-.01,0)$) -- +(1,-.5);
	
	\draw[edge,thick] (v) -- (w);
	
	\draw[edge,<-] ($(v.150)+(.13,.13)$) -- +(-1.13,.57);
	\draw[edge,<-] ($(v.210)+(.05,-.07)$) -- +(-1.05,-.53);
	
	\draw[edge] (w.30) -- +(1,.5);
	\draw[edge] (w.330) -- +(1,-.5);
	
	\node () at ($(v)+(.7,1.1)$){$q_e(\theta)$};
	
	\draw[<->] (2,-.15cm) -- +(0,.3cm);
	\node () at (2,.4) {$\nu_e$};
	
	\draw[<->] ($(v.east)+(0,-.4)$) -- ($(w.west)+(0,-.4)$);
	\node () at ($(v)!.5!(w) + (0,-.6)$) {$\tau_e$};
	
	\node[namedVertex,fill=white,fill opacity=.5] () at (v) {$v$};
	\node[namedVertex,fill=white,fill opacity=.5] () at (w) {$w$};
\end{tikzpicture}
	\end{adjustbox}
	\caption{An edge $e=vw$. As the inflow rate at node $v$ exceeds the edge's capacity, a queue forms at its tail.}\label{fig:QueueingModel}
\end{figure}
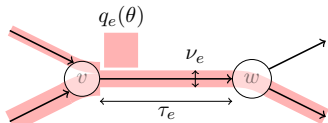

For obtaining a predictive model of traffic flows,  the physical flow model needs to be enhanced by 
a \emph{behavioral model} prescribing the actions of flow particles.  
Most works in the transportation science literature as well as recent
works in the mathematics and computer science literature 
adopt the \emph{full information model}, i.e., all flow particles have complete
information on the state of the network for all points in time
(including the  future evolution of all flow particles) and based on this information travel along a shortest path. This  leads to the concept of \emph{dynamic equilibrium} (Nash equilibrium) and has been analyzed in the transportation science literature for decades, see \citeauthor*{Friesz93}~\cite{Friesz93}, \citeauthor*{MeunierW10}~\cite{MeunierW10}, \citeauthor*{ZhuM00}~\cite{ZhuM00}
and the more recent works by \citeauthor*{Koch11}~\cite{Koch11} and \citeauthor*{CominettiCL15}~\cite{CominettiCL15}. 
The full information assumption has been justified by assuming that the game
is played
repeatedly and a dynamic equilibrium is then an attractor of a learning process. 
In light of the
wide-spread use of navigation devices, this concept may not be completely realistic anymore, because  drivers are informed in
real-time about the current traffic situation and, if beneficial,
reroute instantaneously no matter how good or bad that route
was in hindsight. This aspect is also discussed by
\citeauthor*{Marcotte04}~\cite{Marcotte04}, \citeauthor*{Hamdouch2004}~\cite{Hamdouch2004} and \citeauthor*{UnnikrishnanW09}~\cite{UnnikrishnanW09}. 

Instead of the (classical) dynamic equilibrium,
we consider in this paper \emph{instantaneous dynamic equilibria (IDE)},
where for every point in time and at every decision node, flow only enters
those edges that lie on a currently shortest path towards the respective sink.
This concept assumes far less information (only the network-wide queue length which are continuously measured) and leads to a decentralized
dynamic using only present information that is readily available via real-time information.  IDE have been proposed  already in the late 80's 
(cf.~Ran and Boyce~\cite[\S~VII-IX]{Ran96}, Boyce, Ran and LeBlanc~\cite{BoyceRL95,RanBL93}, Friesz et al.~\cite{FrieszLTW89}) and it is known that IDE do exist in the Vickrey model under quite general conditions, see Graf, Harks and Sering~\cite{GHS18}.

\subsection{Motivating Example and Price of Anarchy}
In comparison to the dynamic equilibrium of the full information model, an IDE flow behaves quite differently and several fundamental aspects of IDE are not well understood.
There are, for instance,  simple  single-sink instances in which the unique IDE flow exhibits cycling behavior, that is, some flow particles travel along cycles
before they reach the sink.  Let us give an example taken from Graf, Harks and Sering~\cite{GHS18} illustrating IDE and their
possible cycling behavior. 
\begin{figure}[t!]
	\begin{center}
		\begin{adjustbox}{max width=1\textwidth}
			\begin{tikzpicture}[scale=1.2]

	\newcommand{\exampleGraph}[2]{		
		\node[labeledNodeS, fill=white] (s1) at (0, 4) {$s_1$};
		\draw (s1) ++(4, 0) node[labeledNodeS, fill=white] (v) {$v$};
		\draw (s1) ++(0, -2) node[labeledNodeS, fill=white] (t) {$t$};
		\draw (v) ++(0, -2) node[labeledNodeS, fill=white] (s2) {$s_2$};
		
		\draw[normalEdge,->] (s1) -- node[below] {\ifthenelse{\equal{#2}{labels}}{\Large$(1,2)$}{}} (v);
		\draw[normalEdge,->] (s1) -- node[right] {\ifthenelse{\equal{#2}{labels}}{\Large$(3,1)$}{}} (t);
		\draw[normalEdge,->] (v)  -- node[left] {\ifthenelse{\equal{#2}{labels}}{\Large$(1,2)$}{}} (s2);
		\draw[normalEdge,->] (s2) -- node[below] {\ifthenelse{\equal{#2}{labels}}{\Large$(\tau_{s_2t},\nu_{s_2t})=(1,1)$}{}} (t);
		\draw[normalEdge,->] (s2) -- node[below,sloped] {\ifthenelse{\equal{#2}{labels}}{\Large$(1,1)$}{}} (s1);
		
		\node at (2,5) [rectangle,draw] (theta0) {\Large$\theta=#1$:};
	}

	\newcommand{\colComRed}{red!30}
	\newcommand{\colComBlue}{blue!60}
	
	% theta = 0 #########################################################################
	\begin{scope}
	
		\exampleGraph{0}{labels}
		
		\draw (s1) +(0,1) node[rectangle, draw, minimum width=1.5cm, minimum height=.8cm,fill=\colComRed] (s1u) {\large $u_{s_1}\equiv3\cdot\CharF[[0,1)]$}; 
		\draw [->, line width=2pt] (s1u) -- (s1);

	\end{scope}

	% theta = 1 #########################################################################
	\begin{scope}[xshift=5.2cm]

		\node[rectangle, color=\colComRed,draw, minimum width=.66cm,minimum height=.2cm, fill=\colComRed,rotate around={90:(0,0)}] at (0,3.6) {};
		\node[rectangle, color=\colComRed,draw, minimum width=4.2cm,minimum height=.4cm, fill=\colComRed] at (2,4) {};
	
		\exampleGraph{1}{}	
	
		\draw (s2) ++(0,-1) node[rectangle, draw, minimum width=1.5cm, minimum height=.8cm,fill=\colComBlue] (s2u) {\large $u_{s_2}\equiv4\cdot\CharF[[1,2)]$}; 
		\draw [->, line width=2pt] (s2u) -- (s2);
	
	\end{scope}

	% theta = 2 #########################################################################
	\begin{scope}[xshift=10.4cm]

		\node[rectangle, color=\colComRed,  draw, minimum width=.66cm,minimum height=.2cm, fill=\colComRed,rotate around={90:(0,0)}] at (0,3) {};
		\node[rectangle, color=\colComRed,  draw, minimum width=1.8cm,minimum height=.4cm, fill=\colComRed,rotate around={90:(0,0)}] at (4,3) {};
		\node[rectangle, color=\colComBlue, draw, minimum width=4.2cm,minimum height=.2cm, fill=\colComBlue] at (2,2) {};
		\node[rectangle, color=\colComBlue, draw, minimum width=.5cm,minimum height=1.5cm, fill=\colComBlue] at (3.5,1.2) {};
		\node at (2,1.2) [] () {\Large $q_{s_2t}(2)=3$};
		
		\exampleGraph{2}{}
	
	\end{scope}

	% theta = 3 #########################################################################
	\begin{scope}[xshift=15.6cm]

		\node[rectangle, color=\colComRed,draw, minimum width=.66cm,minimum height=.2cm, fill=\colComRed,rotate around={90:(0,0)}] at (0,2.4) {};
		\node[rectangle, color=\colComRed,draw, minimum width=4.7cm,minimum height=.2cm, fill=\colComRed,rotate around={-26.5:(0,0)}] at (2,3) {};
		\node[rectangle, color=\colComBlue,draw, minimum width=4.2cm,minimum height=.2cm, fill=\colComBlue] at (2,2) {};
		\node[rectangle, color=\colComBlue,draw, minimum width=.5cm,minimum height=1cm, fill=\colComBlue] at (3.5,1.4) {};
		\node[rectangle, color=\colComRed,draw, minimum width=.5cm,minimum height=.5cm, fill=\colComRed] at (3.5,.8) {};
		\node at (2,1.2) [] () {\Large $q_{s_2t}(3)=3$};
	
		\exampleGraph{3}{}
		
	\end{scope}

	% theta = 4 #########################################################################
	\begin{scope}[xshift=20.8cm]
		
		\node[rectangle, color=\colComRed,draw, minimum width=.66cm,minimum height=.2cm, fill=\colComRed,rotate around={90:(0,0)}] at (0,3.6) {};

		\node[rectangle, color=\colComBlue,draw, minimum width=4.2cm,minimum height=.2cm, fill=\colComBlue] at (2,2) {};
		\node[rectangle, color=\colComBlue,draw, minimum width=.5cm,minimum height=.5cm, fill=\colComBlue] at (3.5,1.6) {};
		\node[rectangle, color=\colComRed,draw, minimum width=.5cm,minimum height=.5cm, fill=\colComRed] at (3.5,1.2) {};
		\node at (2,1.2) [] () {\Large $q_{s_2t}(4)=2$};
		
		\exampleGraph{4}{}
		
	\end{scope}	
\end{tikzpicture}
		\end{adjustbox}
	\end{center}
	\vspace{-0.5cm}
	\caption[format=hang]{The evolution of an IDE flow over the time horizon $[0,4]$. At time $\theta =4$ all remaining particles are on one of the edges leading directly towards the sink $t$. Thus, the last particles will arrive at the sink three time steps later, i.e. at time $\theta=7$. In contrast the optimal makespan would be $\theta=6$, which can be achieved by only sending flow along the direct edges $s_1t$ and $s_2t$ (note, that the maximal inflow rate into sink $t$ is $0$ up to time $2$, then $1$ up to time $3$ and finally $2$ after that -- thus time $\theta=6$ is the earliest point at which the total flow volume of $7$ may have reached the sink). The price of anarchy with respect to the makespan for this example is therefore $7/6$.}
	\label{fig:ide}
\end{figure}
\begin{figure}[!ht]
	\begin{center}
		\begin{adjustbox}{max width=.9\textwidth}
			{\newcommand{\colComRed}{red!30}
\newcommand{\colComBlue}{blue!60}
\begin{tikzpicture}[scale=.8,solid,black,
		declare function={
			f(\x)= and(\x>=0, \x<1) * 3 * \x + and(\x>=1, \x<3) * 3 + and(\x>=3, \x<6) * (3-(\x-3));	
			g(\x)= f(\x) + and(\x>=1, \x<2) * 4 * (\x-1) + and(\x>=2, \x<6) * (4-(\x-2));			
		}
		]

		\begin{axis}[xmin=0,xmax=7.5,ymax=7.5, ymin=0, samples=500,width=12cm,height=7cm,
			axis x line*=bottom]
			\addplot[name path=f, blue, ultra thick,domain=0:8] {g(x)};
			\addplot[name path=g, red, draw=none,domain=0:8] {f(x)};
			\path[name path=xaxis] (axis cs:0,0) -- (axis cs:7,0);
			%\addplot[pattern=horizontal lines, pattern color=\colComBlue]fill between[of=f and g, soft clip={domain=0:7}];
			\addplot[fill=\colComBlue]fill between[of=f and g, soft clip={domain=0:7}];
			\addplot[fill=\colComRed]fill between[of=g and xaxis, soft clip={domain=0:7}];
		\end{axis}

		\node[blue]() at (1,3.5) {\large$F^\Delta$};

\end{tikzpicture}
\hspace{2em}
\begin{tikzpicture}[scale=.8,solid,black,
	declare function={
		f(\x)= and(\x>=0, \x<1) * 3 * \x + and(\x>=1, \x<3) * 3 + and(\x>=3, \x<4) * (3-(\x-3)) + and(\x>=4, \x< 6)*2 + and(\x>=6, \x < 7) * (2 - 2*(\x-6));	
		g(\x)= f(\x) + and(\x>=1, \x<2) * 4 * (\x-1) + and(\x>=2, \x<6) * (4-(\x-2));			
	}
	]

	\begin{axis}[xmin=0,xmax=7.5,ymax=7.5, ymin=0, samples=500,width=12cm,height=7cm,
		axis x line*=bottom]
		\addplot[name path=f, blue, ultra thick,domain=0:8] {g(x)};
		\addplot[name path=g, red, draw=none,domain=0:8] {f(x)};
		\path[name path=xaxis] (axis cs:0,0) -- (axis cs:7,0);
		%\addplot[pattern=horizontal lines, pattern color=\colComBlue]fill between[of=f and g, soft clip={domain=0:7}];
		\addplot[fill=\colComBlue]fill between[of=f and g, soft clip={domain=0:7}];
		\addplot[fill=\colComRed]fill between[of=g and xaxis, soft clip={domain=0:7}];
	\end{axis}
	
	\node[blue]() at (1,3.5) {\large$F^\Delta$};
\end{tikzpicture}
}
		\end{adjustbox}
	\end{center}
	\vspace{-0.5cm}
	\caption[format=hang]{The total flow volume $F^\Delta$ present in the network at different times for the optimal solution (left) and the IDE solution (right) described in \Cref{fig:ide}. The total shaded area corresponds to the sum of travel times in each of the flows (cf. \Cref{lemma:SumOfTraveltimesAltDef}), which is $22$ and $25$, respectively. Thus, the price of anarchy with respect to total travel times is $25/22$ in this instance.}
	\label{fig:exampleVolumeGraph}
\end{figure}
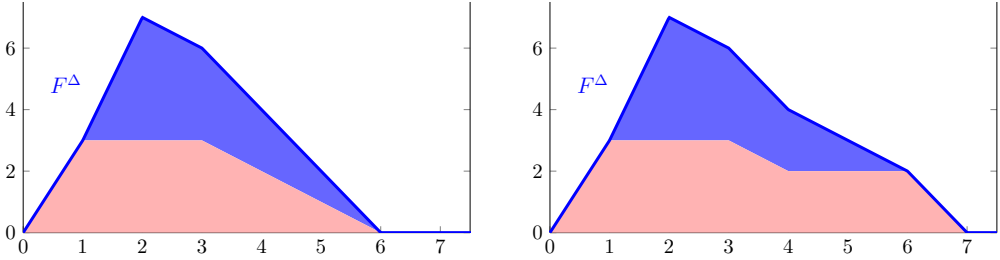

Consider the network in \Cref{fig:ide} (left). There are two source nodes $s_1$ and~$s_2$ with constant inflow rates $u_1(\theta) \equiv 3$ for times $\theta \in [0,1)$ and $u_2(\theta)\equiv 4$ for $\theta\in [1,2)$. Commodity $1$ (red) has two simple paths connecting $s_1$ with the sink $t$. Since both have equal length ($\sum_e \tau_e =3$), in an IDE both can be used by commodity $1$. In \Cref{fig:ide},  the flow takes the direct edge to $t$ with a rate of one, while edge $s_1v$ is used at a rate of two. This is actually the only split possible in an IDE, since any other split (different in more than just a subset of measure zero of $[0,1)$) would result in a queue forming on one of the two edges, which would make the respective path longer than the other one.
At time $\theta=1$, the inflow at $s_1$ stops and a new inflow of commodity 2 (blue) at $s_2$ starts. This new flow again has two possible paths to $t$, however, here the direct path ($\sum_e \tau_e =1$) is shorter than the alternative ($\sum_e \tau_e =4$). So all flow enters edge $s_2t$ and starts to form a queue. 
At time $\theta= 2$, the first flow particles of commodity $1$ arrive at $s_2$ with a rate of $2$. Since the flow of commodity $2$ has built up a queue of length $3$ on edge $s_2t$ by this time, the estimated travel times $\sum_e (\tau_e + q_e(\theta))$ are the same on both simple $s_2$-$t$ paths. Thus, the red flow is split evenly between both possible paths. This results in the queue-length on edge $s_2t$ remaining constant and therefore this split gives us an IDE flow for the interval~$[2,3)$.
At time $\theta=3$, red particles will arrive at $s_1$ again, thus, completing a full cycle (namely $s_1,v,s_2,s_1$). This example shows that IDE flows may involve a flow decomposition along cycles.  Note, that cycles can appear even in the case of only a single commodity (and therefore a single source and sink) and a constant inflow rate over a single interval, see Graf, Harks and Sering~\cite{GHS18} for details.
In contrast to this cycling behavior, the (classical) dynamic equilibrium flow will just send more of the red flow along the direct path $s_1,t$ since the future queue growth at edge $s_2t$ of the alternative path is already anticipated. For the price of anarchy with respect to the makespan, we need to specify
the latest arrival time (makespan) of a system optimal dynamic flow (minimizing the makespan).
Since the maximal inflow rate into sink $t$ is $0$ up to time $2$,  $1$ up to time $3$ and finally $2$ for all later times, the time $\theta=6$ is the earliest point in time at which the total flow volume of $7$ may have reached the sink.
Thus, the price of anarchy with respect to the makespan for this example is exactly $7/6$. Similarly, we can determine the price of anarchy with respect to the total travel times to be exactly $25/22$ here (see \Cref{fig:exampleVolumeGraph}). This example motivates the question of the price of anarchy of IDE flows.\\

\textbf{Question} (PoA): Assuming single-sink instances with constant inflow rates for a finite time interval, what is the maximum time (makespan) needed so that every flow particle reaches the sink?
Alternatively, what is the maximum total travel time for all flow particles?
As is common in the price of anarchy literature, we will compare the makespan and total travel time of an IDE solution with that of a system optimal solution minimizing the respective objective function (makespan or total travel time). 
Formally,  the price of anarchy is defined as the 
worst-case ratio  of the respective objective function value evaluated at an IDE flow and that of a respective system optimal dynamic flow.

\subsection{Our Results and Proof Techniques}
We study the price of anarchy (PoA) of IDE flows
for single-sink instances and derive the first quantitative upper bound
for the makespan and total travel time objective. Our bound is parameterized
in the numbers $U$ and  $\tau$ denoting the total flow volume injected into
the sources and  the sum of physical travel
times, respectively. We denote by  $\MPoA(U,\tau)$ and $\TPoA(U,\tau)$
the price of anarchy over the family of instances parameterized by $U$ and $\tau$
for the makespan and total travel time objective, respectively.
Our main result reads as follows:
\begin{framed}\noindent
Thm.~\ref{thm:Termination_SingleSink2}/Thm.~\ref{thm:PoAUppperBound}:
	For multi-source single-sink networks, any IDE flow over time terminates after at most $\BigO(U\tau)$ time. 
	Moreover, for both objectives we obtain $\MPoA(U,\tau),\TPoA(U,\tau) \in \bigO(U \tau)$.\footnotemark	
	\end{framed}
\footnotetext{Here, $\bigO(U \tau)$ denotes the set of all functions in $U$ and $\tau$ growing asymptotically at most as fast as $U\tau$, i.e. all functions $f: \INs \times \INs \to \IR_{\geq 0}$ such that there exist some $k, U_0,\tau_0 \in \INo$ with $f(U,\tau) \leq k\cdot U\tau$ for all $U \geq U_0, \tau \geq \tau_0$.}
We prove this bound by first deriving a general
termination bound for acyclic graphs.
Using this bound, we then show that 
there exist so-called sink-like subgraphs 
that can effectively be treated as acyclic graphs.
This way, we can argue that at all times a sufficiently large
flow volume enters the current sink-like subgraph and, by the bound for acyclic graphs, reaches the sink within the claimed time. \Cref{fig:LogicalStructureSec3} gives an overview over the logical structure of the complete proof of \Cref{thm:Termination_SingleSink2,thm:PoAUppperBound}. 
The proof techniques and the bound itself are completely
different to those for dynamic equilibria in Correa, Cristi, and Oosterwijk~\cite{CCO19}
or for static Wardrop equilibria (see \citeauthor*{Rough05}~\cite{Rough05}, \citeauthor*{CorreaSM04}~\cite{CorreaSM04} and \citeauthor*{Watling16}~\cite{Watling16}). 

\begin{figure}
	\begin{adjustbox}{max width=\textwidth}
		{
	\Crefname{lemma}{Lem.}{Lem.}
	\Crefname{corollary}{Cor.}{Cor.}
	\Crefname{theorem}{Thm.}{Thm.}

	\tikzstyle{Thm} = [rectangle, draw, ultra thick, text width=30em, text badly centered]
	\tikzstyle{Lem} = [rectangle, draw, text width=17em]
	\tikzstyle{Cor} = [rectangle, draw, text width=17em]
	\tikzstyle{Claim} = [text width=17em]
	\tikzstyle{extern} = [fill=gray!20]

	\tikzstyle{implies} = [draw, -latex', double]
	\tikzstyle{line2} = [draw]
	
	\small	
	\begin{tikzpicture}[node distance = 5em, auto]
		
		\node[Thm] (PoAThm) {\Cref{thm:PoAUppperBound}: $\MPoA(U,\tau),\TPoA(U,\tau) \in \bigO(U\tau)$};
		
		\node[Thm, below of=PoAThm, node distance=4em] (UpperBoundThm) {\Cref{thm:Termination_SingleSink2}: Upper bound on IDE makespan and total travel times};
		
		\node[below of=UpperBoundThm] (belowUpperBoundThm) {};
		
		% Left side
		\node[Cor, left of=belowUpperBoundThm, node distance=12em] (CorSinkLikeTerm) {\Cref{cor:TerminationIfGIsSinkLike}: Upper bound for termination time in sink-like graphs};
		
		\node[Lem, below of=CorSinkLikeTerm] (TermBoundAcyclicLem) {\Cref{lemma:TerminationAcyclicNetworksV2}:\textsuperscript{(V)} Upper bounds for termination time in acyclic networks};
		\node[below of=TermBoundAcyclicLem, node distance=4em] () {
			\begin{adjustbox}{width=14em}
				\large
				\begin{tikzpicture}
	\node[color=gray]() at (-1,-2) {$\tilde{d}:$};
	\draw[color=gray,ultra thick, dashed] (0,-1.5) -- (0,1.5);	
	\node[color=gray]() at (0,-2) {$5$};
	\draw[color=gray,ultra thick, dashed] (2,-1.5) -- (2,1.5);	
	\node[color=gray]() at (2,-2) {$4$};
	\draw[color=gray,ultra thick, dashed] (4,-1.5) -- (4,1.5);	
	\node[color=gray]() at (4,-2) {$3$};
	\draw[color=gray,ultra thick, dashed] (6,-1.5) -- (6,1.5);	
	\node[color=gray]() at (6,-2) {$2$};
	\draw[color=gray,ultra thick, dashed] (8,-1.5) -- (8,1.5);	
	\node[color=gray]() at (8,-2) {$1$};
	\draw[color=gray,ultra thick, dashed] (10,-1.5) -- (10,1.5);	
	\node[color=gray]() at (10,-2) {$0$};

	\node[vertex,fill=white] (1) at (0,0) {};
	\node[vertex,fill=white] (2) at (2,1) {};
	\node[vertex,fill=white] (2') at (2,-1) {};
	\node[vertex,fill=white] (3) at (4,0) {};
	\node[vertex,fill=white] (4) at (6,0) {};
	\node[vertex,fill=white] (5) at (8,1) {};
	\node[vertex,fill=white] (5') at (8,-1) {};
	\node[namedVertex,fill=white] (t) at (10,0) {$t$};

	\draw[edge,ultra thick] (1)	-- (2);
	\draw[edge,ultra thick] (2)	-- (3);
	\draw[edge] (2')	-- (3);
	\draw[edge] (2')	-- (5');
	\draw[edge,ultra thick] (3)	-- (4);
	\draw[edge,ultra thick] (4)	-- (5');
	\draw[edge] (4)	-- (t);
	\draw[edge] (5)	-- (t);
	\draw[edge,ultra thick] (5')	-- (t);
\end{tikzpicture}
			\end{adjustbox}
		};		
	
		\node[Lem, below of=TermBoundAcyclicLem, node distance=10em] (OnlyShortestPathsActiveLem) {\Cref{lemma:InSinkLikeOnlyShortestPathsAreActive}: In sink-like subgraphs only physically shortest paths are active};
		
		\node[Lem, extern, below of=OnlyShortestPathsActiveLem] (SmallVolumeDoesntDivertLem) {\cite[Lem. 4.4]{GHS18}: Minimum required volume to divert flow away from physically shortest paths.};
		
		\node[Lem, below of=SmallVolumeDoesntDivertLem] (FlowLeavesEdgesFastLem) {\Cref{lemma:FlowOnEdgesLeavesFastVar}:\textsuperscript{(V)} Upper bound on residence time of flow on edges};
		
		% Right side
		\node[Claim, right of=belowUpperBoundThm, node distance=12em] (Claim5) {\underline{\Cref{thm:Termination_SingleSink2:Claim}}: Inflow $<\nicefrac{1}{2}$ into sink for $\approx [\theta,\theta+\tau(G)]$ then $G$ is sink-like at $\theta$};
		
		\node[Lem, below of=Claim5] (extendingSinkLikeLemma) {\Cref{lemma:ExtensionOfSinkLikeSubgraphs}: $T$ sink-like for $[\theta_1,\theta_2]$ then $T\cup\set{v}$ is sink-like for $[\theta_1,\theta_2-\delta]$};
		\node[below of=extendingSinkLikeLemma, node distance=4.8em] () {
			\begin{adjustbox}{width=11em}
					\newcommand{\colInT}{green!70!blue!70}
					\newcommand{\colVtoT}{red!50!blue!80}
					\newcommand{\colVtoG}{black}
					\newcommand{\colGtoV}{blue!70}
					\newcommand{\colGtoT}{blue!30}
					\newcommand{\colTtoV}{red}
					\newcommand{\colTtoG}{black}
				\large
				\newcommand{\colTbackground}{black!20!white}

\tikzset{
	styleVtoT/.style={edge,dashed,\colVtoT},
	styleTtoV/.style={edge,dashdotted,\colTtoV},
	styleInT/.style={edge,\colInT},
	styleGtoT/.style={edge,loosely dotted,\colGtoT},
	styleGtoV/.style={edge,dotted,\colGtoV},
	styleTtoG/.style={edge,loosely dotted,\colTtoG},
	styleVtoG/.style={edge,loosely dotted,\colVtoG}
}

\begin{tikzpicture}[scale=1.0]	
	\node () at (1,4) {\Large$G$};
	\fill[rounded corners=10, color=\colTbackground] (3.5,4.5) rectangle (8.5, -.5);
	\node() at (8,4) {\Large$T$};
	\draw[rounded corners=10, ultra thick, color=\colTbackground] (3.5,4.5) -- (8.5,4.5) -- (8.5, -.5) -- (3.5,-.5) -- (1.2,2) -- cycle;
	\node() at (3,2.5) {\Large$T'$};

	\node[namedVertex, fill=white] (t) at (8, 2) {$t$};
	\node[namedVertex, fill=white] (v) at (2,2) {$v$};
	\node[vertex, fill=white] (1) at (4,4) {};
	\node[vertex, fill=white] (2) at (7,4) {};
	\node[vertex, fill=white] (3) at (5,2) {};
	\node[vertex, fill=white] (4) at (4,0) {};
	\node[vertex, fill=white] (5) at (8,0) {};
	
	\draw[styleVtoT] (v) -- (3);
	\draw[styleVtoT] (v) -- (4);
	\draw[styleVtoG] (v) -- ++(-1,-1);
	\draw[styleGtoV,<-] (v) -- ++(-1.5,0);
	\draw[styleGtoV,<-] (v) -- ++(-1,1);
	
	\draw[styleInT] (1) -- (2);
	\draw[styleInT] (1) -- (3);
	\draw[styleTtoV] (1) -- (v);
	\draw[styleTtoG] (1) -- ++(-.8,.8);
	\draw[styleGtoT,<-] (1) -- ++(-1.5,0);
	
	\draw[styleInT] (2) -- (3);
	
	\draw[styleInT] (3) -- (t);
	
	\draw[styleInT] (4) -- (3);
	\draw[styleGtoT,<-] (4) -- ++(-1.5,0);
	
	\draw[styleInT] (5) -- (4);
	\draw[styleInT] (5) -- (t);

\end{tikzpicture}
			\end{adjustbox}
		};	
		
		\node[Claim, below of=extendingSinkLikeLemma, node distance=10em] (eSLLCl1) {\underline{\Cref{lemma:ExtensionOfSinkLikeSubgraphs:Claim:ReachingVFast}}: Flow from $T$ to $v$ at $\theta_1$ reaches $v$ before $\approx \theta_2 - \nicefrac{\delta}{2}$};
		\node[Claim, below of=eSLLCl1] (eSLLCl2) {\underline{\Cref{lemma:ExtensionOfSinkLikeSubgraphs:Claim:FromVOnlyToT}}: All flow reaching $v$ before $\approx \theta_2 - \nicefrac{\delta}{2}$ continues towards $T$};
		\node[Claim, below of=eSLLCl2] (eSLLCl3) {\underline{\Cref{lemma:ExtensionOfSinkLikeSubgraphs:Claim:ReachingTFast}}: All flow between $v$ and $T$ during $\approx [\theta_1,\theta_2 - \nicefrac{\delta}{2}]$ reaches $T$ by $\theta_2$};
		
		% Paths:
		
		\path [implies] (UpperBoundThm) -- (PoAThm);
		
		\path[implies] (TermBoundAcyclicLem.east) to[bend right=8] (UpperBoundThm);
		\path[implies] (OnlyShortestPathsActiveLem.east) to[bend right=10] (UpperBoundThm);
		
		\path[implies] (CorSinkLikeTerm) -- (UpperBoundThm);
		\path[implies] (TermBoundAcyclicLem) -- (CorSinkLikeTerm);
		\path[implies] (OnlyShortestPathsActiveLem.west) to[bend left=30](CorSinkLikeTerm.west);
		\path[implies] (SmallVolumeDoesntDivertLem) -- (OnlyShortestPathsActiveLem);
		
		\path[implies] (SmallVolumeDoesntDivertLem) -- (eSLLCl2);
		
		\path[implies] (FlowLeavesEdgesFastLem.east) -- (eSLLCl1.west);
		\path[implies] (FlowLeavesEdgesFastLem) -- (eSLLCl3);
			
		\path[implies] (Claim5) -- (UpperBoundThm);
		\path[implies] (extendingSinkLikeLemma) -- (Claim5);
		
		\path[implies] ($0.3*(eSLLCl1.north)+0.7*(eSLLCl1.north west)$) -- ($0.3*(extendingSinkLikeLemma.south)+0.7*(extendingSinkLikeLemma.south west)$);
		\path[implies] (eSLLCl2.north east) to[bend right=20] (extendingSinkLikeLemma.east);
		\path[implies] (eSLLCl3.north east) to[bend right=30] (extendingSinkLikeLemma.east);

		\path[implies] ($0.3*(eSLLCl2.north)+0.7*(eSLLCl2.north west)$) -- ($0.3*(eSLLCl1.south)+0.7*(eSLLCl1.south west)$);
		\path[implies] ($0.3*(eSLLCl3.north)+0.7*(eSLLCl3.north west)$) -- ($0.3*(eSLLCl2.south)+0.7*(eSLLCl2.south west)$);
		\path[implies] (eSLLCl3.north west) to[bend left=30] (eSLLCl1.south west);
	\end{tikzpicture}
}
	\end{adjustbox}
	\caption{The logical structure of \cref{sec:UpperBounds}. Implication arrows are used to indicate which statements are used to prove which other statements. A superscript (V) is used to highlight statements for which their proofs directly use the fact that in the Vickrey bottleneck model queues operate at capacity (i.e. \cref{eq:FeasibleFlow-QueueOpAtCap}). Additionally, the proof of the lemma with grey background (from \cite{GHS18}) depends on the way particles in an IDE predict the future travel times (and, thus, indirectly also depends on the way congestion is modelled). Therefore, these are the main places where one would possibly have to adjust the proofs if one wants to transfer our results to a different flow model.}\label{fig:LogicalStructureSec3}
\end{figure}

We then turn to lower bounds on the price of anarchy for
IDE flows. 
\begin{framed}\noindent
Thm.~\ref{thm:PoALowerBound}:
	For $(U,\tau)\in \INs\times\INs$ with $U \geq 2\tau$, we have 
	$\MPoA(U,\tau),\TPoA(U,\tau) \in \bigOm(U\log \tau)$.\footnotemark
\end{framed} 
\footnotetext{Similarly, $\bigOm(U \log\tau)$ denotes the set of all functions in $U$ and $\tau$ growing asymptotically at least as fast as $U\log\tau$, i.e. all functions $f: \INs \times \INs \to \IR_{\geq 0}$ such that there exist some $k, U_0,\tau_0 \in \INo$ with $k\cdot f(U,\tau) \geq U\log\tau$ for all $U \geq U_0, \tau \geq \tau_0$.}
The lower bound is based on a quite complex instance (see \Cref{fig:SlowTermGraphIntuitive})
that works roughly as follows.
We combine two gadgets: A ``cycling gadget'' consisting of a large cycle made of edges with capacity $\approx U$ and a ``blocking gadget'' consisting
of paths with low capacity and length of about $\tau$ connecting the nodes on the cycle to the sink node (see \Cref{fig:SlowTermGadgetC} for the cycling gadget, \Cref{fig:SlowTermGadgetBK} for the blocking gadget and \Cref{fig:SlowTermGraph} for how they are combined). 
An IDE flow within this graph can then alternate between two different phases: A ``charging phase'', wherein the main amount of flow travels once around the big cycle, loosing a small amount of flow to each of the paths leading towards the sink, and a ``blocking phase'', in which the particles traveling along the paths form queues again and again in just the right way as to keep the main amount of flow traveling around on the large cycle without loosing any more flow. In order to derive a lower bound on the price of anarchy we then augment this instance in such a way that a respective system optimal flow can just bypass the two gadgets and reach the sink in constant time while any IDE flow gets diverted into the cycling gadget.

\subsection{Further Related Work}
In the dynamic traffic assignment literature, the concept of IDE
 was introduced and studied  by several papers such as  Ran and Boyce~\cite[\S~VII-IX]{Ran96}, Boyce, Ran and LeBlanc~\cite{BoyceRL95,RanBL93} and \citeauthor*{FrieszLTW89}~\cite{FrieszLTW89}. These works developed an optimal control-theoretic formulation and characterized instantaneous user equilibria by Pontryagin's optimality conditions.
It is worth mentioning that the control-theoretic formulation is actually not compatible with the  deterministic queueing model of Vickrey.
In Boyce, Ran and LeBlanc~\cite{BoyceRL95}, a differential equation
per edge governing the cumulative edge flow (state variable) is used. The right-hand side of the
differential equation depends on the function denoting the rate at which flow leaves the edge (the exit flow function) which
is assumed to be differentiable and strictly positive for any
positive inflow. Both assumptions (positivity and differentiability)
are not satisfied for the Vickrey model. For example, flow entering
an empty edge needs a strictly positive time after which it leaves 
the edge again, thus, violating the strict positiveness of the exit
flow function. More importantly, differentiability of the exit flow function
is not guaranteed for the Vickrey queueing model.
Non-differentiability (or equivalently discontinuity w.r.t. the state variable) is a well-known obstacle in the convergence
analysis of a discretization of the Vickrey model.  
\citeauthor*{Han2013a}~\cite{Han2013a} considered dynamic equilibrium and reformulated
the Vickrey model using a partial differential equation formulation. 
In~\cite{Han2013b}, they further derived numerical algorithms for computing 
dynamic equilibria by discretizing the model and showing that the limit points
correspond to dynamic equilibria of the continuous model.
For further numerical schemes computing dynamic equilibria, we refer to the recent survey
of Friesz and Han~\cite{Friesz19} and Peeta et al~\cite{Peeta01}.   
 Koch and Skutella~\cite{Koch11} gave a novel constructive characterization of dynamic equilibria in terms of a family of static flows (thin flows).
 Cominetti, Correa and Omar~\cite{CominettiCL15} refined this characterization and Koch and Sering~\cite{Sering2019} incorporated spillbacks in the fluid queuing model. 
Kaiser~\cite{Kaiser20} showed that the thin flows needed for computing dynamic equilibria can be determined in polynomial time for series-parallel networks. Cominetti, Correa and Olver~\cite{CominettiCO17} studied the long term behavior of dynamic equilibria with infinitely lasting constant inflow rate at a single source. They characterized stable state distributions by an LP formulation provided that the inflow rate is at most the capacity of a minimal $s$-$t$ cut.

Regarding the price of anarchy of dynamic equilibria, Koch and Skutella~\cite{Koch11} derived the first results on the price of anarchy for dynamic equilibria, which were recently improved by  Correa, Cristi and Oosterwijk~\cite{CCO19} devising a tight bound of $\frac{e}{e-1}$, provided that a certain monotonicity conjecture holds. {While this conjecture is wildly believed to be true, it still remains unproven to this day.
Israel and Sering~\cite{israel2020impact} investigated the price of anarchy
for the model with spillbacks.
Bhaskar, Fleischer and Anshelevich~\cite{BhaskarFA15} devised Stackelberg strategies in order to improve the efficiency of dynamic equilibria. Recently, Frascaria and Olver~\cite{Frascaria2019} considered a flexible departure choice
 model from an optimization point of view and derived insights into
 devising tolls for improving the performance of dynamic equilibria. 
 
Ismaili~\cite{ismaili2017} considered a discrete version of IDEs (see also Cao et al.~\cite{CaoCCW17}, Scarsini et al. \cite{ScarsiniST18}, Harks et al.~\cite{HarksPSTK18} and Hoefer et al.~\cite{HoeferMRT11} for further works on discrete packet models)
and investigated the price of anarchy with respect to the sum of travel times and derived lower bounds of order $\Omega(|V|+n)$
for the setting that only simple paths are allowed.
Here $n$ denotes the number of discrete players in the game. 
Very recently \citeauthor*{AtomicBound2021}~\cite{AtomicBound2021} proved for acyclic graphs an upper bound of $\tPmax + n$ on the residence time of agents in the disrete version. Here, $\tPmax$ denotes the length of a longest source-sink-path in the network. 
For general multi-commodity instances allowing also cycles,
Ismaili~\cite{ismaili2017} proved that the price of anarchy is unbounded. Similarly, Graf, Harks and Sering~\cite{GHS18} showed that for the continuous version, multi-commodity IDE flows may cycle forever and, thus, the price of anarchy is infinity. For IDE flows in single-sink networks, on the other hand, they showed that termination is always guaranteed. However, due to the non-constructive nature of the proof no explicit bound on the termination time or the price of anarchy follows.

%!TEX root = ../article-PoA.tex

\section{Model}\label{sec:Model}

Let $\network = (G,(\nu_e)_{e \in E},(\tau_e)_{e \in E},(u_v)_{v \in V\setminus\set{t}}, t)$ be a network consisting of a directed graph $G=(V,E)$, edge capacities $\nu_e \in \INs$, edge travel times $\tau_e \in \INs$,\footnote{Throughout this paper we will restrict ourselves to integer travel times and edge capacities to make the statements and proofs cleaner. However, all results can be easily applied to instances with rational travel times and capacities by simply rescaling the instance appropriately. Note, however, that all bounds will then scale accordingly.} a single sink node $t \in V$ reachable from every other node and for every node $v \in V\setminus\set{t}$ a corresponding integrable (network) inflow rate $u_v: \IR_{\geq 0} \to \IR_{\geq 0}$. The idea then is that, at all times $\theta\in \IR_{\geq 0}$ infinitesimal small agents enter the network at node $v$ at a rate according to $u_v(\theta)$ and start traveling through the graph towards the common sink $t$. Such a dynamic can be described by a \emph{flow over time}, a tuple $f = (f^+,f^-)$ where $f^+,f^-: E \times \IR_{\geq 0} \to \IR_{\geq 0}$ are integrable functions. For any edge $e \in E$ and time $\theta \in \IR_{\geq 0}$ the value $f^+_e(\theta)$ describes the \emph{(edge) inflow rate} into $e$ at time $\theta$ and $f^-_e(\theta)$ is the \emph{(edge) outflow rate} from $e$ at time $\theta$. 

For any such flow over time $f$ we define the \emph{cumulative (edge) in- and outflow rates} $F^+$ and $F^-$ as
\begin{align*}
	F^+_e(\theta) \coloneqq \int_{0}^{\theta}f^+_e(\zeta)d\zeta \quad\text{ and }\quad F^-_e(\theta) \coloneqq \int_{0}^{\theta}f^-_e(\zeta)d\zeta, 	
\end{align*}
respectively. The queue length of edge $e$ at time $\theta$ is then defined as
\begin{align*}
	q_e(\theta) \coloneqq F^+_e(\theta)-F^-_e(\theta+\tau_e). 
\end{align*}

We call such a flow $f$ a \emph{feasible flow} for a given set of network inflow rates $u_v: \IR_{\geq 0} \to \IR_{\geq 0}$ for each node $v \in V \setminus \{t\}$, if it satisfies the following \cref{eq:FeasibleFlow-FlowConservation,eq:FeasibleFlow-FlowConservationSink,eq:FeasibleFlow-NoOutflowBeforeTau,eq:FeasibleFlow-QueueOpAtCap}.
The \emph{flow conservation constraints} are modeled for all nodes $v \neq t$ as
\begin{align}\label[cons]{eq:FeasibleFlow-FlowConservation} 
\sum_{e \in \delta^+_v}f^+_e(\theta) - \sum_{e \in \delta^-_v}f^-_e(\theta) = u_v(\theta) & \quad \text{ for all } \theta \in \IR_{\geq 0},
\end{align}
where $\delta^+_v := \set{vu \in E}$ and $\delta^-_v := \set{uv \in E}$ are the sets of outgoing edges from $v$ and incoming edges into $v$, respectively. For the sink node $t$ we require
\begin{align}\label[cons]{eq:FeasibleFlow-FlowConservationSink} 
\sum_{e \in \delta^+_{t}}f^+_e(\theta) - \sum_{e \in \delta^-_{t}}f^-_e(\theta) \leq 0
\end{align}
and for all edges $e \in E$ we always assume 
\begin{align}\label[cons]{eq:FeasibleFlow-NoOutflowBeforeTau}
f_e^-(\theta) = 0 &\text{ f.a. } \theta < \tau_e.
\end{align}
Finally we assume that the queues operate at capacity which can be modeled by
\begin{align}	\label[cons]{eq:FeasibleFlow-QueueOpAtCap} 
f_e^-(\theta + \tau_e) = 
\begin{cases}
\nu_e, & \text{ if } q_e(\theta) > 0 \\
\min\set {f^+_e(\theta), \nu_e}, & \text{ if } q_e(\theta) \leq 0
\end{cases}  
& \quad \text{ for all }  e \in E, \theta \in \IR_{\geq 0}.
\end{align}

\paragraph{Objectives of Flows over Time.}

We will now introduce some additional notation in order to formally define the two objectives we want to measure IDE against: Makespan and total travel times. Since both objectives are only relevant for flows with finitely lasting inflow rates, from here on we will always assume that there exists some time $\theta_0$, such that the supports of all network inflow rates $u_v$ are contained in $[0,\theta_0]$.

Following \cite{Sering2019}, for any feasible flow $f$ and every edge $e \in E$ we define the \emph{edge load} function $\edgeLoad[e]$ that gives us for any time $\theta$ the flow volume currently on edge $e$ (either waiting in its queue or traveling along the edge): 
	\[\edgeLoad[e]: \IR_{\geq 0} \to \IR_{\geq 0}, \theta \mapsto F^+_e(\theta)-F^-_e(\theta). \]
The function $\edgeLoad(\theta) := \sum_{e \in E}\edgeLoad[e](\theta)$ then gives  the \emph{total flow volume in the network at time $\theta$}. 
Furthermore, we define a function $Z$ indicating the volume of flow that already reached the sink $t$ by time $\theta$:
	\begin{align}\label{eq:DefZ}
		Z: \IR_{\geq 0} \to \IR_{\geq 0}, \theta \mapsto \sum_{e \in \edgesEntering{t}}F^-_e(\theta) - \sum_{e \in \edgesLeaving{t}}F^+_e(\theta)
	\end{align}
and for any node $v \neq t$ the \emph{cumulative network inflow at $v$} as
	\[U_v: \IR_{\geq 0} \to \IR_{\geq 0}, \theta \mapsto \int_{0}^{\theta} u_v(\zeta)d\zeta.\]
We will make use of the following connection between these functions (cf. \cite[Section 4]{GHS18}).
	
\begin{lemma}\label{lem:GeIsTotalFlowInGraph}
	Let $f$ be a feasible flow. Then for every subset $W \subseteq V$ and any time $\theta$ we have
	\[\sum_{e \in E(W)}\edgeLoad[e](\theta) = \sum_{e \in \delta^-_W}F_e^-(\theta) + \sum_{v \in W}U_v(\theta) - \sum_{e \in \delta^+_W}F_e^+(\theta) - \begin{cases}
	Z(\theta), 	& t \in W \\
	0,				& \text{else}
	\end{cases},\]
	where we define the edge set $E(W) \coloneqq \set{vw \in E | v,w \in W}$ of all edges between nodes in $W$ as well as the sets $\delta^+_W := \set{wv \in E | w \in W, v\notin W}$ of outgoing edges from $W$ and $\delta^-_W := \set{vw \in E | v \notin W, w \in W}$ of incoming edges into $W$. In particular taking $W = V$ we get
	\[\edgeLoad(\theta) = \sum_{v \in V\setminus\set{t}}U_v(\theta) - Z(\theta). \qedhere\]
\end{lemma}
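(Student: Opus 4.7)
The proof is a direct book-keeping argument based on integrating the flow conservation constraints in time and summing over the nodes of $W$. The main step is to rewrite each node-based term as a sum of edge-based contributions and observe that boundary and interior edges decouple cleanly.

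First I would integrate \cref{eq:FeasibleFlow-FlowConservation} from $0$ to $\theta$ to obtain, for every $v \in V\setminus\set{t}$,
\[
\sum_{e \in \edgesLeaving{v}} F_e^+(\theta) - \sum_{e \in \edgesEntering{v}} F_e^-(\theta) = U_v(\theta),
\]
and note that by the very definition of $Z$ in \cref{eq:DefZ} one has
\[
\sum_{e \in \edgesLeaving{t}} F_e^+(\theta) - \sum_{e \in \edgesEntering{t}} F_e^-(\theta) = -Z(\theta).
\]
Thus both cases can be summarized as $\sum_{e \in \edgesLeaving{v}} F_e^+(\theta) - \sum_{e \in \edgesEntering{v}} F_e^-(\theta) = U_v(\theta) - \CharF[v=t] Z(\theta)$, with the convention $U_t \equiv 0$.

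Next I would sum this identity over all $v \in W$ and regroup the terms on the left-hand side according to the position of each edge $e = vw$ relative to $W$. An edge $e \in E(W)$ has both its endpoints in $W$, so it appears once in $\edgesLeaving{v}$ (contributing $+F_e^+(\theta)$) and once in $\edgesEntering{w}$ (contributing $-F_e^-(\theta)$); the two contributions combine to $\edgeLoad[e](\theta) = F_e^+(\theta) - F_e^-(\theta)$. An edge $e \in \edgesLeaving{W}$ contributes only $+F_e^+(\theta)$, and an edge $e \in \edgesEntering{W}$ contributes only $-F_e^-(\theta)$; edges with neither endpoint in $W$ do not appear at all. Hence
\[
\sum_{e \in E(W)} \edgeLoad[e](\theta) + \sum_{e \in \edgesLeaving{W}} F_e^+(\theta) - \sum_{e \in \edgesEntering{W}} F_e^-(\theta) = \sum_{v \in W} U_v(\theta) - \CharF[t \in W] Z(\theta),
\]
and rearranging yields exactly the claimed formula. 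The particular case follows by taking $W = V$: then $E(W) = E$ and both $\edgesEntering{W}$ and $\edgesLeaving{W}$ are empty, while $t \in W$, so we obtain $\edgeLoad(\theta) = \sum_{v \in V\setminus\set{t}} U_v(\theta) - Z(\theta)$.

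I do not foresee any serious obstacle here; the only point requiring care is the bookkeeping at the sink, which is absorbed by writing flow conservation at $t$ in the unified form above via the definition of $Z$. Everything else is linearity of the integral and a standard incidence-sum decomposition.
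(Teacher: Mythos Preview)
Your proof is correct and follows essentially the same approach as the paper: integrate the flow conservation constraints, sum over $v \in W$, and regroup the edge contributions according to whether each edge is interior to $W$ or on its boundary. The paper starts from the edge-load side and expands into node sums, whereas you start from the node identities and collect edges, but the computation is the same.
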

	
\begin{proof} 
	The proof is a direct computation - we only show the case $t \in U$ as the other one can be proven exactly the same way:
	\begin{align*}
	&\sum_{e \in E(U)}\edgeLoad[e](\theta) = \sum_{e \in E(U)}\left(F^+_e(\theta)-F^-_e(\theta)\right) \\
	&\quad= \sum_{v \in U}\left(\sum_{e \in \delta^+_v}F_e^+(\theta) - \sum_{e \in \delta^-_v}F_e^-(\theta)\right) - \sum_{e \in \delta^+_U}F_e^+(\theta) + \sum_{e \in \delta^-_U}F_e^-(\theta)\\
	&\quad= \sum_{v \in U\setminus\set{t}}\left(\sum_{e \in \delta^+_v}\int_0^{\theta}f_e^+(\zeta)d\zeta - \sum_{e \in \delta^-_v}\int_0^{\theta}f_e^-(\zeta)d\zeta\right) - Z(\theta) - \sum_{e \in \delta^+_U}F_e^+(\theta) + \sum_{e \in \delta^-_U}F_e^-(\theta)\\
	&\quad= \int_0^{\theta}\sum_{v \in U\setminus\set{t}}\left(\sum_{e \in \delta^+_v}f_e^+(\zeta) - \sum_{e \in \delta^-_v}f_e^-(\zeta)\right)d\zeta - Z(\theta) - \sum_{e \in \delta^+_U}F_e^+(\theta) + \sum_{e \in \delta^-_U}F_e^-(\theta)\\
	&\quad= \int_0^{\theta}\sum_{v \in V\setminus\set{t}}u_v(\zeta)d\zeta + \sum_{e \in \delta^-_U}F_e^-(\theta) - Z(\theta) - \sum_{e \in \delta^+_U}F_e^+(\theta) \qedhere
	\end{align*}
\end{proof}
The last part then implies that -- as one should expect -- after time $\theta_0$, the total flow volume in the network is non-increasing since the cumulative network inflow is constant after this point while the total flow volume having already reached the sink is non-decreasing.

\begin{cor}\label{cor:FlowVolumeReduction}
	Let $f$ be a feasible flow. Then for all $b \geq a \geq \theta_0$, we have
	$\edgeLoad(b) \leq \edgeLoad(a). $
	In particular, for $\hat{\theta} \geq \theta_0$ with $\edgeLoad(\hat{\theta}) = 0$, we have $\edgeLoad(\theta) = 0$ for all $\theta \geq \hat{\theta}$.
\end{cor}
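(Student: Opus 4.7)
The plan is to derive the corollary as an essentially immediate consequence of the last identity in \Cref{lem:GeIsTotalFlowInGraph}, namely
\[
\edgeLoad(\theta) \;=\; \sum_{v\in V\setminus\{t\}} U_v(\theta) \;-\; Z(\theta).
\]
Applied at two times $a \le b$ in $[\theta_0,\infty)$, this reduces the monotonicity claim to two independent observations about the two summands on the right-hand side.

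First I would observe that by the standing assumption on the inflow rates, $\mathrm{supp}(u_v)\subseteq[0,\theta_0]$ for each $v\neq t$, so for every $\theta\geq\theta_0$ the cumulative inflow $U_v(\theta) = \int_0^{\theta} u_v(\zeta)\,d\zeta$ equals the constant $U_v(\theta_0)$. Hence $\sum_{v\neq t} U_v(b) = \sum_{v\neq t} U_v(a)$, and the inequality $\edgeLoad(b)\leq \edgeLoad(a)$ is equivalent to $Z(b)\geq Z(a)$. Second, I would argue that $Z$ is non-decreasing on all of $\IR_{\geq 0}$: the flow conservation constraint at the sink, \Cref{eq:FeasibleFlow-FlowConservationSink}, gives $\sum_{e\in\edgesEntering{t}} f_e^-(\theta) - \sum_{e\in\edgesLeaving{t}} f_e^+(\theta) \geq 0$ for a.e.\ $\theta$, and integrating over $[a,b]$ yields exactly $Z(b)-Z(a) \geq 0$ by the definition \eqref{eq:DefZ} of $Z$. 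Combining these two points proves the main inequality.

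For the second part (the ``in particular'' clause) I would first note that $\edgeLoad(\theta) = \sum_e(F^+_e(\theta)-F^-_e(\theta)) \geq 0$: outflow from an edge cannot exceed past inflow, which follows from \Cref{eq:FeasibleFlow-NoOutflowBeforeTau} together with the queue-at-capacity rule \Cref{eq:FeasibleFlow-QueueOpAtCap} (equivalently from $q_e(\theta-\tau_e)\geq 0$). Given $\edgeLoad(\hat\theta)=0$ with $\hat\theta\geq\theta_0$ and the already established monotonicity on $[\theta_0,\infty)$, we get $0 \leq \edgeLoad(\theta) \leq \edgeLoad(\hat\theta) = 0$ for all $\theta\geq\hat\theta$, yielding $\edgeLoad(\theta)=0$ as required.

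I do not expect any real obstacle here: the only subtle point worth being explicit about is the non-negativity of $\edgeLoad$, which needs a brief appeal to the feasibility constraints rather than being tautological; everything else is a bookkeeping consequence of \Cref{lem:GeIsTotalFlowInGraph} and flow conservation at the sink.
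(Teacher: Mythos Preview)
Your proposal is correct and follows essentially the same approach as the paper: apply the identity $\edgeLoad(\theta)=\sum_{v\neq t}U_v(\theta)-Z(\theta)$ from \Cref{lem:GeIsTotalFlowInGraph}, use that each $U_v$ is constant after $\theta_0$, and use \cref{eq:FeasibleFlow-FlowConservationSink} to conclude that $Z$ is non-decreasing. The only addition is your explicit justification of $\edgeLoad\geq 0$ for the ``in particular'' clause, which the paper leaves implicit.
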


\begin{proof}
	The \namecref{cor:FlowVolumeReduction} follows directly from \Cref{lem:GeIsTotalFlowInGraph}, since $\sum_{e \in \delta^-_t}f^-_e(\theta) - \sum_{e \in \delta^+_t}f^+_e(\theta)$ is always non-negative by \cref{eq:FeasibleFlow-FlowConservationSink} and so $Z(\theta) = \sum_{e \in \delta^-_t}F^-_e(\theta) - \sum_{e \in \delta^+_t}F^+_e(\theta)$ is non-decreasing, while -- by the definition if $\theta_0$ -- all $U_v$ are constant after time $\theta_0$. 
\end{proof}

This motivates the following definition of the makespan of a feasible flow.
\begin{defn}\label{def:makespan}
	A feasible flow over time $f$ \emph{terminates} if it satisfies 
		\[\inf\set{\theta \geq \theta_0 | \edgeLoad(\theta)=0} < \infty,\] 
	that is if there exists some finite time after time $\theta_0$ such that the total flow volume in the network at that time is zero. We then call the smallest such time $\termTime(f) \coloneqq \inf\set{\theta \geq \theta_0 | \edgeLoad(\theta)=0}$ the \emph{makespan} of $f$ and say that $f$ \emph{terminates} by time $\termTime(f)$. \Cref{cor:FlowVolumeReduction} implies $\edgeLoad(\theta) = 0$ for all $\theta > \termTime(f)$. Note, that the makespan is also defined for non-terminating flows -- as the infimum over an empty set it is then infinity.
\end{defn}

For terminating flows over time we can use their makespan (i.e. the time the last particle arrives at the sink) as one measure of their quality. Alternatively, we can also look at the total travel times. For a single edge $e$ we can calculate the total travel time spend on this edge by $\int_0^\infty f^+_e(\theta)c_e(\theta)d\theta$, where $c_e(\theta) \coloneqq \tau_e + \frac{q_e(\theta)}{\nu_e}$ denotes the \emph{current} or \emph{instantaneous travel time} at time $\theta$ over edge $e$. 
Thus, we get the total travel time spend on an edge $e$ by taking for every time $\theta$ the product of the current inflow rate into this edge and the travel time experienced by all particles entering at this time and then ``summing'' over all times. The sum over all edges then gives us the total travel times in the whole network.
\begin{defn}\label{def:totalTravelTime}
	Let $f$ be a terminating feasible flow. Then we define the \emph{total travel times for $f$} by
		\[\sumOfTraveltimes(f) \coloneqq \sum_{e \in E}\int_0^\infty c_e(\theta)f^+_e(\theta) d\theta.\]
\end{defn}

The following lemma gives two alternative ways to define the same value. The latter, in particular, will be useful in the rest of the paper and has the following intuitive interpretation (see \cite[Section 5]{PredictionEquilibrium2021}): For every time $\theta$ we take the total flow volume in the network $F^{\Delta}$ at that time and then integrate these values for all times.

\begin{lemma}\label{lemma:SumOfTraveltimesAltDef}\label{lemma:SumOfTravelTimesDefWithEdgeLoad}
	For any terminating feasible flow we have
		\[\sumOfTraveltimes(f) = \int_{0}^{\termTime(f)}\theta\cdot Z'(\theta)d\theta - \sum_{v \in V\setminus\{t\}}\int_{0}^{\theta_0} \theta \cdot u_v(\theta)d\theta = \int_0^{\termTime(f)} F^\Delta(\theta)d\theta.\]
\end{lemma}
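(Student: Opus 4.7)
I would prove the two equalities in sequence: first the right-most one, which follows directly from \Cref{lem:GeIsTotalFlowInGraph} and integration by parts, and then the left-most one, which reduces to a per-edge identity provable via integration by parts and the FIFO property of the queues.

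For the right-most equality, applying \Cref{lem:GeIsTotalFlowInGraph} with $W=V$ gives $F^\Delta(\theta)=\sum_{v\in V\setminus\{t\}}U_v(\theta)-Z(\theta)$. Integrating from $0$ to $\termTime(f)$ and applying integration by parts to each summand (using $U_v'=u_v$, $Z'=Z'$, and that $u_v$ is supported in $[0,\theta_0]$) yields
\[\int_0^{\termTime(f)} F^\Delta(\theta)\,d\theta = \termTime(f)\Bigl(\sum_{v\in V\setminus\{t\}}U_v(\termTime(f))-Z(\termTime(f))\Bigr) - \sum_{v\in V\setminus\{t\}}\int_0^{\theta_0}\theta\, u_v(\theta)\,d\theta + \int_0^{\termTime(f)}\theta\, Z'(\theta)\,d\theta.\]
Since $f$ terminates at $\termTime(f)$ we have $F^\Delta(\termTime(f))=0$, so the bracketed boundary term vanishes by \Cref{lem:GeIsTotalFlowInGraph}, and the remaining two terms are exactly the stated expression.

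For the left-most equality, I would prove the edge-wise identity $\int_0^\infty c_e(\theta)f_e^+(\theta)\,d\theta = \int_0^\infty F^\Delta_e(\theta)\,d\theta$ and then sum over $e\in E$. Starting from the right-hand side and using $F^\Delta_e=F_e^+-F_e^-$, pick $T$ large enough that all flow has reached $t$ (which exists by termination and \Cref{cor:FlowVolumeReduction}) so that in particular $F_e^+(T)=F_e^-(T)$. Integration by parts on each of $F_e^+$ and $F_e^-$ then makes the $T$-boundary terms cancel, leaving
\[\int_0^\infty F^\Delta_e(\theta)\,d\theta = \int_0^\infty \theta\bigl(f_e^-(\theta)-f_e^+(\theta)\bigr)\,d\theta.\]
The FIFO property forced by \cref{eq:FeasibleFlow-QueueOpAtCap} means a particle entering edge $e$ at time $\theta$ exits at time $\theta+c_e(\theta)$, i.e. $F_e^-(\theta+c_e(\theta))=F_e^+(\theta)$. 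Hence the substitution $u=\theta+c_e(\theta)$ in the first integral turns it into $\int_0^\infty (\theta+c_e(\theta)) f_e^+(\theta)\,d\theta$, and after cancellation with the second integral we are left with $\int_0^\infty c_e(\theta) f_e^+(\theta)\,d\theta$. Summing over all edges and using \Cref{cor:FlowVolumeReduction} to extend the time-domain integration to $[0,\termTime(f)]$ gives the first equality.

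The one step requiring care is the FIFO change-of-variable: the map $\theta\mapsto \theta+c_e(\theta)$ is only absolutely continuous (not smooth) in general, so I would justify it through the cumulative curves $F_e^\pm$ and Lebesgue--Stieltjes integration rather than via classical calculus; once that is in place, everything else is a routine integration by parts on monotone cumulative functions. This is the main technical obstacle; the rest is bookkeeping on boundary terms, all of which cancel thanks to termination.
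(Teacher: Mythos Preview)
Your proposal is correct and covers both equalities soundly; the order in which you prove them (right-most first, then left-most via the per-edge identity $\int c_e f_e^+ = \int F^\Delta_e$) is a perfectly good logical arrangement.

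The route differs from the paper's in how the per-edge identity is obtained. The paper also reduces to an edge-wise statement, namely $\int c_e(\theta)f_e^+(\theta)\,d\theta = \int (\theta+\tau_e)f_e^-(\theta+\tau_e)\,d\theta - \int \theta f_e^+(\theta)\,d\theta$, but it derives this by a chain of explicit integrations by parts through the queue length $q_e$, using the formula for $q_e'$ and the vanishing of $\int q_e q_e'$ between times with empty queue; it then telescopes via flow conservation at nodes to reach the middle expression directly. You instead reach the same intermediate expression in one stroke via the FIFO change of variables $u=\theta+c_e(\theta)$ (using $F_e^-(\theta+c_e(\theta))=F_e^+(\theta)$), and then conclude left $=$ right, combining with your already-proven middle $=$ right. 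Your path is conceptually cleaner and shorter, at the cost of the measure-theoretic subtlety you correctly flag: the substitution must be justified as a Lebesgue--Stieltjes change of variables for the monotone but not necessarily injective map $\theta\mapsto\theta+c_e(\theta)$ (noting that on any flat piece one has $f_e^+=0$, so $dF_e^+$ puts no mass there). The paper's longer computation trades that subtlety for more elementary calculus with the explicit queue dynamics.
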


The proof of this lemma is a somewhat technical but mostly straight forward calculation (see \cref{appendix} for the details). The main idea is to first show that for each individual edge $e$ we have 
	\[\int_0^{\termTime(f)}c_e(\theta)f_e^+(\theta)d\theta = \int_0^{\termTime(f)}(\theta+\tau_e)f_e^-(\theta+\tau_e) + \theta f_e^+(\theta)d\theta.\]
From this the lemma's first equality follows by a telescope sum argument and flow conservation at the nodes. The second equality can then be shown using integration by parts.

\begin{remark}
	In \cite{BhaskarFA15} \citeauthor{BhaskarFA15} define another closely related objective called the \emph{total delay} of a flow by
		\[D(f) \coloneqq \int_0^{\termTime(f)}\theta Z'(\theta).\]
	Intuitively, this is just the sum of all particles' arrival times (in contrast to the total travel time which is the sum of all particles' \emph{travel} times). They then show (\cite[Theorem 10]{BhaskarFA15}) that for dynamic equilibria the total delay price of anarchy is at most twice the makespan price of anarchy. Since their proof of this result in fact holds for any feasible flow (not just dynamic equilibria) it also applies to IDE. Thus, any bound on the makespan PoA for IDE also immediately gives us an asymptotically equal bound on the total delay PoA for IDE.
\end{remark}

\paragraph{IDE Flows and their PoA.}

Following \cite{GHS18} we define an IDE flow as a feasible flow with the property that whenever a particle arrives at a node $v \neq t$, it can only enter an edge that is the first edge on a currently shortest $v$-$t$ path, i.e. a shortest path with respect to the instantaneous travel time $c_e(\theta)$.
In order to make this more formal we use time dependent node labels $\ell_{v}(\theta)$ corresponding to current shortest path distances from $v$ to the sink $t$. For $v\in V$ and $\theta\in \R_{\geq 0}$, define
\begin{equation}
	\ell_{v}(\theta)\coloneqq
	\begin{cases} 
		0, & \text{ for } v=t\\
		\min\limits_{e=vw\in E} \{\ell_{w}(\theta)+c_{e}(\theta)\}, & \text{ else.}
	\end{cases}
\end{equation}
We say that an edge $e=vw$ is \emph{active} at time $\theta$, if $ \ell_{v}(\theta) = \ell_{w}(\theta)+c_{e}(\theta)$ and we denote the set of active edges by $E_\theta\subseteq E$. We call a $v$-$t$ path $P$ an \emph{active $v$-$t$ path at time $\theta$}, if all edges of $P$ are active for $i$ at $\theta$ or, equivalently, $\sum_{e \in P}c_e(\theta) = \ell_{v}(\theta)$. For differentiation we call paths that are minimal with respect to the transit times $\tau$ \emph{physical shortest paths}.

\begin{defn}\label{Def:IDE}
	A feasible flow over time $f$ is an \emph{instantaneous dynamic equilibrium (IDE)}, if for all  $\theta\in \R_{\geq 0}$ and $e\in E$ it satisfies 
	\begin{align}\label[cons]{eq:DefIDE-OnlyUseSP}
	f_{e}^+(\theta)>0 \Rightarrow e\in E_\theta.
	\end{align}
\end{defn}

Since in an IDE flow particles act selfishly and without cooperation we should expect that IDE flows will not be optimal with regard to objectives like makespan or total travel time. To quantify this differences we will use the price of anarchy, which we define as follows: For any instance $\mathcal{N} = (G,(\nu_e)_{e \in E},(\tau_e)_{e \in E},(u_v)_{v \in V\setminus\set{t}}, t)$ we define the worst case makespan of an IDE flow in $\mathcal{N}$ as
	\[\termTime[IDE](\mathcal{N}) \coloneqq \sup\set{\termTime(f) | f \text{ an IDE flow in } \mathcal{N}}\]
and the optimal makespan in $\mathcal{N}$ as
	\[\termTime[OPT](\mathcal{N}) \coloneqq \inf\set{\termTime(f) | f \text{ a feasible flow in } \mathcal{N}}.\]
In the same way we define the worst case total travel time as
	\[\sumOfTraveltimes[IDE](\mathcal{N}) \coloneqq \sup\set{\sumOfTraveltimes(f) | f \text{ an IDE flow in } \mathcal{N}}\]
and the optimal total travel times as
	\[\sumOfTraveltimes[OPT](\mathcal{N}) \coloneqq \inf\set{\sumOfTraveltimes(f) | f \text{ a feasible flow in } \mathcal{N}}.\]
	
Note that we use supremum and infemum here instead of maximum and minimum as the set of all IDE flows/of all feasible flows is not necessarily compact as a subset of the appropriate function space. Thus, it is a priori not clear whether a maximum or minimum is attained. However, since we are only interested in a bound on the worst case ratio of worst IDE to optimal flow, this is also not important for the further analysis. To define this bound we again take the supremum over all ratios of worst case IDE to optimal feasible flow over the set of all instances with the same total flow volume and total edge length. This then gives us the parametrized price of anarchy, which -- in other words -- is just a tight bound on how much worse an IDE can be with regards to the given objective compared to an optimal flow.

\begin{defn}\label{def:PoA}
	For any pair of whole numbers $(U,\tau)$ we define the \emph{makespan/total travel times Price of Anarchy} (\MPoA/\TPoA) for instances with total flow volume $U$ and total edge length $\tau$ as
		\[\MPoA(U,\tau) \coloneqq \sup\Set{\frac{\termTime[IDE](\mathcal{N})}{\termTime[OPT](\mathcal{N})} | \mathcal{N} \text{ an instance with } \!\!\!\sum_{v \in V\setminus\set{t}}U_v(\theta_0) = U, \sum_{e \in E}\tau_e = \tau}\]		
	and
		\[\TPoA(U,\tau) \coloneqq \sup\Set{\frac{\sumOfTraveltimes[IDE](\mathcal{N})}{\sumOfTraveltimes[OPT](\mathcal{N})} | \mathcal{N} \text{ an instance with } \!\!\!\sum_{v \in V\setminus\set{t}}U_v(\theta_0) = U, \sum_{e \in E}\tau_e = \tau},\]		
	respectively.
\end{defn}

\begin{remark}
	At first it might seem strange that the $\PoA$ depends only on $U$ and $\tau$ while being independent of the capacities $\nu_e$. However, this is only the case here because we always assume that all capacities are at least $1$ throughout this paper. In order to transfer our results to networks with arbitrary capacities one has to rescale the network and, in particular, replace $U$ by $\frac{1}{\nu_{\min}}U$, where $\nu_{\min} \coloneqq \min\set{\nu_e | e \in E}$.
\end{remark}

%!TEX root = ../article-PoA.tex

\section{Upper Bounds}\label{sec:UpperBounds}

In this section we will show upper bounds for the makespan as well as the total travel times for IDE flows in terms of $\tau(G) \coloneqq \sum_{e \in E}\tau_e$ and $U \coloneqq \sum_{v \in V\setminus\set{t}}\int_{0}^{\theta_0}u_v(\theta)d\theta$. From this we can then derive upper bounds for the respective $\PoA$s.
Before we turn to the termination results, however, we need an additional technical lemma that is a strengthening of \cite[Lemma 4.2]{GHS18} and gives an upper bound for the length of time a volume of flow can take before it leaves the edge.

\begin{lemma}\label{lemma:FlowOnEdgesLeavesFastVar}
	Let $f$ be a feasible flow over time, $\theta_1 \in \IR_{\geq 0}$, $e \in E$ and $0 \leq \lambda \leq \edgeLoad[e](\theta_1)$. Then the cumulative outflow from edge $e$ between times $\theta_1$ and $\theta_2 := \theta_1+\frac{\lambda}{\nu_e} + \tau_e$ will be at least $\lambda$, i.e.,
	\[ F^-_e(\theta_2) - F^-_e(\theta_1) \geq \lambda. \]
\end{lemma}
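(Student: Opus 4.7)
The plan is to decompose the edge load at time $\theta_1$ into the queue content and the flow currently in physical transit along the edge. From $q_e(\theta) = F^+_e(\theta) - F^-_e(\theta+\tau_e)$ and the definition of $\edgeLoad[e]$, a direct computation gives
\[
\edgeLoad[e](\theta_1) \;=\; q_e(\theta_1) \;+\; \bigl(F^-_e(\theta_1+\tau_e) - F^-_e(\theta_1)\bigr),
\]
so the assumption $\edgeLoad[e](\theta_1) \geq \lambda$ immediately yields the ``transit'' lower bound $F^-_e(\theta_1+\tau_e) - F^-_e(\theta_1) \geq \lambda - q_e(\theta_1)$.

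The central technical step is a claim that the queue cannot drain too fast: setting $\theta^\ast \coloneqq \inf\{\theta \geq \theta_1 : q_e(\theta) = 0\}$, I would prove $\theta^\ast \geq \theta_1 + q_e(\theta_1)/\nu_e$. Since $q_e$ is continuous as a difference of absolutely continuous cumulative flows, $q_e(\theta) > 0$ holds on $[\theta_1, \theta^\ast)$, and then \cref{eq:FeasibleFlow-QueueOpAtCap} forces $f^-_e(\theta+\tau_e) = \nu_e$ for every $\theta \in [\theta_1, \theta^\ast)$. Integrating $q'_e(\theta) = f^+_e(\theta) - f^-_e(\theta+\tau_e)$ from $\theta_1$ to $\theta^\ast$ and using $q_e(\theta^\ast) = 0$ together with monotonicity of $F^+_e$ gives
\[
\nu_e(\theta^\ast - \theta_1) \;=\; q_e(\theta_1) + F^+_e(\theta^\ast) - F^+_e(\theta_1) \;\geq\; q_e(\theta_1),
\]
which is the desired bound.

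With this queue-positivity statement in hand, I would split into two cases. If $\lambda \leq q_e(\theta_1)$, the queue remains positive throughout $[\theta_1,\theta_1+\lambda/\nu_e)$, so queues-at-capacity produces an outflow rate of $\nu_e$ on $[\theta_1+\tau_e,\theta_2]$ and hence at least $\lambda$ units of outflow during $[\theta_1,\theta_2]$. Otherwise $\lambda > q_e(\theta_1)$, and the queue stays positive on $[\theta_1,\theta_1+q_e(\theta_1)/\nu_e)$, delivering exactly $q_e(\theta_1)$ units of outflow over $[\theta_1+\tau_e,\theta_1+\tau_e+q_e(\theta_1)/\nu_e]$; combined with the transit contribution $F^-_e(\theta_1+\tau_e) - F^-_e(\theta_1) \geq \lambda - q_e(\theta_1)$ from the first step, this accumulates at least $\lambda$ units of outflow by time $\theta_1 + \tau_e + q_e(\theta_1)/\nu_e \leq \theta_2$.

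The main obstacle is not the inequality itself but the careful bookkeeping of the $\tau_e$ time shift, since \cref{eq:FeasibleFlow-QueueOpAtCap} relates the outflow rate at time $\theta+\tau_e$ to the queue state at time $\theta$. One must verify that outflow triggered by a queue positive on $[\theta_1,\theta_1+q_e(\theta_1)/\nu_e)$ really does arrive during $[\theta_1+\tau_e,\theta_1+\tau_e+q_e(\theta_1)/\nu_e]$, and that the transit part is not double-counted with the queue part. This is exactly why $\theta_2$ has the form $\theta_1 + \lambda/\nu_e + \tau_e$: the term $\lambda/\nu_e$ accounts for the time needed to drain $\lambda$ units of queue, and the extra $\tau_e$ for the subsequent physical transit.
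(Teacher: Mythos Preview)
Your proof is correct, but it takes a different route from the paper's. The paper argues by contradiction in a single stroke: assuming $F^-_e(\theta_2)-F^-_e(\theta_1)<\lambda$, one gets $F^-_e(\theta+\tau_e)\le F^-_e(\theta_2)<F^-_e(\theta_1)+\lambda\le F^+_e(\theta_1)\le F^+_e(\theta)$ for every $\theta\in[\theta_1,\theta_2-\tau_e]$, hence $q_e(\theta)>0$ on that whole interval; then \cref{eq:FeasibleFlow-QueueOpAtCap} forces $f^-_e(\theta+\tau_e)=\nu_e$ there, and integrating over $[\theta_1,\theta_2-\tau_e]$ already yields $F^-_e(\theta_2)-F^-_e(\theta_1)\ge(\theta_2-\tau_e-\theta_1)\nu_e=\lambda$, a contradiction. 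No decomposition, no $\theta^\ast$, no case split.

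Your approach instead separates the edge load into a queue part $q_e(\theta_1)$ and a transit part $F^-_e(\theta_1+\tau_e)-F^-_e(\theta_1)$, proves a standalone lower bound on the queue-drain time, and then combines the two contributions via a case distinction on $\lambda\lessgtr q_e(\theta_1)$. This is more constructive---it makes explicit which portion of the outflow during $[\theta_1,\theta_2]$ comes from particles already in transit versus those draining from the queue---and the disjointness of the two outflow intervals $[\theta_1,\theta_1+\tau_e]$ and $[\theta_1+\tau_e,\theta_1+\tau_e+q_e(\theta_1)/\nu_e]$ addresses the double-counting worry you raised. The paper's argument is shorter precisely because it never tries to attribute outflow to its source: the contradiction hypothesis directly forces the queue to stay positive on all of $[\theta_1,\theta_2-\tau_e]$, so the single integral does the job.
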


Intuitively, this lemma holds true because all particles which are on edge $e$ at time $\theta_1$ but have not left this edge by time $\theta_1 + \tau_e$ must have been in the queue of this edge at time $\theta_1$. And since queues operate at capacity (\cref{eq:FeasibleFlow-QueueOpAtCap}) these particles leave the queue and enter the edge itself at a rate of $\nu_e$ (leaving it $\tau_e$ time steps later).

\begin{proof}
	By way of contradiction we assume $F^-_e(\theta_2) - F^-_e(\theta_1) < \lambda \leq \edgeLoad[e](\theta_1) = F^+_e(\theta_1) - F^-_e(\theta_1)$. From this, we get $F^-_e(\theta_2) < F^+_e(\theta_1) \leq F^+_e(\theta_2-\tau_e)$ and, since $F^+_e$ and $F_e^-$ are non-decreasing, the same holds true for all $\theta \in [\theta_1+\tau_e,\theta_2]$ in the place of $\theta_2$. Thus, we have $q_e(\theta) = F^+_e(\theta) - F^-_e(\theta+\tau_e) > 0$ for all $\theta \in [\theta_1,\theta_2 - \tau_e]$ and, by \cref{eq:FeasibleFlow-QueueOpAtCap}, $f^-(\theta+\tau_e) = \nu_e$ for all such $\theta$.	
	But this leads to a contradiction as follows
	\begin{align*}
		F^-_e(\theta_2) - F^-_e(\theta_1) &\geq F^-_e(\theta_2) - F^-_e(\theta_1+\tau_e) = \int_{\theta_1+\tau_e}^{\theta_2} f^-_e(\theta)d\theta \\ 			&= \int_{\theta_1}^{\theta_2-\tau_e} f^-_e(\theta+\tau_e)d\theta = (\theta_2-\tau_e-\theta_1)\nu_e = \lambda.\qedhere
	\end{align*}
\end{proof}
We will now look at acyclic graphs and extend the above result to the whole graph, i.e. derive an upper bound on the time flow particles can take to reach the sink. This, in particular, then provides a general termination bound for \emph{all} feasible flows in acyclic graphs in terms of $U$ and $\tau(P_{\max})$, where the latter denotes the physical length of a longest $v$-$t$ path.

\begin{lemma}\label{lemma:TerminationAcyclicNetworksV2}
	In an acyclic network, every feasible flow over time $f$ satisfies
		\[Z(\theta) \geq Z(\zeta) + \min\set{\edgeLoad(\zeta),\theta-\zeta-\tPmax}\]
	for all times $\theta \geq \zeta \geq 0$, that is as long as there is some volume of flow in the network at some time $\zeta$ this volume arrives at the sink at most $\tPmax$ time steps later at a rate of at least $1$. In particular, any feasible flow in an acyclic network terminates before $\theta_0 + \tPmax + U$. 
\end{lemma}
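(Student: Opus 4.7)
The plan is to induct on the topological order of the vertices. Since the graph is acyclic and every vertex reaches the unique sink $t$, no edge can leave $t$, so we may order $V = \{v_1, \dots, v_n\}$ with $v_n = t$ such that every edge goes from a strictly smaller to a strictly larger index. For $k \in \{1, \dots, n\}$ define the suffix $S_k := \{v_k, \dots, v_n\}$, which contains $t$ and has no outgoing edges, and set $T_k := \max_{v \in S_k} \Delta_v$, where $\Delta_v$ denotes the $\tau$-length of the longest $v$-to-$t$ path (so $T_1 = \tPmax$). I would strengthen the target inequality to the following inductive claim: for every $k \in \{1, \dots, n\}$ and all $\theta \geq \zeta \geq 0$,
\[
Z(\theta) \;\geq\; Z(\zeta) + \min\Bigl\{\sum_{e \in E(S_k)} \edgeLoad[e](\zeta),\ \theta - \zeta - T_k\Bigr\}.
\]
Taking $k = 1$ (where $S_1 = V$ and $T_1 = \tPmax$) recovers the lemma.

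I would proceed by downward induction on $k$. The base case $k = n$ is immediate: $E(S_n) = \varnothing$ and $T_n = 0$, so the bound degenerates to $Z(\theta) \geq Z(\zeta)$, which holds since $Z$ is non-decreasing. For the inductive step from $k+1$ down to $k$, adjoining $v_k$ to $S_{k+1}$ adds exactly the out-edges of $v_k$ to the internal edge set (by the topological ordering, all heads of $\delta^+_{v_k}$ lie in $S_{k+1}$), hence $E(S_k) = E(S_{k+1}) \cup \delta^+_{v_k}$; the extra flow load is $\beta := \sum_{e \in \delta^+_{v_k}} \edgeLoad[e](\zeta)$ and the new delay budget is $T_k = \max(T_{k+1}, \Delta_{v_k})$. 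Applying \Cref{lemma:FlowOnEdgesLeavesFastVar} to each $e \in \delta^+_{v_k}$, and using $\nu_e \geq 1$, yields $F^-_e(\zeta + s) - F^-_e(\zeta) \geq \min\{\edgeLoad[e](\zeta), s - \tau_e\}$; summing over $e \in \delta^+_{v_k}$ and combining with \Cref{lem:GeIsTotalFlowInGraph} produces a lower bound on the flow load inside $E(S_{k+1})$ at a shifted reference time $\zeta + s$. Invoking the inductive hypothesis for $S_{k+1}$ at this shifted time, and using the identity $\Delta_{v_k} = \max_{e \in \delta^+_{v_k}}(\tau_e + \Delta_{\mathrm{head}(e)}) \leq \max_e \tau_e + T_{k+1}$ to charge the incurred transit delay against the additional budget $T_k - T_{k+1}$, transports the bound to $S_k$.

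I expect the main technical obstacle to be this delay-versus-volume balancing step: naively waiting long enough for the entire $\beta$ to cross into $S_{k+1}$ costs time $\max_e \tau_e + \beta$, which can easily exceed the slack $T_k - T_{k+1} \leq \max_e \tau_e$. The resolution is not a single shift but a continuous trade-off - pick the shift $s = s(\theta)$ as a function of the target time $\theta$ so that the two arguments of the combined $\min$ (inflow-into-$S_{k+1}$ and residual-drainage-within-$S_{k+1}$) simultaneously dominate the desired $\min\{F^\Delta_{E(S_k)}(\zeta), \theta - \zeta - T_k\}$; the edge-length identity $\tau_e + \Delta_{\mathrm{head}(e)} \leq \Delta_{v_k}$ supplies exactly the slack needed for both sides of the accounting to close.

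Finally, for the termination statement, apply the (now strengthened) inequality at $\zeta = \theta_0$: by \Cref{lem:GeIsTotalFlowInGraph} we have $\edgeLoad(\theta_0) = U - Z(\theta_0) \leq U$, so at $\theta = \theta_0 + \tPmax + U$ the bound gives $Z(\theta) - Z(\theta_0) \geq \edgeLoad(\theta_0)$, i.e. $Z(\theta) \geq U$. A second application of \Cref{lem:GeIsTotalFlowInGraph} then forces $\edgeLoad(\theta) = 0$, so the flow has terminated by time $\theta_0 + \tPmax + U$.
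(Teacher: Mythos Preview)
Your overall plan---downward induction with a suffix-set invariant---is sound in spirit, but the specific inductive hypothesis you chose (tracking the \emph{current load} $\sum_{e\in E(S_k)}F^\Delta_e(\zeta)$ inside a topological suffix) does not close in the step $k+1 \to k$, and your ``continuous trade-off'' resolution does not actually work. Consider the path $v_k \to w \to t$ with $\tau_e=\nu_e=1$ on both edges, and at time $\zeta$ put $10$ units of flow on $(v_k,w)$ and nothing on $(w,t)$. Here $T_{k+1}=1$, $T_k=2$, and the target is $Z(\theta)-Z(\zeta)\ge\min\{10,\theta-\zeta-2\}$. Your inductive hypothesis at $k{+}1$, applied at any shifted time $\zeta+s$, gives
\[
Z(\theta)\ge Z(\zeta+s)+\min\bigl\{F^\Delta_{(w,t)}(\zeta+s),\,\theta-\zeta-s-1\bigr\}.
\]
Using \Cref{lemma:FlowOnEdgesLeavesFastVar} on $(v_k,w)$ and \Cref{lem:GeIsTotalFlowInGraph} you get $F^\Delta_{(w,t)}(\zeta+s)+Z(\zeta+s)\ge Z(\zeta)+\min\{10,s-1\}$, so the best you can extract is $Z(\theta)-Z(\zeta)\ge\min\{\min\{10,s-1\},\,\theta-\zeta-s-1\}$. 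Optimising over $s$ balances at $s-1=\theta-\zeta-s-1$, yielding only $\tfrac{\theta-\zeta}{2}-1$, not $\theta-\zeta-2$. The slack $T_k-T_{k+1}=\tau_e$ pays for the transit delay but \emph{not} for the time the flow spends queued on $(v_k,w)$; since your hypothesis is a load-at-a-fixed-time statement, it cannot be chained across the continuous inflow into $S_{k+1}$ without this loss. (Using topological index rather than distance level compounds the problem: when $\Delta_{v_k}\le T_{k+1}$ you have zero slack but still edges to absorb.)

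The paper avoids this by making two different choices. First, it inducts on the \emph{pessimistic distance level} $k$ rather than topological index (after normalising all $\tau_e$ to $1$), so each step gains exactly one unit of time budget. Second---and this is the key idea you are missing---it tracks the \emph{cumulative} quantity $F_{\le k}(\theta)=\sum_{e\text{ crosses level }k}F^-_e(\theta)+\sum_{\tilde d_v\le k}U_v(\theta)$, which is monotone in $\theta$, and argues by contradiction at the first failure time $\bar\theta$: if $\tfrac{d}{d\theta}F_{\le k-1}(\bar\theta)<1$ then every crossing edge has $f^-_e(\bar\theta)<1\le\nu_e$, hence empty queue, hence $F^-_e(\bar\theta)=F^+_e(\bar\theta-1)$. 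Flow conservation then gives $F_{\le k-1}(\bar\theta)\ge F_{\le k}(\bar\theta-1)$, and the inductive bound at level $k$ and time $\bar\theta-1$ closes the contradiction. This derivative-plus-empty-queue linkage is what lets the rate-$1$ guarantee propagate through levels without the factor-of-two loss; your load-based hypothesis has no analogue of it.
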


\begin{proof}
	For this proof we will assume that all edge travel times are $1$. If that is not the case the network can be easily modified without changing the flow dynamics by replacing each edge $e$ with $\tau_e$ consecutive edges of travel time $1$ each. Furthermore, we assume that the sink $t$ has no outgoing edges since in an acyclic graph such edges can never be used by a feasible flow.
	
	To show the \namecref{lemma:TerminationAcyclicNetworksV2} we first need some notation: For every node $v \in V$ we define the \emph{pessimistic remaining distance}
		$\distP{v} \coloneqq \max\set{\abs{P} | P \text{ is a $v$-$t$ path}}$ 
	as the physical length of the longest $v$-$t$ path (cf. \Cref{fig:TerminationAcyclic}). Since our network is acyclic, $\distP{v}$ is well defined and every edge $e=(u,v)$ satisfies $\distP{u} > \distP{v}$. For any $k = 1,\dots, \tPmax$ we define
		$V_k \coloneqq \Set{v \in V \setminus \set{t} | \distP{v} = k}$ and $V_{\leq k} \coloneqq \Set{v \in V \setminus \set{t} | \distP{v} \leq k}$,
	the sets of all nodes with pessimistic distance equal to and at most $k$, respectively. Furthermore, we define the set $E^{>k}_{\leq k} \coloneqq \set{(u,v) \in E | \distP{u} > k \geq \distP{v}}$ of all edges crossing level $k$ as well as the subsets $E^{k+1}_{\leq k} \coloneqq \set{(u,v) \in E | \distP{u} = k+1, k \geq \distP{v}}$ and $E^{>k}_{k} \coloneqq \set{(u,v) \in E | \distP{u} > k = \distP{v}}$. Finally, for any $k = 0,1, \dots, \tPmax$ let	
		\[\pClosEL{k}(\theta) \coloneqq \sum_{e \in E^{>k}_{\leq k}}F^-_e(\theta) + \sum_{v \in V_{\leq k}}U_v(\theta)\]
	denote the total volume of flow that currently has a pessimistic remaining distance of less than $k$ (including all the flow volume already at the sink). 
	\begin{figure}[h]\centering
		\begin{adjustbox}{max width=.5\textwidth}
			\begin{tikzpicture}
	\node[color=gray]() at (-1,-2) {$\tilde{d}:$};
	\draw[color=gray,ultra thick, dashed] (0,-1.5) -- (0,1.5);	
	\node[color=gray]() at (0,-2) {$5$};
	\draw[color=gray,ultra thick, dashed] (2,-1.5) -- (2,1.5);	
	\node[color=gray]() at (2,-2) {$4$};
	\draw[color=gray,ultra thick, dashed] (4,-1.5) -- (4,1.5);	
	\node[color=gray]() at (4,-2) {$3$};
	\draw[color=gray,ultra thick, dashed] (6,-1.5) -- (6,1.5);	
	\node[color=gray]() at (6,-2) {$2$};
	\draw[color=gray,ultra thick, dashed] (8,-1.5) -- (8,1.5);	
	\node[color=gray]() at (8,-2) {$1$};
	\draw[color=gray,ultra thick, dashed] (10,-1.5) -- (10,1.5);	
	\node[color=gray]() at (10,-2) {$0$};

	\node[vertex,fill=white] (1) at (0,0) {};
	\node[vertex,fill=white] (2) at (2,1) {};
	\node[vertex,fill=white] (2') at (2,-1) {};
	\node[vertex,fill=white] (3) at (4,0) {};
	\node[vertex,fill=white] (4) at (6,0) {};
	\node[vertex,fill=white] (5) at (8,1) {};
	\node[vertex,fill=white] (5') at (8,-1) {};
	\node[namedVertex,fill=white] (t) at (10,0) {$t$};

	\draw[edge,ultra thick] (1)	-- (2);
	\draw[edge,ultra thick] (2)	-- (3);
	\draw[edge] (2')	-- (3);
	\draw[edge] (2')	-- (5');
	\draw[edge,ultra thick] (3)	-- (4);
	\draw[edge,ultra thick] (4)	-- (5');
	\draw[edge] (4)	-- (t);
	\draw[edge] (5)	-- (t);
	\draw[edge,ultra thick] (5')	-- (t);
\end{tikzpicture}
		\end{adjustbox}
		\caption{An example network with the nodes placed according to their pessimistic distance $\distP{v}$. The bold edges form the path $P_{\max}$ in this network. \Cref{lemma:TerminationAcyclicNetworksV2:claim:LBforPessFlow} states that as long as there is remaining flow volume to the left of any level $k$ this flow volume crosses this level at most $\tPmax-k$ time steps later at a rate of at least $1$.}\label{fig:TerminationAcyclic}
	\end{figure}
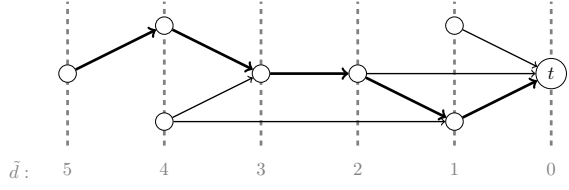
	The following claim will then immediately imply the \namecref{lemma:TerminationAcyclicNetworksV2}.
	\begin{claim}\label{lemma:TerminationAcyclicNetworksV2:claim:LBforPessFlow}
		For any $k=0,1, \dots, \tPmax$ and all times $\theta \geq \zeta \geq 0$ we have
		\begin{align}\label{eq:TermInAcyclicNetworkFlowVolumeCloseToSink}
			\pClosEL{k}(\theta) \geq M(\theta,\zeta,k) \coloneqq \min\set{\pClosEL{\tPmax}(\zeta), \pClosEL{k}(\zeta) + \theta-\zeta-\tPmax + k}
		\end{align}
	\end{claim}
	
	\begin{proofClaim}[Proof of \Cref{lemma:TerminationAcyclicNetworksV2:claim:LBforPessFlow}]
		First note, that it suffices to show this inequality for all $\theta \in [\zeta + \tPmax - k,\zeta + \tPmax - k + \pClosEL{\tPmax}(\zeta) - \pClosEL{k}(\zeta)]$. For any time $\theta$ before that interval we have $M(\theta,\zeta,k) = \pClosEL{k}(\zeta)$, which is at most $\pClosEL{k}(\theta)$ since $\theta \geq \zeta$ and $\pClosEL{k}$ is non-decreasing. After the interval we have $M(\theta,\zeta,k) = \pClosEL{\tPmax}(\zeta)$ and, thus, if \eqref{eq:TermInAcyclicNetworkFlowVolumeCloseToSink} holds at the end of the interval it holds forever after that (again because $\pClosEL{k}$ is non-increasing).
		
		For $\theta \in [\zeta + \tPmax - k,\zeta + \tPmax - k + \pClosEL{\tPmax}(\zeta) - \pClosEL{k}(\zeta)]$ we will now show \eqref{eq:TermInAcyclicNetworkFlowVolumeCloseToSink} by downward induction on $k$ starting with $k = \tPmax$.
		
		\begin{description}
			\item[$\bm{k = \tPmax}$:] Here \eqref{eq:TermInAcyclicNetworkFlowVolumeCloseToSink} holds trivially as $\pClosEL{\tPmax}$ is non-decreasing and, thus, the inequality $\pClosEL{\tPmax}(\theta) \geq \pClosEL{\tPmax}(\zeta) \geq M(\theta,\zeta,\tPmax)$ is true for all $\theta \geq \zeta$.
			\item[$\bm{k \to k-1}$:] We assume that \eqref{eq:TermInAcyclicNetworkFlowVolumeCloseToSink} holds for some fixed $k$ and all $\theta \geq \zeta$. By way of contradiction assume that \eqref{eq:TermInAcyclicNetworkFlowVolumeCloseToSink} does \emph{not} hold for $k-1$, i.e. there exists some times $\zeta \in \IR_{\geq 0}$ and $\theta \in [\zeta + \tPmax - (k-1),\zeta + \tPmax - (k-1) + \pClosEL{\tPmax}(\zeta) - \pClosEL{k-1}(\zeta)]$ with
				\[\pClosEL{k-1}(\theta) < M(\theta,\zeta,k-1).\]
			We define 
				\[\bar{\theta} \coloneqq \inf\set{\theta \geq \zeta + \tPmax - (k-1) | \pClosEL{k-1}(\theta) < M(\theta,\zeta,k-1)}.\]
			Since \eqref{eq:TermInAcyclicNetworkFlowVolumeCloseToSink} holds for $\theta = \zeta + \tPmax - (k-1)$ and both sides of the inequality are differentiable on the relevant interval, this implies that here exists some $\varepsilon > 0$ such that for all $\theta \in (\bar{\theta},\bar{\theta}+\varepsilon)$ we have
				\[\sum_{e \in E^{>k-1}_{\leq k-1}}f^-_e(\theta) = \frac{\partial}{\partial \theta}\pClosEL{k-1}(\theta) < \frac{\partial}{\partial \theta}M(\theta,\zeta,k-1) = 1.\]
			Thus, for all $e \in E^{>k-1}_{\leq k-1}$ and any $\theta \in (\bar{\theta},\bar{\theta}+\varepsilon)$ we have $f^-_e(\theta) < 1 \leq \nu_e$. By \cref{eq:FeasibleFlow-QueueOpAtCap} (queues operate at capacity) we get 
				\begin{align}\label{eq:lemma:TerminationAcyclicNetworksV2:claim:LBforPessFlow}
					F^+_e(\theta-1) - F^-_e(\theta) = F^+_e(\theta-\tau_e) - F^-_e(\theta) = q_e(\theta-\tau_e) = 0,
				\end{align}
			which in turn implies
				\begin{align*}
					\pClosEL{k-1}(\theta) &= \sum_{\mathclap{e \in E^{>k}_{\leq k-1}}}F^-_e(\theta) + \sum_{\mathclap{e \in E^{k}_{\leq k-1}}}F^-_e(\theta) + \sum_{\mathclap{v \in V_{\leq k-1}}}U_v(\theta) \\
					&\overset{\text{\eqref{eq:lemma:TerminationAcyclicNetworksV2:claim:LBforPessFlow}}}{=} \sum_{\mathclap{e \in E^{>k}_{\leq k-1}}}F^-_e(\theta) + \sum_{\mathclap{e \in E^{k}_{\leq k-1}}}F^+_e(\theta-1) + \sum_{v \in V_{\leq k-1}}U_v(\theta) \\
					&\overset{\text{\eqref{eq:FeasibleFlow-FlowConservation}}}{=} \sum_{\mathclap{e \in E^{>k}_{\leq k-1}}}F^-_e(\theta) + \sum_{\mathclap{e \in E^{>k}_{k}}}F^-_e(\theta-1) + \sum_{\mathclap{v \in V_k}}U_v(\theta-1) + \sum_{\mathclap{v \in V_{\leq k-1}}}U_v(\theta) \\
					&\geq \sum_{\mathclap{e \in E^{>k}_{\leq k-1}}}F^-_e(\theta-1) + \sum_{\mathclap{e \in E^{>k}_{k}}}F^-_e(\theta-1) + \sum_{\mathclap{v \in V_k}}U_v(\theta-1) + \sum_{\mathclap{v \in V_{\leq k-1}}}U_v(\theta-1) \\
					&=\pClosEL{k}(\theta-1) \geq M(\theta-1,\zeta,k) = M(\theta,\zeta,k-1),
				\end{align*}
			for all $\theta \in (\bar{\theta},\bar{\theta}+\varepsilon)$, where the last inequality holds by induction. But this is a contradiction to the definition of $\bar{\theta}$. \qedhere
		\end{description}
	\end{proofClaim}
	The first part of the \namecref{lemma:TerminationAcyclicNetworksV2} now follows from \Cref{lemma:TerminationAcyclicNetworksV2:claim:LBforPessFlow} by choosing $k=0$:
		\begin{align*}
			Z(\theta) 	&= \pClosEL{0}(\theta) \overset{\text{\Cref{lemma:TerminationAcyclicNetworksV2:claim:LBforPessFlow}}}{\geq} \min\set{\pClosEL{\tPmax}(\zeta), \pClosEL{0}(\zeta) + \theta-\zeta-\tPmax} \\
						&= \min\set{\edgeLoad(\zeta)+Z(\zeta), Z(\zeta) + \theta-\zeta-\tPmax} \\
						&= Z(\zeta) + \min\set{\edgeLoad(\zeta), \theta-\zeta-\tPmax}.
		\end{align*}
	For the second part we additionally choose $\zeta=\theta_0$ and $\theta=\theta_0+\tPmax+U$ leading to
		\begin{align*}
			Z(\theta) 	&\geq Z(\theta_0) + \min\set{\edgeLoad(\theta_0), U} \geq \min\set{Z(\theta_0) + \edgeLoad(\theta_0), U} = \min\set{U, U} = U,
		\end{align*}
	which immediately implies 
		\begin{align*}
			\edgeLoad(\theta) = U(\theta) - Z(\theta) \leq U - U = 0,
		\end{align*}
	i.e. $f$ has terminated by time $\theta = \theta_0+\tPmax+U$.
\end{proof}

It turns out, the same termination bound also holds for multi-commodity flows in acyclic networks (i.e. flows with multiple sinks).
	
\begin{lemma}\label{lemma:TerminationAcyclicNetworksMulti}
	In an acyclic network, every feasible multi-commodity flow over time $f$ satisfies
		\[Z(\theta) \geq Z(\zeta) + \min\set{\edgeLoad(\zeta),\theta-\zeta-\tPmax}\]
	for all times $\theta \geq \zeta \geq 0$, where $Z$ is aggregated over all sinks. In particular, any such flow terminates before $\theta_0 + \tPmax + U$. 
\end{lemma}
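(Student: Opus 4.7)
My plan is to adapt the proof of \Cref{lemma:TerminationAcyclicNetworksV2} essentially verbatim, letting the set of sinks $T \coloneqq \set{t_1,\dots,t_k}$ collectively play the role that the single sink $t$ did before. As in that proof I first reduce to the case $\tau_e = 1$ via subdivision, discard any node that cannot reach some sink, and assume (without loss of generality in an acyclic network, e.g.\ by attaching a fresh pendant vertex to each sink) that no sink has outgoing edges. The pessimistic remaining distance is then redefined by
\[\distP{v} \coloneqq \max\set{\abs{P} \mid P \text{ is a $v$-$t'$ path for some } t' \in T},\]
so $\distP{v}=0$ precisely for $v \in T$ and $\distP{v}\geq 1$ otherwise; acyclicity still forces $\distP{u} > \distP{v}$ along every edge $(u,v)$. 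With this modification the level sets $V_k$, $V_{\leq k}\coloneqq \set{v \in V\setminus T\mid \distP{v}\leq k}$ and the edge partitions $E^{>k}_{\leq k}$, $E^{k+1}_{\leq k}$, $E^{>k}_k$ are defined exactly as before, and we work throughout with the aggregated quantities $u_v \coloneqq \sum_i u_{v,i}$, $f^\pm_e \coloneqq \sum_i f^{\pm,i}_e$ and $U_v \coloneqq \sum_i U_{v,i}$. Under the assumption that sinks have no outgoing edges, the choice $k=0$ still gives $\pClosEL{0}(\theta) = \sum_{e\in \bigcup_i\delta^-_{t_i}} F^-_e(\theta) = Z(\theta)$, where $Z$ is the aggregated version of \eqref{eq:DefZ}.

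One then re-proves the analogue of \Cref{lemma:TerminationAcyclicNetworksV2:claim:LBforPessFlow} by exactly the same downward induction on $k$. Only two properties of $f$ enter the induction step: (a) the ``queues operate at capacity'' identity \cref{eq:FeasibleFlow-QueueOpAtCap}, which is a statement about aggregated edge flow and is therefore untouched by the presence of multiple commodities; and (b) flow conservation at each $v \in V_k$ for $k\geq 1$. Every such $v$ lies outside $T$ and is hence a non-sink for \emph{every} commodity, so \cref{eq:FeasibleFlow-FlowConservation} applies per commodity and summing recovers the aggregate identity $u_v(\theta) = \sum_{e\in\delta^+_v}f^+_e(\theta) - \sum_{e\in\delta^-_v}f^-_e(\theta)$. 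The bookkeeping identities $\delta^+_{V_k} = E^{k}_{\leq k-1}$ and $\delta^-_{V_k} = E^{>k}_k$ rely only on the acyclic dominance $\distP{u}>\distP{v}$ along edges and on $V_k \cap T = \emptyset$ for $k\geq 1$, both of which still hold, so the chain of (in)equalities in the induction step carries over literally.

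Specialising the claim to $k=0$ then yields the first conclusion of the lemma, and substituting $\zeta = \theta_0$ together with $\theta = \theta_0 + \tPmax + U$ gives $Z(\theta)\geq U$, hence $\edgeLoad(\theta) = 0$ by the aggregated form of \Cref{lem:GeIsTotalFlowInGraph}, proving termination by time $\theta_0+\tPmax+U$. There is no substantively new technical obstacle here — the only point worth checking is that the original argument never exploited single-sinkness beyond (i) the definition of $\distP{v}$ and (ii) the sink being the unique node where flow is absorbed, both of which are handled cleanly by replacing $\set{t}$ with $T$. Thus the extension to multi-commodity is a clean drop-in generalisation rather than a genuinely new argument.
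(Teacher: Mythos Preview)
Your proposal is correct and follows essentially the same route as the paper: aggregate the commodity flows, redefine the pessimistic distance over all sinks, and restore the identity $\pClosEL{0}=Z$ by hanging a fresh pendant sink off any sink that still has outgoing edges, after which the induction of \Cref{lemma:TerminationAcyclicNetworksV2} goes through verbatim. The only cosmetic difference is that the paper attaches a pendant only to sinks with $\distP{}>0$ (so that $\tPmax$ is provably unchanged), whereas you phrase it as attaching one to every sink; since sinks with $\distP{}=0$ already have no outgoing edges, this comes to the same thing.
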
	

\begin{proof}
	The proof is essentially the same as for the single-commodity case since the flow rates of a multi-commodity flow aggregated over all commodities satisfy the feasibility constraints of a single commodity flow (see \cite{GHS18} for a formal definition of multi-commodity flows). We only have to adjust the definition of the pessimistic remaining distance by setting 
		$$\distP{v} \coloneqq \max\set{\abs{P} | P \text{ is a $v$-$t$ path for some sink } t \in V}.$$
	Furthermore, to ensure that we still have $Z(\theta) = F_{\leq 0}(\theta)$ for all times $\theta$, we add a new sink node for every sink node which currently does not have a pessimistic distance of $0$. We then connect the old sink node (which does not act as a sink any more) to the new one by an edge of infinite capacity and physical travel time of $1$. Note, that this does not change the length of the longest path towards a sink in the network and only increases the termination time. Thus, a bound on the makespan in the new network is also a bound on the makespan in the old network.
\end{proof}

\ifsubm
\begin{remark}
	Shortly after this paper's initial conference version \citeauthor*{AtomicBound2021} provided in \cite[Theorem 2]{AtomicBound2021} a matching termination bound for the atomic routing model, i.e. the discretized version of flows over time, using a completely different proof approach. Since \citeauthor*{Convergence2021} have shown in \cite[Theorem 5.5]{Convergence2021} that any flow over time can be considered as limit of finer and finer discretizations (for which then the result from \cite{AtomicBound2021} applies), this yields an alternative proof for the second part of \Cref{lemma:TerminationAcyclicNetworksMulti}. 
\end{remark}
\else
\begin{remark}
	The bound on the termination time for flows over time in acyclic networks given in this lemma is exactly the same as the one found by \citeauthor*{AtomicBound2021} in \cite[Theorem 2]{AtomicBound2021} for the atomic routing model, i.e. the discretized version of flows over time. Since \citeauthor*{Convergence2021} have shown in \cite[Theorem 5.5]{Convergence2021} that any flow over time can be considered as limit of finer and finer discretizations (for which then the result from \cite{AtomicBound2021} applies), this yields an alternative proof for the second part of \Cref{lemma:TerminationAcyclicNetworksV2}. Since both results actually hold for multi-source multi-sink settings, this even shows a slightly more general result -- namely, that in acyclic networks the bound of $\theta_0 + \tPmax + U$ holds even in the multi-commodity setting.
\end{remark}

\begin{proof}[Proof sketch.]
	First note that we can assume without loss of generality that all edge travel times and capacities are $1$. This can be achieved by replacing every edge $e$ by $\nu_e$ parallel paths each consisting of $\tau_e$ edges with capacity and travel time equal to $1$. Then every feasible flow in the original network can be translated into a feasible flow in the transformed network such that the arrival times of all particles at all nodes (and in particular at its sink) remain the same.
	
	Now, let $f$ be any such feasible flow. Then, by \cite[Theorem 5.5]{Convergence2021}, there exists a sequence $(f_n)_{n \in \INs}$ of atomic packet routings such that in packet routing $f_n$ the time steps have length $\frac{1}{n}$ and the packets a size of $\frac{1}{n^2}$. Furthermore for any particle of flow $f$ the arrival time of its corresponding packet in the packet routings at any node converges to its arrival time in the flow $f$. In particular, denoting by $\termTime(f_n)$ the time the last packet arrives at it sink in $f_n$, we have $\lim_{n \to \infty} \termTime(f_n) = \termTime(f)$. 

	Thus, it suffices to give a bound on the $\termTime(f_n)$. So, let $n$ be fixed. Then, by again rescaling the network and the flow, we can obtain a routing with unit sized packets and unit time steps. The edge capacities are then $n$ and by further subdividing edges into paths we can also guarantee that all of them have a travel time of $1$. Furthermore, by adding for every packet a path of lengths $n\xi$ ending with its source node, where $\xi$ is the time it enters the network in $f_n$, and defining the start of this path as its new source, we can also ensure that all packets are enter the network simultaneously at time $0$. Now, take the last packet arriving at the sink, then by \cite[Theorem 2]{AtomicBound2021} we know that its arrival time is at most $L + \floor{\frac{\floor{n^2U}-1}{n}}$, where $L$ is the length of the longest possible path this particle can take, i.e. we have $L \leq n\theta_0 + n\tPmax$. Thus, we get
		\begin{align*}
			\termTime(f) 
				&= \lim_{n \to \infty}\termTime(f_n) \leq \lim_{n \to \infty}\left(L + \floor{\frac{\floor{n^2U}-1}{n}}\right) \\
				&\leq \theta_0 + \tPmax + \lim_{n \to \infty}\frac{1}{n}\floor{\frac{\floor{n^2U}-1}{n}} = \theta_0 + \tPmax + U.\qedhere
		\end{align*}
\end{proof}
\fi

Similarly to the proof of termination in \cite[Theorem 4.6]{GHS18} we will apply our result for feasible flows in acyclic graphs to IDE flows in general graphs by using the fact (\cite[Lemma 4.4]{GHS18}) that, whenever the total flow volume in a subgraph is small enough, only the physically shortest paths in this subgraph can be active. Since these edges form an acyclic subgraph, for an IDE flow we can then apply \Cref{lemma:TerminationAcyclicNetworksV2}. 	

For the following proof we will look at a particular type of subgraph, which we will call a sink-like subgraph: a subgraph containing all physically shortest paths from its nodes towards the sink, with a sufficiently low flow volume at the beginning of some interval as well as a low inflow into this subgraph over the course of said interval.

\begin{defn}
	An induced subgraph $T \subseteq G$ is a \emph{sink-like subgraph} on an interval $[\theta_1,\theta_2]$ if the following two properties hold:
	\begin{itemize}
		\item For each node $v \in V(T)$ all physically shortest $v$-$t$ paths are contained in $T$.
		\item The total flow volume present in $T$ at the beginning of the interval plus the cumulative inflow into this subgraph during the interval is less than $\frac{1}{2}$, i.e. $T$ satisfies
			\[\vol{T}{\theta_1}{\theta_2} \coloneqq \sum_{e \in E(T)}\edgeLoad[e](\theta_1) +  \sum_{e \in \delta^-_T}\inflowOver{\theta_1}{\theta_2}  + \sum_{v \in V(T)\setminus\set{t}}\int_{\theta_1}^{\theta_2}u_v(\theta)d\theta < \frac{1}{2}.\]
	\end{itemize}
\end{defn} 

Here $\delta^-_T \coloneqq \set{vw \in E | v \notin V(T), w \in V(T)}$ denotes the set of edges entering the subgraph $T$.
We will now show that inside a sink-like subgraph only physically shortest paths towards the sink can be active.

\begin{lemma}\label{lemma:InSinkLikeOnlyShortestPathsAreActive}
	Let $T$ be a sink-like subgraph on an interval $[\theta_1,\theta_2]$. Then during this interval only physically shortest paths towards $t$ can be active.	
\end{lemma}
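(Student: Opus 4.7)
The plan is to argue by contradiction. Suppose that at some time $\theta \in [\theta_1,\theta_2]$ and some node $v \in V(T)$ there is an active $v$-$t$ path $P$ which is not physically shortest. Because all edge transit times are integer-valued, the physical length of any non-physically-shortest $v$-$t$ path exceeds the physical shortest distance $d(v)$ from $v$ to $t$ by at least $1$. Activeness of $P$ then yields
\[
    \ell_v(\theta) \;=\; \sum_{e \in P} c_e(\theta) \;\geq\; \sum_{e \in P} \tau_e \;=\; \tau(P) \;\geq\; d(v) + 1.
\]

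Next I would exploit the first sink-like property: every physically shortest $v$-$t$ path $P^\ast$ is contained in $T$. Since $\ell_v(\theta)$ is a lower bound on the current length of $P^\ast$, we obtain
\[
    d(v) + \sum_{e \in P^\ast} \frac{q_e(\theta)}{\nu_e} \;=\; \sum_{e \in P^\ast} c_e(\theta) \;\geq\; \ell_v(\theta) \;\geq\; d(v) + 1,
\]
so the total queue delay along \emph{every} physically shortest $v$-$t$ path is at least $1$. In particular, on each such path queues must have built up, i.e.\ flow has been diverted away from (or is stuck on) all physically shortest paths from $v$.

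At this point I would invoke \cite[Lemma 4.4]{GHS18}, which provides exactly the quantitative statement we need: forcing a non-physically-shortest path out of $v$ to become active requires at least $\tfrac{1}{2}$ units of flow volume on the physically shortest $v$-$t$ paths. Because all those paths live inside $T$, any flow contributing to these queues must either have been present in $E(T)$ at time $\theta_1$ or have entered $T$ through $\delta^-_T$ or via a node inflow at some $w \in V(T) \setminus \{t\}$ during $[\theta_1,\theta]$. Thus the relevant flow volume is bounded above by $\vol{T}{\theta_1}{\theta_2} < \tfrac{1}{2}$, which contradicts the lower bound from \cite[Lemma 4.4]{GHS18}.

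The main obstacle is aligning the hypothesis of \cite[Lemma 4.4]{GHS18} with our subgraph situation: the cited lemma is phrased for whole networks, and we must argue that its flow accounting can be restricted to $T$. This is where the first sink-like property is essential — since every physically shortest $v$-$t$ path lies inside $T$, every queue responsible for making a non-shortest path active sits on an edge of $T$, and the flow in those queues is therefore traceable to $E(T) \cup \delta^-_T$ and already counted in $\vol{T}{\theta_1}{\theta_2}$.
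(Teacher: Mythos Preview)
Your proposal is correct and uses the same two ingredients as the paper: bound the flow volume on a physically shortest $v$-$t$ path inside $T$ by $\vol{T}{\theta_1}{\theta_2}<\tfrac{1}{2}$, then invoke \cite[Lemma~4.4]{GHS18}. Your steps 1--4 (deriving that the total queue delay on every shortest path is at least $1$) are not wrong but are redundant, since this is precisely the contrapositive already packaged inside \cite[Lemma~4.4]{GHS18}; the paper skips the contradiction wrapper and applies the lemma directly. The one place where you are informal and the paper is precise is the flow-accounting step: the paper establishes $\sum_{e\in P}\edgeLoad[e](\hat\theta)\le\sum_{e\in E(T)}\edgeLoad[e](\hat\theta)\le\vol{T}{\theta_1}{\theta_2}$ by two applications of \Cref{lem:GeIsTotalFlowInGraph} (once at $\hat\theta$, once at $\theta_1$) together with monotonicity of $Z$ and of the cumulative outflows on $\delta_T^+$, which is exactly the formalization of your ``flow in the queues is traceable to $E(T)\cup\delta_T^-$'' argument.
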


\begin{proof}
	Let $v \in V(T)$ be a node in $T$ and $\hat{\theta} \in [\theta_1,\theta_2]$, then for any physically shortest $v$-$t$ path $P$ we have:
		\begin{align*}
			&\sum_{e \in P}\edgeLoad[e](\hat{\theta}) \leq \sum_{e \in E(T)}\edgeLoad[e](\hat{\theta}) 
			\overset{\text{\Cref{lem:GeIsTotalFlowInGraph}}}{=}\quad \sum_{e \in \delta^-_T}F_e^-(\hat{\theta}) + \sum_{v \in V(T) \setminus \set{t}}U_v(\hat{\theta}) - \sum_{e \in \delta^+_T}F_e^+(\hat{\theta}) - Z(\hat{\theta}) \\
			&\quad\quad\leq \sum_{e \in \delta^-_T}F_e^-(\theta_1) + \sum_{e \in \delta^-_T}\inflowOver{\theta_1}{\theta_2}  + \sum_{v \in V(T)\setminus\set{t}}\int_{\theta_1}^{\hat\theta}u_v(\theta)d\theta + \sum_{v \in V(T) \setminus \set{t}}U_v(\theta_1) \\
				&\quad\quad\quad\quad\quad- \sum_{e \in \delta^+_T}F_e^+(\theta_1) - Z(\theta_1) \\
			&\quad\quad\overset{\mathclap{\text{\Cref{lem:GeIsTotalFlowInGraph}}}}{=}\quad \sum_{e \in E(T)}\edgeLoad[e](\theta_1)+  \sum_{e \in \delta^-_T}\inflowOver{\theta_1}{\theta_2}  + \sum_{v \in V(T)\setminus\set{t}}\int_{\theta_1}^{\theta_2}u_v(\theta)d\theta<  \frac{1}{2}.
		\end{align*}
	Thus, by \cite[Lemma 4.4]{GHS18} we get that all active $v$-$t$ paths are also physically shortest paths towards $t$ (we have $v_{\min}, \tauMinDiff \geq 1$, since both travel times and capacities are whole numbers here).
\end{proof}

Together with \Cref{cor:FlowVolumeReduction} and \Cref{lemma:TerminationAcyclicNetworksV2} this implies that any IDE flow will terminate once the whole graph is sink-like.

\begin{corollary}\label{cor:TerminationIfGIsSinkLike}
	Let $f$ be an IDE flow and $\tilde{\theta} \geq \theta_0$ such that the whole graph $G$ is sink-like at time $\tilde{\theta}$.
	Then, the flow terminates before $\tilde{\theta} + \tPmax + \frac{1}{2}$.
\end{corollary}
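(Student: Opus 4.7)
The plan is to combine the sink-likeness of $G$ with the acyclic termination bound from Lemma~\ref{lemma:TerminationAcyclicNetworksV2}. First I would unpack the hypothesis: taking $T=G$ in the definition of sink-likeness, the edge set $\delta^-_G$ is empty, and since $\tilde\theta\geq\theta_0$ the external inflows $u_v$ integrate to $0$ over any interval starting at $\tilde\theta$. Hence the only surviving term in $\vol{G}{\tilde\theta}{\theta_2}$ is $\sum_{e\in E}F_e^\Delta(\tilde\theta)=F^\Delta(\tilde\theta)$, which is $<\tfrac12$ by assumption, and this remains true for every $\theta_2\geq\tilde\theta$. In particular, $G$ is sink-like on $[\tilde\theta,\tilde\theta+\tPmax+\tfrac12]$, so by Lemma~\ref{lemma:InSinkLikeOnlyShortestPathsAreActive} only physically shortest $v$-$t$ paths are active throughout this interval.

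Let $G'\subseteq G$ be the subgraph consisting of all edges lying on some physically shortest $v$-$t$ path. A short argument using the triangle inequality (a directed cycle inside $G'$ would force all its edges to have zero physical length, contradicting $\tau_e\in\INs$) shows that $G'$ is acyclic, and clearly $\tPmax(G')\leq\tPmax$. The IDE property \cref{eq:DefIDE-OnlyUseSP} together with the previous paragraph then yields $f_e^+(\theta)=0$ for every $e\notin E(G')$ and every $\theta\in[\tilde\theta,\tilde\theta+\tPmax+\tfrac12]$. Thus, restricted to $[\tilde\theta,\infty)$, the flow $f$ evolves as a feasible flow inside the acyclic subgraph $G'$, starting with total volume $F^\Delta(\tilde\theta)<\tfrac12$ and receiving no further external inflow.

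Now I would apply Lemma~\ref{lemma:TerminationAcyclicNetworksV2} to $f$ on $G'$ with $\zeta=\tilde\theta$ and $\theta=\tilde\theta+\tPmax+\tfrac12$. This yields
\[
Z(\theta)\;\geq\;Z(\tilde\theta)+\min\!\bigl\{F^\Delta(\tilde\theta),\,\theta-\tilde\theta-\tPmax\bigr\}\;=\;Z(\tilde\theta)+\min\!\bigl\{F^\Delta(\tilde\theta),\tfrac12\bigr\}\;=\;Z(\tilde\theta)+F^\Delta(\tilde\theta),
\]
where the last equality uses $F^\Delta(\tilde\theta)<\tfrac12$. By Lemma~\ref{lem:GeIsTotalFlowInGraph} applied to $W=V$ and by $\tilde\theta\geq\theta_0$, the right-hand side equals $\sum_{v\neq t}U_v(\theta_0)=U$, so $F^\Delta(\theta)=U-Z(\theta)\leq 0$. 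Thus the flow terminates by $\tilde\theta+\tPmax+\tfrac12$.

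The point I expect to need the most care is the reduction of $f$ to a flow on $G'$: at time $\tilde\theta$ there may still be a (sub-$\tfrac12$) residue of flow on edges outside $G'$, which is why the argument is not simply ``flow lives on an acyclic subgraph.'' The saving grace is that Lemma~\ref{lemma:FlowOnEdgesLeavesFastVar} guarantees each such edge drains within $\tau_e+\tfrac12\leq\tPmax+\tfrac12$ time, and no new inflow ever reaches these edges after $\tilde\theta$; this residual flow can therefore be absorbed into the initial volume $F^\Delta(\tilde\theta)$ that is accounted for by Lemma~\ref{lemma:TerminationAcyclicNetworksV2}, so the bound passes through without loss.
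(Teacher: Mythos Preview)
Your argument is essentially the paper's: show $G$ stays sink-like after $\tilde\theta$, invoke \Cref{lemma:InSinkLikeOnlyShortestPathsAreActive} to restrict to the acyclic shortest-path subgraph, and then apply \Cref{lemma:TerminationAcyclicNetworksV2} with initial volume $<\tfrac12$. You are even a bit more careful than the paper in flagging the residual flow on non-$G'$ edges; the paper glosses over exactly this point, and your intuition for why it is harmless (each such edge was once active, hence lies on a simple $t$-path of length at most $\tPmax$, so draining it and then following $G'$ still costs at most $\tPmax$ in physical time) is the right one.
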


\begin{proof}
	\Cref{cor:FlowVolumeReduction} together with $\delta_G^+=\delta_G^-=\emptyset$ shows 
		\[\vol{G}{\tilde{\theta}}{\theta} = \edgeLoad(\tilde{\theta}) = \vol{G}{\tilde{\theta}}{\tilde{\theta}} < \frac{1}{2}\]
	for all times $\theta \geq \tilde{\theta}$, i.e. $G$ remains sink-like forever after $\tilde{\theta}$. So, by \Cref{lemma:InSinkLikeOnlyShortestPathsAreActive}, after $\tilde{\theta}$ the flow $f$ can only use edges on physically shortest paths towards $t$. As those edges form a time independent acyclic subgraph this shows that $f$ only uses an acyclic subgraph of $G$ and by \Cref{lemma:TerminationAcyclicNetworksV2} such a flow terminates after an additional time of at most $\tPmax + \vol{G}{\tilde{\theta}}{\tilde{\theta}} < \tPmax + \frac{1}{2}$.
\end{proof}

To get an upper bound on the makespan of an IDE flow it now suffices to find a large enough time horizon such that it contains at least one point in time where the whole graph is sink-like. To determine such a time, we first show that if we have a sink-like subgraph over a sufficiently long period of time, we can extend this subgraph to a larger sink-like subgraph over a slightly smaller subinterval. Note, that the proof of \cite[Theorem 4.6]{GHS18} uses a similar strategy, but is non-constructive and, therefore, only establishes the existence of a termination time without revealing anything about the length of this time. Thus, a more thorough analysis is needed here.

\begin{lemma}\label{lemma:ExtensionOfSinkLikeSubgraphs}
	Let $T \subsetneq G$ be an induced subgraph, $v$ the closest node to $t$ not in $T$ and $T'$ the subgraph of $G$ induced by $V(T)\cup\set{v}$. Let $\theta_1 \in \IR_{\geq 0}$ be some time, $\theta_2 \coloneqq \theta_1+\sum_{e \in E\setminus E(T)}(\tau_e+ \frac{1}{2\nu_e})$ and $\theta'_2 \coloneqq \theta_1+\sum_{e \in E\setminus E(T')}(\tau_e+ \frac{1}{2\nu_e})$. 
	
	If $T$ is sink-like on $[\theta_1,\theta_2]$, then $T'$ is a sink-like subgraph on $[\theta_1,\theta'_2]$.
\end{lemma}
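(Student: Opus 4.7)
The plan is to verify the two defining properties of a sink-like subgraph for $T'$ in turn. The structural property (all physically shortest $w$-$t$ paths contained in $T'$) is short: for $w \in V(T)$ it is inherited from sink-likeness of $T$, while for $w = v$ any such path begins with an edge to some neighbour $u_1$ with $\distP{u_1} < \distP{v}$; since $v$ was picked as the $\distP{\cdot}$-minimal node outside $T$, this forces $u_1 \in V(T)$, hence the edge $v u_1$ lies in $E(T')$, and the rest of the path lies in $T$ by sink-likeness of $T$ applied to $u_1$.

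The harder part is the volume bound $\vol{T'}{\theta_1}{\theta_2'} < \tfrac{1}{2}$. Comparing with $\vol{T}{\theta_1}{\theta_2} < \tfrac{1}{2}$, the quantity $\vol{T'}{\theta_1}{\theta_2'}$ can pick up three extra contributions: (a) flow load on edges in $E(T')\setminus E(T)$ at $\theta_1$, (b) network inflow at $v$ over $[\theta_1,\theta_2']$, and (c) flow entering $v$ from outside $T'$ over $[\theta_1,\theta_2']$. My plan is to show that every unit of this extra flow eventually crosses from $v$ into $T$ by time $\theta_2$ and is therefore already counted in the term $\sum_{e \in \delta^-_T}\inflowOver{\theta_1}{\theta_2}$ of $\vol{T}{\theta_1}{\theta_2}$. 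I would organise this into three steps, following the structure sketched in \Cref{fig:LogicalStructureSec3}. First, for flow present at $\theta_1$ on an edge $e \in E(T')\setminus E(T)$ oriented from $T$ to $v$, \Cref{lemma:FlowOnEdgesLeavesFastVar} (using that the load on each such edge at $\theta_1$ is bounded by $\vol{T}{\theta_1}{\theta_2} < \tfrac{1}{2}$) yields an arrival time at $v$ of at most $\theta_1 + \tau_e + \tfrac{1}{2\nu_e}$, consuming at most half of the slack $\theta_2 - \theta_2'$. Second, adapting the argument of \Cref{lemma:InSinkLikeOnlyShortestPathsAreActive} and invoking \cite[Lem.~4.4]{GHS18} for physically shortest paths emanating from $v$, I would show that throughout the relevant window the total path load remains below $\tfrac{1}{2}$, so every active edge out of $v$ lies on a physically shortest $v$-$t$ path and, by the structural property just proved, belongs to $\delta^-_T$. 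Third, a further application of \Cref{lemma:FlowOnEdgesLeavesFastVar} to the edges carrying flow from $v$ into $T$ during this window guarantees that this flow has completed its crossing into $T$ by $\theta_2$, consuming the remaining half of the slack $\theta_2 - \theta_2'$.

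Chaining the three steps gives an injection of the extra contributions to $\vol{T'}{\theta_1}{\theta_2'}$ into $\sum_{e \in \delta^-_T}\inflowOver{\theta_1}{\theta_2}$, and hence $\vol{T'}{\theta_1}{\theta_2'} \leq \vol{T}{\theta_1}{\theta_2} < \tfrac{1}{2}$. The main obstacle is the apparent circularity in step two: applying the $<\tfrac{1}{2}$ path-load criterion to paths starting at $v$ needs a uniform bound on the volume inside $T'$ over the intermediate window, yet this bound is essentially what is being proved. Resolving this requires a continuous-extension (or shortest-time-of-violation) argument showing that the volume actually stays below $\tfrac{1}{2}$ throughout $[\theta_1,\theta_2']$, supported by steps one and three which drain any excess flow back into $T$ quickly enough. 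Coordinating the time budget $\theta_2 - \theta_2' = \sum_{e \in E(T') \setminus E(T)}(\tau_e + \tfrac{1}{2\nu_e})$ so that each edge's ``round-trip'' contribution consumes exactly its allotted share of the slack is the technical heart of the proof.
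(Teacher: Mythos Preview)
Your three-step decomposition matches the paper's three claims exactly, and the structural first property is handled correctly. Two points deserve attention.

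First, the justification you give for step one is not right as stated: an edge oriented from $T$ to $v$ lies in $\delta^+_T$, not in $E(T)$, so its load at time $\theta_1$ is \emph{not} among the summands of $\vol{T}{\theta_1}{\theta_2}$. The bound $<\tfrac12$ on such an edge must be derived, and in the paper it is obtained only \emph{after} the other two steps are in place (by routing a hypothetical load $\geq\tfrac12$ through $v$ and into $T$, and contradicting $T$ being sink-like).

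Second, and more importantly, the circularity you flag is not resolved in the paper by a shortest-time-of-violation argument but simply by \emph{reversing the order} of your three steps. Your step three (edges from $v$ into $T$) can be proved first, with no reference to $T'$: if some $e\in\delta_v^+\cap\delta^-_T$ carried load $\geq\tfrac12$ at any time in the window, then \Cref{lemma:FlowOnEdgesLeavesFastVar} would push at least $\tfrac12$ units across $e$ into $T$ before $\theta_2$; since outflow from such an edge is counted in $\sum_{e\in\delta^-_T}\inflowOver{\theta_1}{\theta_2}$, this alone forces $\vol{T}{\theta_1}{\theta_2}\geq\tfrac12$, contradicting $T$ sink-like. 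With step three established, step two follows directly: any physically shortest $v$-$t$ path consists of one edge in $\delta_v^+\cap\delta^-_T$ (load $<\tfrac12$ by step three) followed by edges inside $T$ (total load $<\tfrac12$ since $T$ is sink-like), so the path load is $<1$, which is exactly the threshold required by \cite[Lemma~4.4]{GHS18} given $\nu_{\min},\tauMinDiff\geq 1$. Only then is step one proved, using steps two and three in combination. The apparent circularity dissolves once you observe that step three depends only on the sink-likeness of $T$, not of $T'$; no continuous-extension argument is needed.
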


\begin{proof}
	\newcommand{\colInT}{green!70!blue!70}
	\newcommand{\colVtoT}{red!50!blue!80}
	\newcommand{\colVtoG}{black}
	\newcommand{\colGtoV}{blue!70}
	\newcommand{\colGtoT}{blue!30}
	\newcommand{\colTtoV}{red}
	\newcommand{\colTtoG}{black}
	Since it is clear that $T'$ fulfills the first property of being sink-like (by the choice of $v$), we only need to show that $\vol{T'}{\theta_1}{\theta_2'} \leq \vol{T}{\theta_1}{\theta_2}$, from which the lemma follows immediately (as $T$ is sink-like on $[\theta_1,\theta_2]$). More precisely we will show that the flow volume on edges between $v$ and $T$ (i.e. edges in $E(T') \setminus E(T) = (\delta_v^+ \cap \delta_T^-) \cup (\delta_v^- \cap \delta_T^+)$) at time $\theta_1$ as well as the cumulative inflow into $v$ over the interval $[\theta_1,\theta_2']$ is already accounted for by the cumulative inflow into $T$ on the interval $[\theta_1,\theta_2]$ via edges from $v$ to $T$. This is formalized in the following three claims:

	\begin{figure}[ht!]\centering
		\begin{minipage}[c]{0.45\textwidth}
			\begin{adjustbox}{max width=\textwidth}
				\newcommand{\colTbackground}{black!20!white}

\tikzset{
	styleVtoT/.style={edge,dashed,\colVtoT},
	styleTtoV/.style={edge,dashdotted,\colTtoV},
	styleInT/.style={edge,\colInT},
	styleGtoT/.style={edge,loosely dotted,\colGtoT},
	styleGtoV/.style={edge,dotted,\colGtoV},
	styleTtoG/.style={edge,loosely dotted,\colTtoG},
	styleVtoG/.style={edge,loosely dotted,\colVtoG}
}

\begin{tikzpicture}[scale=1.0]	
	\node () at (1,4) {\Large$G$};
	\fill[rounded corners=10, color=\colTbackground] (3.5,4.5) rectangle (8.5, -.5);
	\node() at (8,4) {\Large$T$};
	\draw[rounded corners=10, ultra thick, color=\colTbackground] (3.5,4.5) -- (8.5,4.5) -- (8.5, -.5) -- (3.5,-.5) -- (1.2,2) -- cycle;
	\node() at (3,2.5) {\Large$T'$};

	\node[namedVertex, fill=white] (t) at (8, 2) {$t$};
	\node[namedVertex, fill=white] (v) at (2,2) {$v$};
	\node[vertex, fill=white] (1) at (4,4) {};
	\node[vertex, fill=white] (2) at (7,4) {};
	\node[vertex, fill=white] (3) at (5,2) {};
	\node[vertex, fill=white] (4) at (4,0) {};
	\node[vertex, fill=white] (5) at (8,0) {};
	
	\draw[styleVtoT] (v) -- (3);
	\draw[styleVtoT] (v) -- (4);
	\draw[styleVtoG] (v) -- ++(-1,-1);
	\draw[styleGtoV,<-] (v) -- ++(-1.5,0);
	\draw[styleGtoV,<-] (v) -- ++(-1,1);
	
	\draw[styleInT] (1) -- (2);
	\draw[styleInT] (1) -- (3);
	\draw[styleTtoV] (1) -- (v);
	\draw[styleTtoG] (1) -- ++(-.8,.8);
	\draw[styleGtoT,<-] (1) -- ++(-1.5,0);
	
	\draw[styleInT] (2) -- (3);
	
	\draw[styleInT] (3) -- (t);
	
	\draw[styleInT] (4) -- (3);
	\draw[styleGtoT,<-] (4) -- ++(-1.5,0);
	
	\draw[styleInT] (5) -- (4);
	\draw[styleInT] (5) -- (t);

\end{tikzpicture}
			\end{adjustbox}
		\end{minipage}\hfill
		\begin{minipage}[c]{0.5\textwidth}
			\caption{A sink-like subgraph $T$ and a closest node $v \in V\setminus V(T)$ as in the statement of \Cref{lemma:ExtensionOfSinkLikeSubgraphs}. By \Cref{lemma:ExtensionOfSinkLikeSubgraphs:Claim:ReachingVFast} all flow on the {\color{\colTtoV}dash-dotted edge} from $T$ to $v$ will reach $v$ before time $\theta_v$. By \Cref{lemma:ExtensionOfSinkLikeSubgraphs:Claim:FromVOnlyToT} between $\theta_1$ and $\theta_v$ all flow reaching $v$ (either via the {\color{\colGtoV}dotted} or via the {\color{\colTtoV}dash-dotted edges}) will travel towards $T$ from there (i.e. enter one of the {\color{\colVtoT}dashed edges}). By \Cref{lemma:ExtensionOfSinkLikeSubgraphs:Claim:ReachingTFast} the {\color{\colVtoT}dashed edges} will never carry a larger flow volume than $\frac{1}{2}$ between $\theta_1$ and $\theta_v$ and all flow particles using these edges within this time interval will reach $T$ before $\theta_2$.
			}\label{fig:ExtendingSinkLike}
		\end{minipage}
	\end{figure}
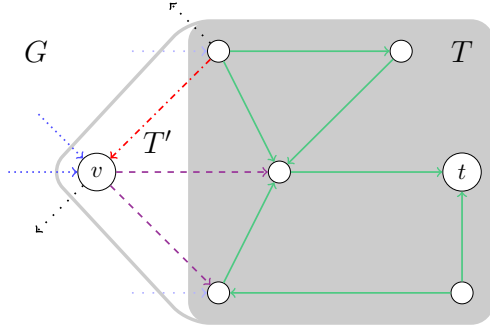
	
	\begin{claim}\label{lemma:ExtensionOfSinkLikeSubgraphs:Claim:ReachingVFast}
		All flow on edges from $T$ to $v$ ({\color{\colTtoV}dash-dotted edges} in \Cref{fig:ExtendingSinkLike}) at time $\theta_1$ reaches $v$ before $\theta_v \coloneqq \theta_1+\sum_{e \in E\setminus E(T')}(\tau_e+ \frac{1}{2\nu_e})+\sum_{e \in \delta_v^-\cap\delta_T^+}(\tau_e+ \frac{1}{2\nu_e}) \leq \theta_2$, i.e.
			\[\edgeLoad[e](\theta_1) \leq \inflowOver{\theta_1}{\theta_v}\text{ for all } e \in \delta_T^+ \cap \delta_v^-.\]
	\end{claim}

	\begin{claim}\label{lemma:ExtensionOfSinkLikeSubgraphs:Claim:FromVOnlyToT}
		All flow reaching $v$ (due to the network inflow at $v$ or from $T$ or $G\setminus T'$, i.e. via the {\color{\colTtoV}dash-dotted} or via the {\color{\colGtoV}dotted} edges in \Cref{fig:ExtendingSinkLike}) between $\theta_1$ and $\theta_v$ will enter an edge towards $T$ ({\color{\colVtoT}dashed} edges in \Cref{fig:ExtendingSinkLike}), i.e.
			\[\sum_{e \in \delta_v^-}\inflowOver{\theta_1}{\theta_v} + \int_{\theta_1}^{\theta_v}u_v(\theta)d\theta
				= \sum_{e \in \delta_v^+\cap\delta_T^-}\outflowOver{\theta_1}{\theta_v}.\]
	\end{claim}

	\begin{claim}\label{lemma:ExtensionOfSinkLikeSubgraphs:Claim:ReachingTFast}
		For any edge from $v$ to $T$ ({\color{\colVtoT}dashed} edges in \Cref{fig:ExtendingSinkLike}) the volume of flow currently traveling on this edge at any time $\theta \in [\theta_1,\theta_v]$ is less than $\frac{1}{2}$, i.e.
			\[\edgeLoad[e](\theta) < \frac{1}{2} \text{ for all } e \in \delta_v^+\cap\delta_T^- \text{ and } \theta \in [\theta_1,\theta_v].\]
		Additionally all this flow will reach $T$ before $\theta_2$, i.e.
			\[\edgeLoad[e](\theta_1) + \outflowOver{\theta_1}{\theta_v} \leq \inflowOver{\theta_1}{\theta_2} \text{ for all } e \in \delta_v^+\cap\delta_T^-.\]
	\end{claim}

	From \Cref{lemma:ExtensionOfSinkLikeSubgraphs:Claim:FromVOnlyToT,lemma:ExtensionOfSinkLikeSubgraphs:Claim:ReachingTFast,lemma:ExtensionOfSinkLikeSubgraphs:Claim:ReachingVFast} we then directly get
	\Crefformat{claim}{Cl. #2#1#3}
	\begin{align*}
		&\vol{T'}{\theta_1}{\theta'_2} \leq \vol{T'}{\theta_1}{\theta_v} \\
		&= \sum_{e \in E(T')}\edgeLoad[e](\theta_1)+  \sum_{e \in \delta^-_{T'}}\inflowOver{\theta_1}{\theta_v} +\!\!\! \sum_{v \in V(T')\setminus\set{t}}\int_{\theta_1}^{\theta_v}u_v(\theta)d\theta \\
		&\quad= {\color{\colInT}\sum_{e \in E(T)}\edgeLoad[e](\theta_1)} + {\color{\colTtoV}\sum_{e \in \delta^+_T\cap\delta^-_v}\edgeLoad[e](\theta_1)} + {\color{\colVtoT}\sum_{e \in \delta_v^+\cap\delta_T^-}\edgeLoad[e](\theta_1)} + {\color{\colGtoT}\sum_{e \in \delta_T^-\setminus\delta_v^+}\inflowOver{\theta_1}{\theta_v}} \\
			&\quad\quad\quad+ {\color{\colGtoV}\sum_{e \in \delta_v^-\setminus\delta_T^+}\inflowOver{\theta_1}{\theta_v}} +\!\!\! \sum_{v \in V(T')\setminus\set{t}}\int_{\theta_1}^{\theta_v}u_v(\theta)d\theta \\
		&\quad\overset{\mathclap{\text{\Cref{lemma:ExtensionOfSinkLikeSubgraphs:Claim:ReachingVFast}}}}{\leq} 
		{\color{\colInT}\sum_{e \in E(T)}\edgeLoad[e](\theta_1)} + {\color{\colTtoV}\sum_{e \in \delta^+_T\cap\delta^-_v}\inflowOver{\theta_1}{\theta_v}} + {\color{\colVtoT}\sum_{e \in \delta_v^+\cap\delta_T^-}\edgeLoad[e](\theta_1)} + {\color{\colGtoT}\sum_{e \in \delta_T^-\setminus\delta_v^+}\inflowOver{\theta_1}{\theta_v}} \\
			&\quad\quad\quad+ {\color{\colGtoV}\sum_{e \in \delta_v^-\setminus\delta_T^+}\inflowOver{\theta_1}{\theta_v}} +\!\!\! \sum_{v \in V(T')\setminus\set{t}}\int_{\theta_1}^{\theta_v}u_v(\theta)d\theta \\
		&\quad\overset{\mathclap{\text{\Cref{lemma:ExtensionOfSinkLikeSubgraphs:Claim:FromVOnlyToT}}}}{=} 
		{\color{\colInT}\sum_{e \in E(T)}\edgeLoad[e](\theta_1)} + \!\!\!{\color{\colVtoT}\sum_{e \in \delta_v^+\cap\delta_T^-}\edgeLoad[e](\theta_1)} +\!\!\! {\color{\colGtoT}\sum_{e \in \delta_T^-\setminus\delta_v^+}\inflowOver{\theta_1}{\theta_v}} +\!\!\! {\color{\colVtoT}\sum_{e \in \delta_v^+\cap\delta_T^-}\outflowOver{\theta_1}{\theta_v}} \\
			&\quad\quad\quad+\!\!\! \sum_{v \in V(T)\setminus\set{t}}\int_{\theta_1}^{\theta_v}u_v(\theta)d\theta  \\
		&\quad\overset{\mathclap{\text{\Cref{lemma:ExtensionOfSinkLikeSubgraphs:Claim:ReachingTFast}}}}{\leq} 
		{\color{\colInT}\sum_{e \in E(T)}\edgeLoad[e](\theta_1)} + {\color{\colGtoT}\sum_{e \in \delta_T^-\setminus\delta_v^+}\inflowOver{\theta_1}{\theta_v}} +{\color{\colVtoT}\sum_{e \in \delta_v^+\cap\delta_T^-}\inflowOver{\theta_1}{\theta_2}}  +\!\!\! \sum_{v \in V(T)\setminus\set{t}}\int_{\theta_1}^{\theta_v}u_v(\theta)d\theta \\
		&\quad\leq\sum_{e \in E(T)}\edgeLoad[e](\theta_1) +  \sum_{e \in \delta_T^-}\inflowOver{\theta_1}{\theta_2}  + \sum_{v \in V(T)\setminus\set{t}}\int_{\theta_1}^{\theta_2}u_v(\theta)d\theta = \vol{T}{\theta_1}{\theta_2} < \frac{1}{2},
	\end{align*}\Crefformat{claim}{Claim~#2#1#3}
	proving 
	that $T'$ is indeed sink-like on $[\theta_1,\theta_2']$. The proofs of these claims are relatively straightforward calculations using \cite[Lemma 4.4]{GHS18} and \Cref{lemma:FlowOnEdgesLeavesFastVar}. Note, that the proofs have to be done in reverse order, as the proof of \Cref{lemma:ExtensionOfSinkLikeSubgraphs:Claim:FromVOnlyToT} uses \Cref{lemma:ExtensionOfSinkLikeSubgraphs:Claim:ReachingTFast} and the proof of \Cref{lemma:ExtensionOfSinkLikeSubgraphs:Claim:ReachingVFast} uses both \Cref{lemma:ExtensionOfSinkLikeSubgraphs:Claim:FromVOnlyToT,lemma:ExtensionOfSinkLikeSubgraphs:Claim:ReachingTFast}.
	\begin{proofClaim}[Proof of \Cref{lemma:ExtensionOfSinkLikeSubgraphs:Claim:ReachingTFast}]
		Let $e \in \delta_v^+\cap\delta_T^-$ be any edge from $v$ to $T$. For the first part we assume by way of contradiction that there exists a time $\theta' \in [\theta_1,\theta_v]$ such that $\edgeLoad[e](\theta') \geq \frac{1}{2}$. By \Cref{lemma:FlowOnEdgesLeavesFastVar} we then have $F_e^-(\theta'+\frac{1}{2\nu_e} + \tau_e) - F_e^-(\theta') \geq \frac{1}{2}$ which, together with $\theta'+\frac{1}{2\nu_e} + \tau_e \leq \theta_v+\frac{1}{2\nu_e} + \tau_e \leq \theta_2$, shows that
		\[\vol{T}{\theta_1}{\theta_2} \geq \int_{\theta_1}^{\theta_2}f_e^-(\theta)d\theta \geq \int_{\theta'}^{\theta'+\frac{1}{2\nu_e} + \tau_e}f_e^-(\theta)d\theta = F_e^-(\theta'+\frac{1}{2\nu_e} + \tau_e) - F_e^-(\theta') \geq \frac{1}{2},\]
		a contradiction to $T$ being sink-like on the interval $[\theta_1,\theta_2]$. Thus, we indeed have $\edgeLoad[e](\theta') < \frac{1}{2}$ for all $\theta' \in [\theta_1,\theta_v]$.
		
		Now, using this first part of the Claim for $\theta' = \theta_v$ we get $\theta_v+\frac{\edgeLoad[e](\theta_v)}{\nu_e}+\tau_e < \theta_v + \frac{1}{2\nu_e} \leq \theta_2$ and therefore
		\begin{align}\label{lemma:ExtensionOfSinkLikeSubgraphs:Claim:ReachingTFast:eq1}
		\edgeLoad[e](\theta_v) \overset{\text{\Cref{lemma:FlowOnEdgesLeavesFastVar}}}{\leq} F_e^-(\theta_v+\frac{\edgeLoad[e](\theta_v)}{\nu_e}+\tau_e)-F_e^-(\theta_v) \leq F_e^-(\theta_2)-F_e^-(\theta_v).
		\end{align}
		Finally we get
		\begin{align*}
		\outflowOver{\theta_1}{\theta_v} + \edgeLoad[e](\theta_1)
		&= F_e^+(\theta_v)-F_e^+(\theta_1) + F_e^+(\theta_1) - F_e^-(\theta_1) \\
		&= F_e^+(\theta_v)-F_e^-(\theta_v) + F_e^-(\theta_v) - F_e^-(\theta_1) \\
		&= \edgeLoad[e](\theta_v) + F_e^-(\theta_v) - F_e^-(\theta_1)\\ 
		&\overset{\text{\eqref{lemma:ExtensionOfSinkLikeSubgraphs:Claim:ReachingTFast:eq1}}}{\leq} F_e^-(\theta_2)-F_e^-(\theta_v)+ F_e^-(\theta_v) - F_e^-(\theta_1)\\
		&=  F_e^-(\theta_2)-F_e^-(\theta_1) = \inflowOver{\theta_1}{\theta_2}.\qedhere
		\end{align*}		
	\end{proofClaim}
	
	\begin{proofClaim}[Proof of \Cref{lemma:ExtensionOfSinkLikeSubgraphs:Claim:FromVOnlyToT}]
		Since we already know, that $T'$ fulfills the first property for being sink-like, any physically shortest $v$-$t$ path consists of one edge in $\delta^+_v\cap \delta^-_T$ and after this only edges inside $T$. Combining the first part of \Cref{lemma:ExtensionOfSinkLikeSubgraphs:Claim:ReachingTFast} and the second property of being sink-like (for $T$) we get, that any such path always carries a flow volume of less than $1$ for the whole interval $[\theta_1,\theta_v]$. By \cite[Lemma 4.4]{GHS18} this means that all active $v$-$t$ paths are also physically shortest and therefore $\delta_v^+ \cap E_\theta \subseteq \delta_v^+ \cap \delta^-_T$. Together with the flow conservation constraint \eqref{eq:FeasibleFlow-FlowConservation} this immediately gives us
		\[
		\sum_{e \in \delta_v^-}\inflowOver{\theta_1}{\theta_v} + \int_{\theta_1}^{\theta_v}u_v(\theta) d\theta = \sum_{e \in \delta_v^+}\outflowOver{\theta_1}{\theta_v} = \sum_{e \in \delta_v^+ \cap \delta^-_T}\outflowOver{\theta_1}{\theta_v}.\qedhere
		\]
	\end{proofClaim}

	\begin{proofClaim}[Proof of \Cref{lemma:ExtensionOfSinkLikeSubgraphs:Claim:ReachingVFast}]
		Let $e \in \delta_T^+\cap\delta_v^-$ be any edge from $T$ to $v$. As in the proof of \Cref{lemma:ExtensionOfSinkLikeSubgraphs:Claim:ReachingTFast} we first show that, $\edgeLoad[e](\theta_1) < \frac{1}{2}$. So, we assume $\edgeLoad[e](\theta_1) \geq \frac{1}{2}$ and get $F_e^-(\theta_1+\frac{1}{2\nu_e} + \tau_e) - F_e^-(\theta_1) \geq \frac{1}{2}$ from  \Cref{lemma:FlowOnEdgesLeavesFastVar}. Together with  $\theta_1+\frac{1}{2\nu_e} + \tau_e \leq \theta_1+\frac{1}{2\nu_e} + \tau_e \leq \theta_2$ this shows
		\begin{align*}
		\vol{T}{\theta_1}{\theta_2} 
			&\geq \sum_{e \in \delta_v^+\cap\delta_T^-}\inflowOver{\theta_1}{\theta_2} \overset{\text{\Cref{lemma:ExtensionOfSinkLikeSubgraphs:Claim:ReachingTFast}}}{\geq} \sum_{e \in \delta_v^+\cap\delta_T^-}\outflowOver{\theta_1}{\theta_v} \\
			&\overset{\mathclap{\text{\Cref{lemma:ExtensionOfSinkLikeSubgraphs:Claim:FromVOnlyToT}}}}{=}\quad\sum_{e \in \delta_v^-}\inflowOver{\theta_1}{\theta_v} + \int_{\theta_1}^{\theta_v}u_v(\theta) d\theta \geq \inflowOver{\theta_1}{\theta_v} \\
			&\geq F_e^-(\theta_1+\frac{1}{2\nu_e} + \tau_e) - F_e^-(\theta_1) \geq \frac{1}{2},
		\end{align*}
		which is a contradiction to $T$ being sink-like on $[\theta_1,\theta_2]$. So, we do indeed have $\edgeLoad[e](\theta_1) < \frac{1}{2}$, thus $\theta_1+\frac{\edgeLoad[e](\theta_1)}{\nu_e}+\tau_e\leq \theta_1+\frac{1}{2\nu_e}+\tau_e \leq \theta_v$ and therefore
		\[
		\edgeLoad[e](\theta_1) \overset{\text{\Cref{lemma:FlowOnEdgesLeavesFastVar}}}{\leq} F_e^-(\theta_1+\frac{\edgeLoad[e](\theta_1)}{\nu_e}+\tau_e)-F_e^-(\theta_1) \leq F_e^-(\theta_v)-F_e^-(\theta_1) = \inflowOver{\theta_1}{\theta_v}\qedhere
		\]
	\end{proofClaim}
	This concludes the proof of the lemma.
\end{proof}

\begin{theorem}\label{thm:Termination_SingleSink2}
	For multi-source single-sink networks, any IDE flow over time terminates before $\hat{\theta} \coloneqq \theta_0 + 2U \sum_{e\in E}(\tau_e+\frac{1}{2\nu_e}) + \tPmax + \frac{1}{2}$ and has total travel times of at most $\ceil{3U}U(2\tPmax+\sum_{e\in E}(\tau_e+\frac{1}{2\nu_e})+1)$.
\end{theorem}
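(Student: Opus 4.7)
The plan is to establish the makespan bound first, by exhibiting a time at which the whole graph $G$ becomes sink-like and then invoking \Cref{cor:TerminationIfGIsSinkLike}, and to derive the total travel time bound from it via \Cref{lemma:SumOfTravelTimesDefWithEdgeLoad}.

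Setting $C \coloneqq \sum_{e \in E}(\tau_e + \tfrac{1}{2\nu_e})$, the first step is to prove the intermediate claim sketched as Claim~5 in \Cref{fig:LogicalStructureSec3}: whenever $\theta_1 \geq \theta_0$ and the cumulative inflow into $t$ over $[\theta_1,\theta_1 + C]$ is strictly less than $\tfrac{1}{2}$, the whole graph $G$ is sink-like at time $\theta_1$. Start from $T_0 \coloneqq \{t\}$, which is trivially sink-like on $[\theta_1,\theta_1 + C]$ under this hypothesis (since $E(\{t\}) = \emptyset$, so $\vol{\{t\}}{\theta_1}{\theta_1+C}$ reduces to the cumulative inflow into $t$). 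Iteratively apply \Cref{lemma:ExtensionOfSinkLikeSubgraphs}, adjoining in each step the closest not-yet-included node to $t$ (which is well-defined because $t$ is reachable from every vertex). The shrinkages of the validity intervals telescope precisely to $C = \sum_{e \in E \setminus E(T_0)}(\tau_e + \tfrac{1}{2\nu_e})$, so once the subgraph reaches $G$ the validity interval has collapsed to the single point $\{\theta_1\}$—exactly the assertion that $G$ is sink-like at $\theta_1$.

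A pigeonhole on the disjoint intervals $I_k \coloneqq [\theta_0 + kC,\,\theta_0 + (k+1)C]$ for $k = 0,\dots,2U-1$ then yields the makespan bound. If every $I_k$ carries inflow at least $\tfrac{1}{2}$ into $t$, summing gives $Z(\theta_0 + 2UC) \geq U$, and \Cref{lem:GeIsTotalFlowInGraph} applied with $W = V$ forces $\edgeLoad(\theta_0 + 2UC) = 0$, i.e.\ termination by $\theta_0 + 2UC$. Otherwise, some $I_{k^\ast}$ with $k^\ast \leq 2U - 1$ has inflow $<\tfrac{1}{2}$, so the intermediate claim makes $G$ sink-like at $\theta_0 + k^\ast C$ and \Cref{cor:TerminationIfGIsSinkLike} delivers termination by $\theta_0 + k^\ast C + \tPmax + \tfrac{1}{2} \leq \hat{\theta}$, as required.

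For the total travel time bound I would rewrite $\sumOfTraveltimes(f) = \int_0^{\termTime(f)}\edgeLoad(\theta)\,d\theta$ via \Cref{lemma:SumOfTravelTimesDefWithEdgeLoad}, use the pointwise bound $\edgeLoad \leq U$ from \Cref{lem:GeIsTotalFlowInGraph}, and then control the effective length of the support of $\edgeLoad$. The natural route is to re-use the just-derived makespan bound as a residual-lifetime estimate: from any time $\theta \in [0,\theta_0]$ the remaining instance still has total mass to serve at most $U$, so by the same argument each particle injected at $\theta$ must reach $t$ by $\theta + 2UC + \tPmax + \tfrac{1}{2}$. Integrating this lifetime against the total mass $U$, and using a second analogous pigeonhole to absorb the contribution of the injection phase (where $\edgeLoad$ is still building up from $0$), yields a bound of the claimed order $\ceil{3U}U(2\tPmax + C + 1)$. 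The main technical obstacle is the bookkeeping inside the iterative extension argument: one must carefully verify that a closest-to-$t$ node not yet in $T_k$ exists at every step, that the intervals $[\theta_1,\theta_{2,k}']$ remain nontrivial until $T_k = G$, and that the shrinkages telescope to exactly $C$—once this is in place, the pigeonhole step and the travel-time integration are essentially calculational.
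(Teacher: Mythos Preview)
Your makespan argument is correct and essentially identical to the paper's: both start from $T_0=\{t\}$, iterate \Cref{lemma:ExtensionOfSinkLikeSubgraphs} to obtain the claim that ``low inflow into $t$ over a length-$C$ window forces $G$ to be sink-like at its left endpoint'', and then pigeonhole over consecutive length-$C$ windows after $\theta_0$ to locate such a window (or else conclude $Z\geq U$ directly). The case split you write out is in fact slightly cleaner than the paper's phrasing.

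The gap is in the total travel time argument. Your key step is the per-particle lifetime estimate ``each particle injected at $\theta$ must reach $t$ by $\theta + 2UC + \tPmax + \tfrac{1}{2}$'', obtained by rerunning the makespan pigeonhole from the injection time $\theta$. This does not follow. The pigeonhole indeed produces some $\tilde\theta\in[\theta,\theta+2UC]$ at which $G$ is sink-like, but \Cref{cor:TerminationIfGIsSinkLike} requires $\tilde\theta\geq\theta_0$; if $\tilde\theta<\theta_0$ you only get $\edgeLoad(\tilde\theta)<\tfrac{1}{2}$, which neither forces termination nor tracks your specific particle (further network inflow arrives and can rebuild queues on its path). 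So the lifetime bound you integrate is unproven, and the vague ``second analogous pigeonhole to absorb the injection phase'' does not close this: $\int_0^{\theta_0}\edgeLoad$ can be of order $U\theta_0$ with $\theta_0$ unbounded, and nothing in your sketch eliminates $\theta_0$.

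The paper's route is genuinely different here. Instead of particle lifetimes, it partitions $[0,\termTime(f)]$ into at most $\ceil{3U}$ intervals each carrying inflow $\leq\tfrac13$ into $t$ (so the number of intervals is controlled by $U$, not by $\theta_0$). On ``short'' intervals of length at most $c\coloneqq C+\tPmax+\tfrac12$ the trivial bound $\int\edgeLoad\leq Uc$ suffices. On ``long'' intervals the same Claim~5 (now applied with possibly $a_{i-1}<\theta_0$, which \Cref{lemma:ExtensionOfSinkLikeSubgraphs} permits) makes $G$ sink-like on the entire sub-interval $[a_{i-1},a_i-C]$; then \Cref{lemma:InSinkLikeOnlyShortestPathsAreActive} plus \Cref{lemma:TerminationAcyclicNetworksV2} give $\edgeLoad(\theta)\leq Z(\theta+\tPmax+\tfrac12)-Z(\theta)$ pointwise there, and a telescoping integral bounds the long-interval contribution by $U(\tPmax+\tfrac12)$. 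Summing over the $\ceil{3U}$ intervals yields the stated bound. The crucial idea you are missing is that whenever little flow is reaching $t$, the sink-like machinery forces $\edgeLoad$ itself to be small---this is what decouples the travel-time bound from $\theta_0$.
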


\begin{proof}
	Starting with the subgraph consisting only of the sink node $t$ (which trivially contains all shortest paths towards $t$) and iteratively applying \Cref{lemma:ExtensionOfSinkLikeSubgraphs} we immediately get
	\begin{claim}\label{thm:Termination_SingleSink2:Claim}
		If for some interval $[a,b]$ of length at least $\sum_{e\in E}(\tau_e+\frac{1}{2\nu_e})$ the sink node $t$ has a total cumulative inflow of less than $\frac{1}{2}$, then the whole graph is sink-like for $[a,b-\sum_{e\in E}(\tau_e+\frac{1}{2\nu_e})]$.
		\hfill$\blacksquare$
	\end{claim}
	
	Since all flow reaching $t$ vanishes from the network there can be at most $2U$ (pairwise disjoint) intervals of length $\sum_{e\in E}(\tau_e+\frac{1}{2\nu_e})$ with inflow of at least $\frac{1}{2}$ into $t$. Thus, there must be some time $\theta_0 \leq \tilde{\theta} \leq \theta_0 + 2U \sum_{e\in E}(\tau_e+\frac{1}{2\nu_e})$ which is the beginning of an interval of length $\sum_{e\in E}(\tau_e+\frac{1}{2\nu_e})$ with total cumulative inflow of less than $\frac{1}{2}$ into $t$. So, by \Cref{thm:Termination_SingleSink2:Claim}, the whole graph is sink-like at $\tilde{\theta}$, which, by \Cref{cor:TerminationIfGIsSinkLike}, implies that the flow terminates before $\tilde{\theta} + \tPmax + \frac{1}{2} \leq \hat{\theta}$. This shows the bound on the makespan.

	For the total travel times we partition the time interval $[0,\termTime(f)]$ into $\ceil{3U}$ intervals $[a_0,a_1], \dots, [a_{\ceil{3U}-1},a_{\ceil{3U}}]$ by setting $a_0 \coloneqq 0$ and then recursively defining
	\begin{align*}
		a_{k+1} \coloneqq \sup\Set{\theta \in [a_k,\termTime(f)] \mid Z(\theta)-Z(a_k) \leq \frac{1}{3}} \text{ for } k=1, \dots, \ceil{3U}.
	\end{align*}
	Since $Z$ is continuous all these intervals except for possibly the last one have a cumulative inflow of exactly $\frac{1}{3}$ into the sink and, therefore, together they form a partition of $[0,\termTime(f)]$. We now distinguish two different kinds of intervals: We say an interval $[a_{i-1},a_i]$ is \emph{short}, if $a_i - a_{i-1} \leq c \coloneqq \sum_{e\in E}(\tau_e+\frac{1}{2\nu_e})+\tPmax+\frac{1}{2}$ and \emph{long} otherwise. For any such long interval $[a_{i-1},a_i]$ \Cref{thm:Termination_SingleSink2:Claim} guarantees that the whole graph is sink-like on $[a_{i-1},a_i-\sum_{e\in E}(\tau_e+\frac{1}{2\nu_e})]$. Therefore, by \Cref{lemma:InSinkLikeOnlyShortestPathsAreActive} flow particles only use physically shortest paths towards the sink, which together form a fixed acyclic subgraph. Thus, for this interval we can apply \Cref{lemma:TerminationAcyclicNetworksV2}, i.e. for any time $\theta \in [a_{i-1},a_i-\sum_{e\in E}(\tau_e+\frac{1}{2\nu_e})-\tPmax-\frac{1}{2}] = [a_{i-1},a_i-c]$ we have 
		\begin{align}\label{eq:boundOnArrivingFlow}
			Z\left(\theta+\tPmax+\frac{1}{2}\right) \geq Z(\theta) + \min\Set{\edgeLoad(\theta),\frac{1}{2}} = Z(\theta) + \edgeLoad(\theta) 
		\end{align}
	where $\edgeLoad(\theta) \leq \vol{G}{\theta}{\theta} < \frac{1}{2}$ follows from the fact that $G$ is sink-like at $\theta$. This allows us to find the following upper bound on the total travel times incurred during the first part of any long interval $[a_{i-1},a_i]$:
	\begin{align*}
		&\int_{a_{i-1}}^{a_i-c}\edgeLoad(\theta)d\theta 
			\overset{\text{\eqref{eq:boundOnArrivingFlow}}}{\leq} \int_{a_{i-1}}^{a_i-c}Z\left(\theta+\tPmax+\frac{1}{2}\right) - Z(\theta)d\theta \\
		&\quad\quad= \int_{a_{i-1}}^{a_i-c}Z\left(\theta+\tPmax+\frac{1}{2}\right)d\theta - \int_{a_{i-1}}^{a_i-c}Z(\theta)d\theta \\
		&\quad\quad=\int_{a_{i-1}+\tPmax+\frac{1}{2}}^{a_i-c+\tPmax+\frac{1}{2}}Z(\theta)d\theta - \int_{a_{i-1}}^{a_i-c}Z(\theta)d\theta \\
		&\quad\quad= \int_{a_i}^{a_i-c+\tPmax+\frac{1}{2}}Z(\theta)d\theta - \int_{a_{i-1}}^{a_{i-1}+\tPmax+\frac{1}{2}}Z(\theta)d\theta \\
		&\quad\quad\leq \int_{a_i}^{a_i-c+\tPmax+\frac{1}{2}}Z(\theta)d\theta \overset{\text{\Cref{lem:GeIsTotalFlowInGraph}}}{\leq} \int_{a_i}^{a_i-c+\tPmax+\frac{1}{2}}Ud\theta = U\left(\tPmax+\frac{1}{2}\right).
	\end{align*}
	Together with the trivial upper bounds for the second parts of long intervals
	\begin{align*}
		\int_{a_i-c}^{a_i}\edgeLoad(\theta)d\theta \overset{\text{\Cref{lem:GeIsTotalFlowInGraph}}}{\leq} \int_{a_i-c}^{a_i}Ud\theta = Uc
	\end{align*}
	as well as for any short interval $[a_{i-1},a_i]$
	\begin{align*}
		\int_{a_{i-1}}^{a_i}\edgeLoad(\theta)d\theta \overset{\text{\Cref{lem:GeIsTotalFlowInGraph}}}{\leq} \int_{a_{i-1}}^{a_i}Ud\theta = (a_i-a_{i-1})U \leq Uc
	\end{align*}
	this gives us the desired bound on the sum of travel time for the whole flow, where we denote by  $K \coloneqq \set{i \in [\ceil{3U}] \mid a_i - a_{i-1} > \frac{1}{3}}$ the set of all long intervals:
	\begin{align*}
		\sumOfTraveltimes(f) 
			&\overset{\text{\Cref{lemma:SumOfTravelTimesDefWithEdgeLoad}}}{=} \int_0^{\termTime(f)}\edgeLoad(\theta)d\theta \\
			&= \sum_{i \in K}\int_{a_{i-1}}^{a_i-c}\edgeLoad(\theta)d\theta + \sum_{i \in K}\int_{a_i-c}^{a_i}\edgeLoad(\theta)d\theta + \sum_{i \in [\ceil{3U}]\setminus K}\int_{a_{i-1}}^{a_i}\edgeLoad(\theta)d\theta \\
			&\leq \sum_{i \in K}U\left(\tPmax+\frac{1}{2}\right) + \sum_{i \in [\ceil{3U}]}Uc \leq \ceil{3U}U\left(2\tPmax+1+\sum_{e\in E}(\tau_e+\frac{1}{2\nu_e})\right)\qedhere
	\end{align*}

\end{proof}

\begin{remark}
	This means that for any single-sink network any IDE flow terminates within $\BigO\big(U\tau(G)\big)$ and has total travel times in $\BigO\big(U^2\tau(G)\big)$.
\end{remark}

Since $\termTime[OPT]$ is trivially bounded below by $\theta_0+1$ and $\sumOfTraveltimes[OPT]$ by $U$ this immediately leads to the following upper bound on the \MPoA{} and \TPoA{} for IDE flows:

\begin{theorem}\label{thm:PoAUppperBound}
	For any pair of integers $(U,\tau)$ we have $\MPoA(U,\tau), \TPoA(U,\tau) \in \bigO(U\tau)$. \qed
\end{theorem}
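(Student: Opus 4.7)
The plan is to combine the quantitative upper bounds from Theorem~\ref{thm:Termination_SingleSink2} with two elementary lower bounds on the system optima and then absorb all remaining quantities into $U$ and $\tau$.

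First I would record the two trivial lower bounds on the optima that make up the denominators in \Cref{def:PoA}. Since $\tau_e \in \INs$, any particle (in particular one entering the network at time $\theta_0$) must cross at least one edge of physical length $\geq 1$, so every feasible flow satisfies $\termTime[OPT](\mathcal{N}) \geq \theta_0 + 1 \geq 1$. Analogously, each of the $U$ units of flow volume spends at least one time unit traversing some edge, so $\sumOfTraveltimes[OPT](\mathcal{N}) \geq U$. These two inequalities hold for any instance and do not depend on the choice of $(U,\tau)$.

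Next I would simplify the quantities appearing in the numerators provided by Theorem~\ref{thm:Termination_SingleSink2} purely in terms of $\tau = \sum_e \tau_e$. Because $\tau_e, \nu_e \geq 1$, one has $|E| \leq \sum_e \tau_e = \tau$, hence $\sum_{e\in E}(\tau_e + \tfrac{1}{2\nu_e}) \leq \tau + |E|/2 \leq \tfrac{3}{2}\tau$, and clearly $\tPmax \leq \tau$. Plugging these estimates into the two bounds of Theorem~\ref{thm:Termination_SingleSink2} yields $\termTime[IDE](\mathcal{N}) \leq \theta_0 + \bigO(U\tau)$ and $\sumOfTraveltimes[IDE](\mathcal{N}) \leq \lceil 3U\rceil U \cdot \bigO(\tau) = \bigO(U^2\tau)$.

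Dividing gives $\TPoA(U,\tau) \in \bigO(U\tau)$ immediately, since the $U$ in the denominator cancels one factor of $U$. For $\MPoA$ the only slight subtlety is that $\theta_0$ appears in the numerator but not among the parameters $(U,\tau)$; however, the function $\theta_0 \mapsto (\theta_0 + A)/(\theta_0 + 1)$ is non-increasing on $[0,\infty)$ whenever $A \geq 1$, so the supremum is attained as $\theta_0 \to 0$, giving $\MPoA(U,\tau) \leq \bigO(U\tau)$ as well. Taking suprema over all instances with parameters $(U,\tau)$ then finishes both bounds. There is no real obstacle beyond checking this monotonicity step — all the genuine work has already been done in Theorem~\ref{thm:Termination_SingleSink2}.
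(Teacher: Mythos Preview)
Your proposal is correct and follows essentially the same approach as the paper: the paper's entire argument is the one sentence preceding the theorem, namely that $\termTime[OPT] \geq \theta_0+1$ and $\sumOfTraveltimes[OPT] \geq U$, after which the bounds of Theorem~\ref{thm:Termination_SingleSink2} give the result directly. Your write-up is actually more careful than the paper's, since you make explicit the estimates $\sum_e(\tau_e+\tfrac{1}{2\nu_e}) \leq \tfrac{3}{2}\tau$ and $\tPmax \leq \tau$, and you handle the dependence on $\theta_0$ in the makespan ratio via the monotonicity of $\theta_0 \mapsto (\theta_0+A)/(\theta_0+1)$, which the paper leaves implicit.
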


%!TEX root = ../article-PoA.tex

\section{Lower Bounds}\label{sec:LowerBounds}

It is easy to see that a general bound for the makespan cannot be better than $\BigO(U+\tau(G))$, since any feasible flow in the network consisting of one source node with an inflow rate of $U$ over the interval $[0,1]$, one sink node and a single edge between the two nodes with capacity $1$ and some travel time $\tau$ terminates by $U+\tau(G)$. Similarly, an upper bound on the total travel times certainly cannot be better than $\BigO(U(U+\tau(G)))$.
In the following, we will construct a family of instances, parameterized by $K,L \in \INs$, that provide non-trivial lower bounds on the makespan and total travel times
in single-sink networks of order $\Omega(U\cdot \log(\tau(G)))$ and $\Omega(U^2 \cdot\log(\tau(G)))$, respectively.

For any given pair of positive integers $K,L \in \INs$ the instance is of the form sketched in \Cref{fig:SlowTermGraphIntuitive}, with $u_{3^K+1}$ as source node and $t$ as its sink. The graph has a ``width'' (i.e. length of the horizontal paths from $u_1$ to $u_{3^K+1})$ of $3^{K+1}$ and a ``height'' (length of the vertical paths from nodes $u_i, v_i$ and $w_i$ to $t$) of $\approx K3^{K+1}$. All edges on the horizontal path (including the one edge back to $u_1$) have a capacity of $2U$ with $U \approx L3^{K+1}$, while all the edges on the vertical paths have capacities of either $1$ or $3$. 

\begin{figure}[ht!]\centering
	\begin{adjustbox}{max width=\textwidth}
		\begin{tikzpicture}[font=\LARGE]%\tikzstyle{edge} = [draw,-{>[scale=1.7]},thick]
	%\useasboundingbox (-2.5,7.5) rectangle (40.5,-14.5);
	
	\node[namedVertex](u1) at (0,0) {$u_1$};
	\node[namedVertex](v1) at (2,0) {$v_1$};
	\node[namedVertex](w1) at (4,0) {$w_1$};
	\node[namedVertex](u2) at (6,0) {$u_2$};
	\node[namedVertex](v2) at (8,0) {$v_2$};
	\node[namedVertex](w2) at (10,0) {$w_2$};
	\node[namedVertex](u3) at (12,0) {$u_3$};
	\node[namedVertex](v3) at (14,0) {$v_3$};
	\node[namedVertex](w3) at (16,0) {$w_3$};
	\node[namedVertex](u4) at (18,0) {$u_4$};
	\node() at (20,0) {$\dots$};
	\node[namedVertex](u3k) at (28,0) {$u_{3^K}$};
	\node[namedVertex](v3k) at (30,0) {$v_{3^K}$};
	\node[namedVertex](w3k) at (32,0) {$w_{3^K}$};
	\node[namedVertex](u3k1) at (36,0) {$u_{3^K+1}$};

	\node[namedVertex](u1s) at (0,-2) {$u_1'$};
	\node[namedVertex](v1s) at (2,-2) {$v_1'$};
	\node[namedVertex](w1s) at (4,-2) {$w_1'$};
	\node[namedVertex](u2s) at (6,-2) {$u_2'$};
	\node[namedVertex](v2s) at (8,-2) {$v_2'$};
	\node[namedVertex](w2s) at (10,-2) {$w_2'$};
	\node[namedVertex](u3s) at (12,-2) {$u_3'$};
	\node[namedVertex](v3s) at (14,-2) {$v_3'$};
	\node[namedVertex](w3s) at (16,-2) {$w_3'$};

	\node[namedVertex](u3ks) at (28,-2) {$u_{3^K}'$};
	\node[namedVertex](v3ks) at (30,-2) {$v_{3^K}'$};
	\node[namedVertex](w3ks) at (32,-2) {$w_{3^K}'$};

	\node[namedVertex](x1) at (2,-4) {$x_1$};
	\node[namedVertex](x2) at (8,-4) {$x_2$};
	\node[namedVertex](x3) at (14,-4) {$x_3$};

	\node[namedVertex](x3k) at (30,-4) {$x_{3^K}$};
	\node[namedVertex](x3k1) at (36,-4) {$x_{3^K+1}$};

	\node[namedVertex](y1) at (2,-6) {$y_1$};
	\node[namedVertex](y2) at (8,-6) {$y_2$};
	\node[namedVertex](y3) at (14,-6) {$y_3$};
	
	\node[namedVertex](y3k) at (30,-6) {$y_{3^K}$};

	\node[namedVertex](z1) at (8,-8) {$z_1$};
	
	\node[namedVertex](z2) at (20,-8) {$z_2$};
	
	\node[namedVertex](z3) at (24,-8) {$z_3$};
	
	\node[namedVertex](a1) at (8,-10) {$a_1$};
	
	\node[namedVertex](a2) at (20,-10) {$a_2$};

	\node[namedVertex](a3) at (24,-10) {$a_3$};	
	
	\node[namedVertex](b1) at (20,-12) {$b_1$};

	%\node[namedVertex](z3k) at (30,-8) {$z_{3^{K-1}}$};

	\node[namedVertex](t) at (25,-16) {$t$};
	
	\path[edge] (u1) -- (u1s);
	\path[edge] (v1) -- (v1s);
	\path[edge] (w1) -- (w1s);
	\path[edge] (u2) -- (u2s);
	\path[edge] (v2) -- (v2s);
	\path[edge] (w2) -- (w2s);
	\path[edge] (u3) -- (u3s);
	\path[edge] (v3) -- (v3s);
	\path[edge] (w3) -- (w3s);
	\path[edge,dashed] (u4) -- +(0,-1);
	\path[edge] (u3k) -- (u3ks);
	\path[edge] (v3k) -- (v3ks);
	\path[edge] (w3k) -- (w3ks);
	\path[edge] (u1) -- (v1);
	\path[edge] (v1) -- (w1);
	\path[edge] (w1) -- (u2);
	\path[edge] (u2) -- (v2);
	\path[edge] (v2) -- (w2);
	\path[edge] (w2) -- (u3);
	\path[edge] (u3) -- (v3);
	\path[edge] (v3) -- (w3);
	\path[edge] (w3) -- (u4);
	\path[edge] (u3k) -- (v3k);
	\path[edge] (v3k) -- (w3k);
	\path[edge] (w3k) -- (u3k1);

	\path[edge] (u1s) -- (x1);
	\path[edge] (v1s) -- (x1);
	\path[edge] (w1s) -- (x1);
	\path[edge] (u2s) -- (x2);
	\path[edge] (v2s) -- (x2);
	\path[edge] (w2s) -- (x2);
	\path[edge] (u3s) -- (x3);
	\path[edge] (v3s) -- (x3);
	\path[edge] (w3s) -- (x3);
	\path[edge] (u3ks) -- (x3k);
	\path[edge] (v3ks) -- (x3k);
	\path[edge] (w3ks) -- (x3k);

	\path[edge] (u3k1) -- (x3k1);
	
	\path[edge] (x1) -- (y1);
	\path[edge] (x2) -- (y2);
	\path[edge] (x3) -- (y3);	
	
	\path[edge] (x3k) -- (y3k);
	
	\path[edge] (y1) -- (z1);
	\path[edge] (y2) -- (z1);		
	\path[edge] (y3) -- (z1);
	
	\path[edge,<-,dashed] (z2) -- +(-1,1);
	\path[edge,<-,dashed] (z2) -- +(0,1);
	\path[edge,<-,dashed] (z2) -- +(1,1);	

	\path[edge,<-,dashed] (z3) -- +(-1,1);
	\path[edge,<-,dashed] (z3) -- +(0,1);
	\path[edge,<-,dashed] (z3) -- +(1,1);

	%\path[edge] (y3k) -- (z3k);	

%	\path[edge] (u3k1) to [bend right=135] (u1);	

	\path[edge,dashed] (u4) -- +(1,0);
	\path[edge,dashed] (u4) -- +(1,.5);
	\path[edge,dashed] (u4) -- +(1,1);
	\path[edge,dashed,<-] (u3k) -- +(-1,0);
	\path[edge,dashed,<-] (u3k) -- +(-1,.5);
	\path[edge,dashed,<-] (u3k1) -- +(-1,1);

	\node[](temp) at (17,6) {};
	\path[edge,-] (u3k1) to [out=110,in=0] (temp.center);
	\path[edge] (temp.center) to [out=180,in=70] (u1);
	
	\path[edge] (u1) to [bend left=30] (u2);
	\path[edge] (u2) to [bend left=30] (u3);
	\path[edge] (u3) to [bend left=30] (u4);
	\path[edge] (u3k) to [bend left=30] (u3k1);
	
	\path[edge] (u1) to [bend left=40] (u4);
	
	\path[edge] (z1) -- (a1);
	\path[edge] (z2) -- (a2);
	\path[edge] (z3) -- (a3);

	\path[edge] (a1) -- (b1);
	\path[edge] (a2) -- (b1);
	\path[edge] (a3) -- (b1);
	
	\path[edge,dashed] (b1) -- (t);
%	\path[edge,dashed] (x2) --node[left]{$P_2$} (t);
%	\path[edge,dashed] (x3) --node[left]{$P_3$} (t);
	\path[edge,dashed] (y3k) -- (t);
	\path[edge,dashed] (x3k1) to[bend left=30] (t);
	
	\draw [decorate,decoration={brace,amplitude=20pt},xshift=-4pt,yshift=0pt,line width=2pt]
		(-.5,-4.5) -- (-.5,5) node [midway,xshift=-1.5cm] 
		{\Huge $C_K$};

	\draw [decorate,decoration={brace,amplitude=20pt},xshift=-4pt,yshift=0pt,line width=2pt]
		(-.5,-15) -- (-.5,-5) node [midway,xshift=-1.5cm] 
		{\Huge $B_K$};

\end{tikzpicture}
	\end{adjustbox}
	\caption{A network with a total edge travel time of $\approx 3^{2K}$ that, given an inflow volume of $\approx L3^{K+1}$, has a makespan of more than $KL3^{K+1}$.}\label{fig:SlowTermGraphIntuitive}
\end{figure}
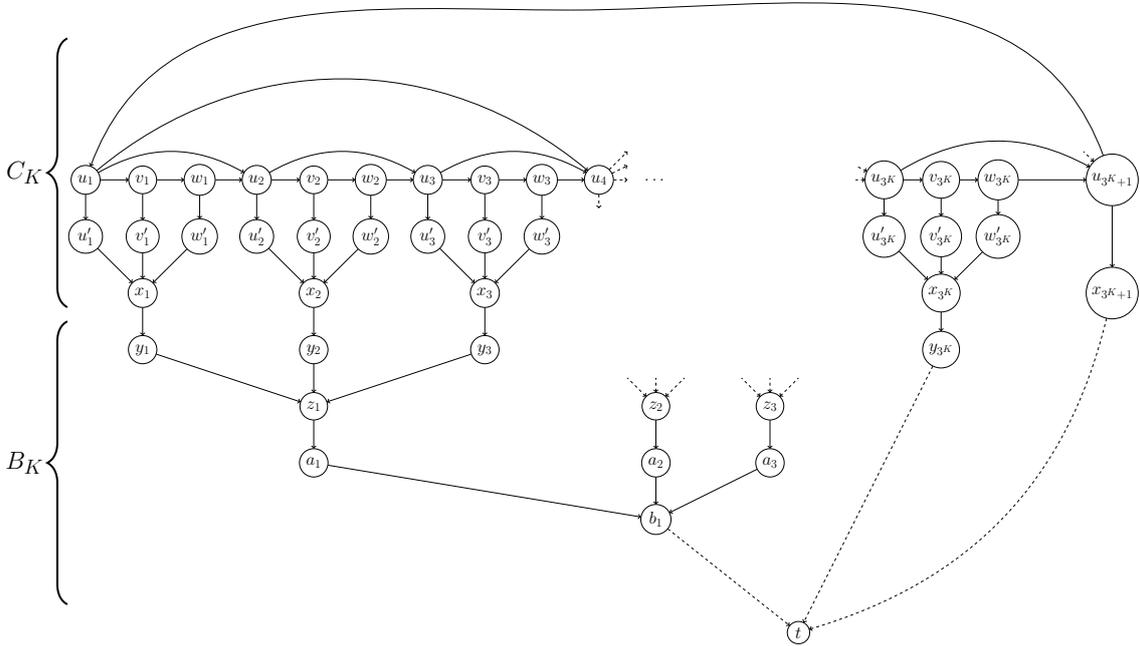 

If we let flow enter at node $u_{3^K+1}$ at a rate of $2U$ over the interval $[-0.5,0]$, we will observe the following behavior: 
At first, all flow enters the direct downwards path towards the sink $t$ until a queue of length $1$  has built up on the first edge of this path. After that almost all flow will enter the edge towards $u_1$ and some flow will go downwards to keep the queue length constant. Assuming $U \gg 1$, most of the flow will travel towards $u_1$ and arrive there one time step later with a slightly lower inflow rate and over a slightly shorter interval as at $u_{3^K+1}$.
At $u_1$ the same flow split happens again: First all flow enters the edge to $u'_1$ until a queue of sufficient length to induce a waiting time of $1$ has formed, then most of flow travels towards the next node $v_1$. Similarly, at all the following nodes on the horizontal path, this pattern repeats, i.e., a small amount of flow (of volume $\approx 3$) starts traveling downwards while most of the flow is diverted further to the right. Finally the main block of flow arrives at $u_{3^K+1}$ and is diverted back to $u_1$ (having lost a total volume of $\approx 3^{K+1}$ to the downwards paths). By the time this flow arrives at $u_1$ again, the flow particles that traveled along the edges $u_1u'_1$, $v_1v'_1$ and $w_1w'_1$ join up again at node $x_1$ in such a way that they form a queue of length $\approx 3$ on the edge $y_1z_1$.
This queue is long enough to divert the main block of flow away towards $u_2$ (over the direct edge $u_1u_2$ of length $3$). This pattern, again, repeats at all subsequent nodes $u_i$ until the main flow finally arrives at node $u_{3^K+1}$ (having lost no additional flow volume) and is again diverted back to $u_1$. This time, the flow particles from the queues on the edges $y_1z_1, y_2z_1$ and $y_3z_1$ met at node $z_1$ and now form a queue of length $\approx 9$ on edge $a_1b_1$. Thus, our main flow can now be diverted away directly to node $u_4$ and so on. 
This way, the main amount of flow can travel along the horizontal path for $\approx K$ times without losing a significant amount of flow until all the flow on the vertical paths finally reaches the sink $t$. After that the pattern described until now repeats. Thus, flow remains in the network until at least time $\approx 3^{K+1} K \frac{U}{3^{K+1}} = 3^{K+1} K L$. From this we obtain the following theorem.

\begin{theorem}\label{ex:SlowTerminationExample}\label{thm:SlowTermination}
	Given any pair $K,L \in \INs$, there exists an instance $G_{K,L}$ with $\tau(G_{K,L}) \in \BigO(3^{2K})$ and $U_{K,L} \in \BigO(L3^{K})$ such that there exists an IDE flow that does not terminate before $L(K+1)(3^{K+1}+1)$ and has total travel times of at least $ K3^{2K+3}\frac{1}{2L}\big((L-1)^3-(L-1)\big)$. 
\end{theorem}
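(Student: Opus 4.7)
The plan is to exhibit the instance $G_{K,L}$ concretely and then construct an IDE flow on it whose behavior closely matches the informal description preceding the theorem. First I would fix all edge parameters of Figure~\ref{fig:SlowTermGraphIntuitive}: every edge on the ``horizontal'' path $u_1,v_1,w_1,u_2,\dots,u_{3^K+1}$ together with the return edge $u_{3^K+1}u_1$ is given capacity $2U$ (with $U\approx L\cdot 3^{K+1}$) and travel time $1$; the direct ``shortcut'' edges $u_iu_{i+1}$, $u_iu_{i+3}$, $u_iu_{i+9},\dots$ internal to the cycle are given travel time equal to the physical distance they span so that they have the same physical length as the path they skip; and all vertical edges $u_iu_i',v_iv_i',\dots,x_jy_j,y_jz_k,z_ka_k,a_kb_\ell,\dots,b t$ of the blocking gadget $B_K$ have capacity $1$ or $3$ as indicated, with travel times carefully chosen so that the physical distance from every node $u_i,v_i,w_i$ to $t$ is the same constant $\approx K\cdot 3^{K+1}$. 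A routine count then gives $\tau(G_{K,L})\in\mathcal{O}(3^{2K})$ and, by choosing the inflow $u_{u_{3^K+1}}\equiv 2U$ on $[-\tfrac12,0]$, a total volume $U_{K,L}\in\mathcal{O}(L\cdot 3^K)$.

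Next I would define the candidate IDE flow phase by phase, parametrised by a ``lap counter'' $p\in\{0,1,\dots,KL\}$ counting how often the main block has circled the horizontal path, and a ``level'' $\kappa(p)\in\{0,1,\dots,K\}$ counting how many layers of the blocking gadget have already built up queues. During lap $p$, at every horizontal node the main block splits: a small rate is sent downward into the blocking gadget in exactly the amount needed to keep the relevant queue at length~$1$ (so that the waiting time on the downward edge equals $1$, which is the slack to the alternative of staying on the horizontal path), and the remaining rate of $\approx 2U-3^{\kappa(p)+1}$ continues along the cycle. Simultaneously, the flow that was shed in earlier laps meets at the nodes $x_i,z_j,b_\ell,\dots$ and builds exactly the queue lengths $\approx 3^{\kappa(p)+1}$ on the next layer down, which is what is required in lap $p+1$ to make the ``skip'' edges $u_iu_{i+3^{\kappa(p)+1}}$ and the horizontal path equally attractive at every decision node. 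I would write these in- and outflow functions in closed form in terms of $p,\kappa(p)$ and check feasibility (flow conservation and queues-operate-at-capacity) directly.

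The heart of the proof is then the inductive verification that this construction is actually an IDE, i.e.\ that at every decision node $u_i$ the instantaneous labels $\ell_\cdot(\theta)$ coincide along exactly the edges that the flow uses. For this I would prove an invariant by induction on $p$: at the moment the main block arrives at $u_i$ during lap $p$, the queue on the relevant downward edge (or skip edge) has exactly the length that equalises $\ell_{u_i}(\theta)$ along the two competing alternatives, and the labels on all other edges are strictly larger. Because the physical distance from every horizontal node to $t$ is the same constant, only the queue profile matters for these comparisons, and the invariant propagates cleanly. This phase can continue until one of the blocking layers runs out of queued flow; by design that happens only after $\Theta(L)$ laps per level and $K$ levels, giving the claimed non-termination until $L(K+1)(3^{K+1}+1)$.

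Finally, the total travel time bound follows by invoking Lemma~\ref{lemma:SumOfTravelTimesDefWithEdgeLoad} and integrating $F^\Delta(\theta)$: during lap $p$ the edge load is at least the main block volume, which is bounded below by something like $2U - (p+1)\cdot 3^{\kappa(p)+1}$, and summing over the $KL$ laps yields the cubic expression $K\cdot 3^{2K+3}\cdot\tfrac{1}{2L}\bigl((L-1)^3-(L-1)\bigr)$ after the standard identity $\sum_{j=1}^{L-1} j(L-j) = \tfrac16(L-1)L(L+1)$-type manipulation. The main obstacle I expect is the bookkeeping at the confluence nodes $x_j,z_k,b_\ell$: I must verify that the superposition of the small rates that were shed on many \emph{different} laps combines, after the differing travel times down the blocking gadget, into exactly the constant queue length required at the time the main block next passes. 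This requires a precise alignment of the physical travel times in $B_K$ against the cycle length $3^{K+1}$, and it is the one place where the hierarchical $3$-fold branching of the blocking gadget (rather than, say, $2$-fold) is essential.
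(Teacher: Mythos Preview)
Your outline follows the paper's overall shape, but there is a structural misunderstanding of the flow dynamics that would make your invariant fail. You describe every lap identically: at each horizontal node the main block sheds a small rate downward and the rest continues. If that happened on all $\approx KL$ laps, the block would lose volume of order $3^K$ per lap and be exhausted after only order-$L$ laps, not the required order-$KL$. The actual construction alternates two distinct phase types in a $1{:}K$ ratio, repeated $L$ times: one \emph{charging} lap, in which the block traverses all $3^{K+1}$ unit-length horizontal edges and sheds $\approx 3$ units at each node (total loss $\approx 3^{K+2}$), followed by $K$ \emph{cycling} laps, in which the block uses the length-$3^k$ skip edges $u_{(j-1)3^k+1}u_{j3^k+1}$ for $k=0,\dots,K-1$ and sheds essentially nothing---the skip edge is the unique active choice precisely because the flow shed during the preceding charging lap has by then propagated to level $k$ of the blocking gadget and produces waiting time $\geq 3.5\cdot 3^k$ on every competing downward path.

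Accordingly, the invariant that drives the proof is not lap-indexed; it is the ``$(K,k,j,\theta_0)$-blocking property'' of the paths $P_j^k$ inside $B_K$, established by induction over the recursive build of $B_{K,k}$ from $3^{k-1}$ copies of $B_{K,1}$ on top of one copy of $B_{K,k-1}$, and giving two-sided bounds on the waiting time along $P_j^k$ in each relevant time window. Your ``main obstacle'' is therefore mis-diagnosed: the alignment at the confluence nodes concerns superposing the outflows from three \emph{adjacent} horizontal nodes shed during the \emph{same} charging lap (arriving one time unit apart and summing to the staircase inflow of Observation~\ref{obs:StepFlowInto1Edge}), not contributions from many different laps. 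The travel-time bound then sums over the $L$ macro-iterations, with block volume $\approx 2U-\ell\cdot 3^{K+2}$ and pulse width $\approx \tfrac12-\tfrac{\ell}{2L}$ at iteration $\ell$; multiplying by the iteration length $(K{+}1)(3^{K+1}{+}1)$ and summing $\sum_{\ell}(L-\ell)(L-1-\ell)$ gives the stated cubic in $L$.
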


	\newcommand{\inU}[2]{\overline{\mathrm{in}}_{#1}^{#2}}
	\newcommand{\inL}[2]{\underline{\mathrm{in}}_{#1}^{#2}}
	\newcommand{\outU}[2]{\overline{\mathrm{out}}_{#1}^{#2}}
	\newcommand{\outL}[2]{\underline{\mathrm{out}}_{#1}^{#2}}
	\newcommand{\queueU}[2]{\overline{q}_{#1}^{#2}}
	\newcommand{\queueL}[2]{\underline{q}_{#1}^{#2}}
	
	Before we can construct the instances for the formal proof of \Cref{thm:SlowTermination} we first need two observations. The first one describes the flow split at the nodes $u_i$, $v_i$ and $w_i$ on the horizontal path:
	
	\begin{obs}\label{obs:SlowTermFlowSplit}
		Given a graph as in \Cref{fig:SlowTermSplit} with $\tau(P_v) = \tau(P_u)+1$, a flow $f$ and some time $\theta_0$ such that there are no queues on the paths $P_u$, $P_v$ between $\theta_0-x$ and $\theta_0$, no flow on the edges $uv$ and $uu'$ at time $\theta_0$, and a constant edge outflow rate from edge $e$ of $f^-_e(\theta) = y$ over the interval $[\theta_0-x, \theta_0]$, where $y > \nu_{uu'}$ and $0.5 \geq x \geq \varepsilon \coloneqq \frac{\nu_{uu'}}{y-\nu_{uu'}}$. Then the flow split described in \Cref{fig:SlowTermSplit} is an IDE. Note, that even if the outflow of edge $e$ is not exactly the one given in \Cref{fig:SlowTermSplit}, but still bounded below by it and has its support in $[\theta_0-0.5,\theta_0]$, then the inflows into $uv$ and $P_u$ are also bounded below by the inflow functions given in \Cref{fig:SlowTermSplit} and their supports are in $[\theta_0-0.5,\theta_0]$ and $[\theta_0+1-0.5,\theta_0+2]$, respectively.
		
		\begin{minipage}{.5\textwidth}\vspace{1em}
			\begin{adjustbox}{max width=\textwidth}
				\begin{tikzpicture}
	\useasboundingbox (-5.5,4) rectangle (4.5,-8);
	
	\node[namedVertex] (u) at (0,0) {$u$};
	\node[namedVertex] (v) at (3,0) {$v$};
	\node[namedVertex] (us) at (0,-3) {$u'$};
	\node[namedVertex] (t) at (1.5,-7) {$t$};
	
	\path[edge,<-] (u) --node[above](e){$e$} +(-2,0);
	\path[edge] (u) --node[above](uv){$\nu_{uv}\geq y$}node[below]{$\tau_{uv}=1$} (v);
	\path[edge] (u) --node[above,sloped](uus){$\nu_{uu'}=1$}node[below,sloped]{$\tau_{uu'}=1$} (us);
	
	\path[edge,dashed] (us) -- node[left](Pu){$P_u$} (t);
	\path[edge,dashed] (v) -- node[right]{$P_v$} (t);
	
	\node() at ($(e)+(-2,2)$) {\begin{tikzpicture}[anchor=center,scale=.8,solid,black,
		declare function={
			f(\x)= and(\x>=1, \x<2) * 8;
		}
		]
		\begin{axis}[xmin=.5,xmax=2.5,ymax=9, ymin=-.5, samples=500,width=5cm,height=5cm,
					xtick={1,2},xticklabels={$\theta_0-x$,$\theta_0$},
					ytick={0,8},yticklabels={$0$,$y$}]
			\addplot[blue, ultra thick,domain=0:3] {f(x)};
		\end{axis}
		\node[blue]() at (2,2) {$f^-_e$};
	\end{tikzpicture}};
	\node() at ($(uus)+(-3.5,-.5)$) {\begin{tikzpicture}[anchor=center,scale=.8,solid,black,
		declare function={
			f(\x)= and(\x>=1, \x<1.2) * 8 + and(\x>=1.2,\x<2)*2;
		}
		]
		\begin{axis}[xmin=.5,xmax=2.5,ymax=8.5, ymin=-.5, samples=500,width=5cm,height=5cm,
					xtick={1.2,2},xticklabels={$\theta_0-x+\varepsilon$,$\theta_0$},
					ytick={0,2,8},yticklabels={$0$,$\nu_{uu'}$,$y$}]
			\addplot[blue, ultra thick,domain=0:3] {f(x)};
		\end{axis}
		\node[blue]() at (2,2) {$f^+_{uu'}$};
	\end{tikzpicture}};
	\node() at ($(uv)+(-.5,2)$) {\begin{tikzpicture}[anchor=center,scale=.8,solid,black,
		declare function={
			f(\x)= and(\x>=1.2,\x<2)*6;
		}
		]
		\begin{axis}[xmin=.5,xmax=2.5,ymax=8.5, ymin=-.5, samples=500,width=5cm,height=5cm,
					xtick={1.2,2},xticklabels={$\theta_0-x+\varepsilon$,$\theta_0$},
					ytick={0,6},yticklabels={$0$,$y-\nu_{uu'}$}]
			\addplot[blue, ultra thick,domain=0:3] {f(x)};
		\end{axis}
		\node[blue]() at (2,2) {$f^+_{uv}$};
	\end{tikzpicture}};
	\node() at ($(Pu)+(-3.1,-.5)$) {\begin{tikzpicture}[anchor=center,scale=.8,solid,black,
		declare function={
			f(\x)= and(\x>=1,\x<4)*2;
		}
		]
		\begin{axis}[xmin=.5,xmax=4.5,ymax=4.5, ymin=-.5, samples=500,width=7cm,height=3.5cm,
					xtick={1,4},xticklabels={$\theta_0-x+1$,$\theta_0+2$},
					ytick={0,2},yticklabels={$0$,$\nu_{uu'}$}]
			\addplot[blue, ultra thick,domain=0:4.5] {f(x)};
		\end{axis}
		\node[blue]() at (2,1.4) {$f^+_{P_u}$};
	\end{tikzpicture}};
\end{tikzpicture}
			\end{adjustbox}
		\end{minipage}%
		\begin{minipage}{.47\textwidth}
			
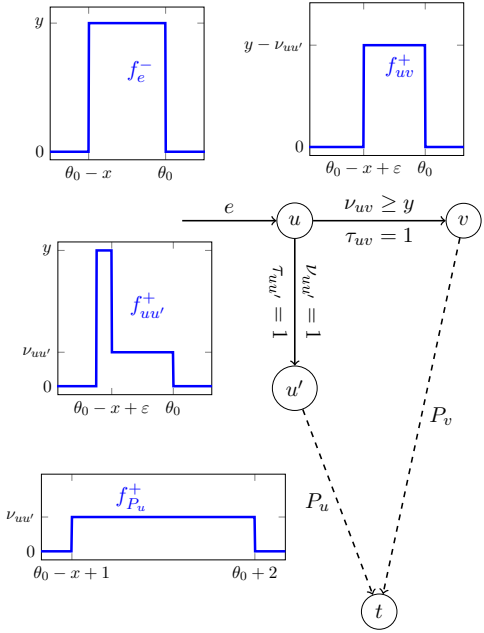
\captionof{figure}{Given the outflow rate of edge $e$ and no flow on any other edge at time $\theta_0-x$, the following flow split is an IDE: At first all flow arriving at $u$ enters the edge towards $u'$, starting to built up a queue on this edge (since $y > \nu_{uu'}$). By time $\theta_0-x+\varepsilon$ the queue reaches a length of $\nu_{uu'}$ and, thus, the path $uv,P_v$ has now the same instantaneous travel time as the path $uu',P_u$. Therefor, from now on flow enters the edge towards $u'$ at rate $\nu_{uu'}$ (to keep the queue at the current length) and the rest of the flow travels towards $v$.}\label{fig:SlowTermSplit}
		\end{minipage}
	\end{obs}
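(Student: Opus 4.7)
The plan is to verify directly that the flow pattern prescribed in \Cref{fig:SlowTermSplit} is feasible and satisfies the IDE condition \eqref{eq:DefIDE-OnlyUseSP} on the interval $[\theta_0 - x, \theta_0]$. Since the inflow to $e$ is given and the paths $P_u, P_v$ are assumed to remain queue-free, the only choices to check are those of a particle arriving at $u$. Everything reduces to comparing the instantaneous lengths of the only two $u$-$t$ paths, one using $uu'$ then $P_u$, the other using $uv$ then $P_v$.

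First, I would compute the queue length $q_{uu'}(\theta)$ under the prescribed inflow using the queue-at-capacity rule \eqref{eq:FeasibleFlow-QueueOpAtCap}. Starting from $q_{uu'}(\theta_0 - x) = 0$, the excess inflow rate over capacity is $y - \nu_{uu'} > 0$ on $[\theta_0 - x, \theta_0 - x + \varepsilon]$, so direct integration yields $q_{uu'}(\theta) = (y - \nu_{uu'})(\theta - (\theta_0 - x))$, which reaches exactly $\nu_{uu'}$ at $\theta = \theta_0 - x + \varepsilon$ (this is precisely how $\varepsilon$ was defined). Afterwards the prescribed inflow equals $\nu_{uu'}$ and outflow equals $\nu_{uu'}$, so $q_{uu'} \equiv \nu_{uu'}$ on $[\theta_0 - x + \varepsilon, \theta_0]$. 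On $uv$, the prescribed inflow $y - \nu_{uu'}$ stays below $\nu_{uv} \geq y$, so $q_{uv} \equiv 0$ throughout.

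Next I would verify the IDE condition at $u$. Since $P_u$ and $P_v$ are queue-free on the interval by hypothesis, the instantaneous travel times of the two $u$-$t$ paths at time $\theta$ are
\[
\ell^{uu'}(\theta) = 1 + \tfrac{q_{uu'}(\theta)}{\nu_{uu'}} + \tau(P_u), \qquad \ell^{uv}(\theta) = 1 + \tau(P_v) = \tau(P_u) + 2,
\]
where we used $\tau(P_v) = \tau(P_u) + 1$. Substituting the queue values from the previous step gives $\ell^{uu'} < \ell^{uv}$ strictly on $(\theta_0 - x, \theta_0 - x + \varepsilon)$ and $\ell^{uu'} = \ell^{uv}$ on $[\theta_0 - x + \varepsilon, \theta_0]$. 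Hence in the first sub-phase $uu'$ is the unique active edge of $u$, so routing all inflow into it is consistent with \eqref{eq:DefIDE-OnlyUseSP}; in the second sub-phase both edges are active, so the split into rates $\nu_{uu'}$ and $y - \nu_{uu'}$ is also admissible. Flow conservation at $u$ is immediate since these rates sum to $y$ in both phases.

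For the final addendum I would argue by monotonicity: if $\tilde f^-_e \geq f^-_e$ pointwise with support in $[\theta_0 - 0.5, \theta_0]$, then applying the same rule ``fill $uu'$ until its queue reaches $\nu_{uu'}$, then spill the excess into $uv$'' produces inflow rates into $uu'$ and $uv$ that are pointwise at least those of \Cref{fig:SlowTermSplit}. \Cref{lemma:FlowOnEdgesLeavesFastVar} applied to $uu'$ then propagates the lower bound to the inflow into $P_u$, with the claimed supports arising from the worst-case drainage time $1/\nu_{uu'} \leq 1$ plus the transit time $\tau_{uu'}=1$. The main obstacle is the meticulous bookkeeping at the regime change $\theta_0 - x + \varepsilon$: one must confirm that the queue hits exactly $\nu_{uu'}$ there (not merely reaches some positive value), because it is this exact coincidence that makes the two paths simultaneously active and turns the split from ``merely feasible'' into ``IDE-compatible''.
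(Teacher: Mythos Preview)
Your proposal is correct and follows exactly the same reasoning as the paper, which presents this as an observation with the justification embedded in the figure caption rather than a separate proof. You have simply expanded the caption's sketch (queue grows at rate $y-\nu_{uu'}$ until it hits $\nu_{uu'}$ at time $\theta_0-x+\varepsilon$, making the two paths equally long, after which the split keeps the queue constant) into a careful verification, and your monotonicity argument for the addendum is a reasonable formalization of what the paper leaves implicit.
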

	
	The second observation describes the evolution of the queues on the vertical paths (except for the one on the top most edges, e.g. $u_iu'_i$):
	
	\begin{obs}\label{obs:StepFlowInto1Edge}
		Given an edge $e$ with $\nu_e=\tau_e=1$, a time $\theta_0$ with $\edgeLoad[e](\theta_0)=0$ and some $k \in \INo$. If between $\theta_0$ and $\theta_0+14\cdot 3^k$ the edge inflow rate $f_e^+$ into $e$ is bounded by the two functions $\inL{\theta_0}{k}$ and $\inU{\theta_0}{k}$, i.e. the blue continuous and the red dashed graph in the topmost diagram of \Cref{fig:SlowTermEdgeWithCap1}, then the queue length on this edge is bounded by $\queueL{\theta_0}{k}$ and $\queueU{\theta_0}{k}$ over this time interval (blue continuous and red dashed line in the bottom diagram of \Cref{fig:SlowTermEdgeWithCap1}) and the edge outflow rate $f_e^-$ by $\outL{\theta_0}{k}$ and $\outU{\theta_0}{k}$ (blue continuous and red dashed line in the middle diagram of \Cref{fig:SlowTermEdgeWithCap1}) over the interval $[\theta_0,\theta_0+14\cdot3^k+1]$.
		
		\begin{figure}[ht!]\centering
			\begin{adjustbox}{max width=.9\textwidth}
				\begin{tikzpicture}[font=\LARGE]
	\node[namedVertex](1) at (0,0) {$v_1$};
	\node[namedVertex](0) at (0,-6) {$v_0$};
	
	\path[edge,ultra thick] (1) --node[below,sloped](e){$\nu_e=1$}node[above,sloped](e){$\tau_e=1$} (0);
	
	\node() at ($(1)+(-16,-.5)$) {\begin{tikzpicture}[anchor=center,scale=.8,solid,black,
		declare function={
			f(\x)= and(\x>=1, \x<2) * 1 + and(\x>=2, \x<=3) * 2 + and(\x>=3, \x<4) * 3  + and(\x>=4, \x<5) * 2  + and(\x>=5, \x<6) * 1;
			g(\x)= and(\x>=.5, \x<1.5) * 1 + and(\x>=1.5, \x<=2.5) * 2 + and(\x>=2.5, \x<5) * 3  + and(\x>=5, \x<6) * 2  + and(\x>=6, \x<7) * 1;
		}
		]
		\begin{axis}[xmin=0,xmax=8,ymax=4, ymin=-1, samples=500,width=12cm,height=7cm,
					xtick={1,2,3,4,5,6},xticklabels={$+3^k$,$+2\cdot3^k$,$+3\cdot3^k$,$+4\cdot3^k$,$+5\cdot 3^k$,$+6\cdot3^k$},
					x tick label style={yshift={mod(\ticknum,2)*(-1.5em)}},
					ytick={0,1,2,3},yticklabels={$0$,$1$,$2$,$3$},
					x tick label style={color=blue},
					axis x line*=bottom]
			\addplot[blue, ultra thick,domain=0:8] {f(x)};
		\end{axis}
		\begin{axis}[xmin=0,xmax=8,ymax=4, ymin=-1, samples=500,width=12cm,height=7cm,
				xtick={0.5,1.5,2.5,5,6,7},xticklabels={$+0.5\cdot3^k$,$+1.5\cdot3^k$,$+2.5\cdot3^k$,$+5\cdot 3^k$,$+6\cdot3^k$,$+7\cdot3^k$},
				x tick label style={yshift={mod(\ticknum,2)*(1.5em)}},
				axis x line*=top,
				hide y axis,
				x tick label style={color=red},
				xlabel near ticks]
			\addplot[red, ultra thick,domain=0:8,dashed] {g(x)};
		\end{axis}
		\begin{axis}[xmin=0,xmax=8,ymax=4, ymin=-1, samples=500,width=12cm,height=7cm,
				xtick={0},xticklabels={$\theta_0$},
				axis x line*=bottom,
				hide y axis,
				x tick label style={color=black}]
		\end{axis}
		
		\node[red]() at (1,3.5) {$\inU{\theta_0}{k}$};
		\node[blue]() at (4,2) {$\inL{\theta_0}{k}$};
	\end{tikzpicture}};

	\node() at ($(0)+(-11,-1)$) {\begin{tikzpicture}[anchor=center,scale=.8,solid,black,
		declare function={
			f(\x)= and(\x>=1.25, \x<=10.25) * 1;
			g(\x)= and(\x>=0.75, \x<=14.25) * 1;
		}
		]
		\begin{axis}[xmin=0,xmax=16,ymax=2.4, ymin=-1, samples=500,width=24cm,height=5cm,
					xtick={1.25,10.25},xticklabels={$\theta_0+3^k+1$,$\theta_0+10\cdot 3^k+1$},
					ytick={0,1},yticklabels={$0$,$1$},
					x tick label style={color=blue},
					axis x line*=bottom]
			\addplot[blue, ultra thick,domain=0:16] {f(x)};
		\end{axis}
		\begin{axis}[xmin=0,xmax=16,ymax=2.4, ymin=-1, samples=500,width=24cm,height=5cm,
				xtick={0.75,14.25},xticklabels={$\theta_0+0.5\cdot3^k+1$,$\theta_0+14\cdot 3^k+1$},
				axis x line*=top,
				hide y axis,
				x tick label style={color=red},
				xlabel near ticks]
			\addplot[red, ultra thick,domain=0:16,dashed] {g(x)};
		\end{axis}
		\node[blue]() at (5,1.3) {$\outL{\theta_0}{k}$};
		\node[red]() at (1,2.7) {$\outU{\theta_0}{k}$};
	\end{tikzpicture}};

	\node() at ($(e)+(-10,-11.5)$) {\begin{tikzpicture}[anchor=center,scale=.8,solid,black,
		declare function={
			f(\x)= and(\x>=2, \x<3) * (\x-2) + and(\x>=3, \x<4) * (2*(\x-3)+1) + and(\x>=4, \x<5) * (\x-4+3) + and(\x>=5, \x<6) * 4  + and(\x>=6, \x<10) * (-(\x-6)+4);
			g(\x)= and(\x>=1.5, \x<2.5) * (\x-1.5) + and(\x>=2.5, \x<5) * (2*(\x-2.5)+1) + and(\x>=5, \x<6) * (\x-5+6) + and(\x>=6, \x<7) * 7  + and(\x>=7, \x<14) * (-(\x-7)+7);
			h(\x)=and(\x>=1.5,\x<=2) * 0.5 + and(\x>=4.5,\x<=5) * 3.5;
			hh(\x) =6;
		}
		]
		\begin{axis}[xmin=0,xmax=15,ymax=8, ymin=-1, samples=500,width=24cm,height=10cm,
				xtick={2,3,4,5,6,10},xticklabels={$+2\cdot3^k$,$+3\cdot3^k$,$+4\cdot3^k$,$+5\cdot 3^k$,$+6\cdot3^k$,$+10\cdot3^k$},
				x tick label style={yshift={mod(\ticknum,2)*(-1.5em)}},
				ytick={0,1,2,3,4,5,6,7},yticklabels={$0$,$3^k$,$2\cdot3^k$,$3\cdot3^k$,$4\cdot3^k$,$5\cdot3^k$,$6\cdot3^k$,$7\cdot3^k$},
				x tick label style={color=blue},
				axis x line*=bottom]
			\addplot[blue, ultra thick,domain=0:16] {f(x)};
		\end{axis}
		\begin{axis}[xmin=0,xmax=15,ymax=8, ymin=-1, samples=500,width=24cm,height=10cm,
				xtick={1.5,2.5,5,6,7,14},xticklabels={$+1.5\cdot3^k$,$+2.5\cdot3^k$,$+5\cdot 3^k$,$+6\cdot3^k$,$+7\cdot3^k$,$+14\cdot3^k$},
				x tick label style={yshift={mod(\ticknum,2)*(1.5em)}},
				axis x line*=top,
				hide y axis,
				x tick label style={color=red},
				xlabel near ticks]
			\addplot[red, ultra thick,domain=0:16,dashed] {g(x)};
		\end{axis}		
		\begin{axis}[xmin=0,xmax=15,ymax=8, ymin=-1, samples=500,width=24cm,height=10cm,
				xtick={0},xticklabels={$\theta_0$},
				axis x line*=bottom,
				hide y axis,
				x tick label style={color=black}]
			\addplot[gray, line width=2pt,domain=1.75:2,dotted] {h(x)};
			\addplot[gray, line width=2pt,domain=4.75:5,dotted] {h(x)};
			\addplot[gray, line width=2pt,domain=0:5,dotted] {hh(x)};
		\end{axis}

		\node[blue]() at (7,3) {$\queueL{\theta_0}{k}$};
		\node[red]() at (5,6) {$\queueU{\theta_0}{k}$};
	\end{tikzpicture}};

\end{tikzpicture}
			\end{adjustbox}
			\caption{Given an edge with capacity $1$, travel time $1$ and an inflow rate $f^+_e$ between $\inL{\theta_0}{k}$ and $\inU{\theta_0}{k}$, the queue length on this edge will be between $\queueL{\theta_0}{k}$ and $\queueU{\theta_0}{k}$ and the outflow $f_e^-$ between $\outL{\theta_0}{k}$ and $\outU{\theta_0}{k}$.}\label{fig:SlowTermEdgeWithCap1}
		\end{figure}		
		Note, that in particular the waiting time on $e$ is less than $0.5\cdot3^k$ on the interval $[\theta_0+2\cdot3^k-0.5,\theta_0+2\cdot3^k]$, at least $3.5\cdot3^k$ on the interval $[\theta_0+5\cdot3^k-0.5,\theta_0+5\cdot3^k]$ and never more than $6\cdot3^k$ on the interval $[\theta_0,\theta_0+5\cdot3^k]$ (see gray dotted lines in \Cref{fig:SlowTermEdgeWithCap1}). Additionally, the rate of change of the waiting time is bounded by $2$ for the whole interval $[\theta_0,\theta_0+14\cdot 3^k]$.
	\end{obs}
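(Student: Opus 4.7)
The plan is to verify the bounds by direct integration of the Vickrey queue dynamics \eqref{eq:FeasibleFlow-QueueOpAtCap} combined with a monotonicity argument. Since $\nu_e=\tau_e=1$ and the edge starts empty at $\theta_0$, constraint \eqref{eq:FeasibleFlow-QueueOpAtCap} yields the familiar closed-form expression
\[
q_e(\theta)\;=\;\max\Bigl(0,\;\max_{\theta_0\le s\le\theta}\bigl(F^+_e(\theta)-F^+_e(s)-(\theta-s)\bigr)\Bigr),
\]
from which $F^-_e(\theta+1)=F^+_e(\theta)-q_e(\theta)$. This expression depends only on the cumulative inflow curve $F^+_e$ and is pointwise monotone nondecreasing in it; the same is therefore true of the outflow rate $f^-_e$. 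Since the hypothesis $\inL{\theta_0}{k}\le f^+_e\le \inU{\theta_0}{k}$ translates into the same ordering of cumulative inflows, it suffices to verify the claimed bounds on $q_e$ and $f^-_e$ separately for the two extreme inflow profiles.

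For $f^+_e=\inL{\theta_0}{k}$ I would walk through the seven constant-rate blocks one by one. On the first two length-$3^k$ blocks the inflow is at most capacity, so the queue stays empty and the outflow is just the inflow shifted by $\tau_e=1$. Starting at $\theta_0+2\cdot 3^k$ the queue grows at rate $f^+_e-1$: from $0$ to $3^k$ (rate $1$), then to $3\cdot 3^k$ (rate $2$), then to $4\cdot 3^k$ (rate $1$), stays at $4\cdot 3^k$ on the block where inflow equals capacity, and finally drains at rate $1$ until reaching $0$ at $\theta_0+10\cdot 3^k$. Matching these segments against the piecewise-linear description of $\queueL{\theta_0}{k}$ in the bottom diagram of \Cref{fig:SlowTermEdgeWithCap1} gives equality throughout; the corresponding outflow rate equals $1$ on $[\theta_0+3^k+1,\theta_0+10\cdot 3^k+1]$, i.e.\ $\outL{\theta_0}{k}$. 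An entirely analogous block-by-block computation for $f^+_e=\inU{\theta_0}{k}$ reproduces $\queueU{\theta_0}{k}$ and $\outU{\theta_0}{k}$.

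The four supplementary claims at the end then follow by inspection. Because $\nu_e=1$, the waiting time on $e$ coincides with $q_e$, so the three pointwise waiting-time bounds on the listed sub-intervals can be read off directly from the explicit envelopes $\queueL{\theta_0}{k}\le q_e\le\queueU{\theta_0}{k}$; the rate bound $|dq_e/d\theta|\le 2$ holds throughout $[\theta_0,\theta_0+14\cdot 3^k]$ because on every constant-inflow piece the queue changes at rate $f^+_e-1\in[-1,2]$. Conceptually the argument contains little that is new; the main obstacle is the careful bookkeeping over the many subintervals, and in particular arguing the closed-form $q_e$-expression cleanly at the single moment where the queue first returns to zero after having been positive, which however follows at once from the capacity constraint in \eqref{eq:FeasibleFlow-QueueOpAtCap}.
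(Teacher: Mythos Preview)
The paper states this result as an \emph{observation} and gives no proof beyond the three diagrams in \Cref{fig:SlowTermEdgeWithCap1}; the verification is left implicit as a routine computation of the Vickrey queue dynamics on a single unit-capacity edge. Your argument is exactly that computation, carried out via the reflected closed form for $q_e$ plus monotonicity in the cumulative inflow, and it is correct.

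One remark on the only step that is not completely mechanical: you assert that the outflow \emph{rate} $f_e^-$ inherits the pointwise monotonicity in the inflow profile. This is true here but deserves a one-line justification, since monotonicity of $F_e^-$ alone does not give pointwise monotonicity of its derivative. The quick argument is a case split: if the queue for the larger profile is positive at time $\theta$, then $f_e^-(\theta+1)=1$ for that profile and the inequality is trivial; if it is zero, then by your queue monotonicity the smaller profile also has zero queue, and both outflow rates equal $\min(f_e^+(\theta),1)$, which is monotone in $f_e^+(\theta)$. Alternatively, since the two envelopes $\outL{\theta_0}{k}$ and $\outU{\theta_0}{k}$ only take the values $0$ and $1=\nu_e$, the required bounds on $f_e^-$ already follow directly from the queue bounds $\queueL{\theta_0}{k}\le q_e\le\queueU{\theta_0}{k}$ combined with $\inL{\theta_0}{k}\le f_e^+\le\inU{\theta_0}{k}$, without invoking rate monotonicity at all.
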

	
	We can now proceed to construct the instances $G_{K,L}$ for the proof of \Cref{ex:SlowTerminationExample}:
	
	\begin{proof}[Proof of \Cref{ex:SlowTerminationExample}]
		Our instance consists of two parts: A cycling gadget $C_K$ containing the horizontal paths and a blocking gadget $B_K$ containing the vertical paths. We first construct $B_K$ inductively: Namely, for any $k \leq K$ we define a gadget $B_{K,k}$ (see \Cref{fig:SlowTermGadgetBK}) with the following properties:
		
		\begin{itemize}
			\item There are $3^k$ edges $e_1, \dots, e_{3^k}$ leading into the gadget and one edge $e_0$ coming out of it. For each of the edges $e_j$, $j \geq 1$ there exists one unique path $P_j^k$ from its head to the tail of edge $e_0$. Those paths are not necessarily edge disjoint, but all have the same physical travel time of $\tau(P_j^k) = (k-1)(3^{K+1}+1) + 1 - 3^{K-k} + 3^{K-1}$.
			\item The sum of the travel times of all edges inside $B_{K,k}$ is given by $\tau(B_{K,k}) = \sum_{k'=1}^k \big(4\cdot 3^{k'-1} + (3^{k'+1}+2)3^{K-1}\big)-11\cdot 3^{K-1}-1$.
			\item If for some time $\theta_0$ there is no flow inside the gadget and after $\theta_0$ for each edge $e_j$ the outflow $f^-_{e_j}$ of this edge lies between $\inL{\theta_0+(j-1)3^{K-k+1}}{K-k}$ and $\inU{\theta_0+(j-1)3^{K-k+1}}{K-k}$, then all flow has left path $P_j^k$ by $(k-1)(3^K+1)+(j'-1)3^K+14\cdot3^{K-1}+1$, where $j' \in \set{1,2,3}$ is chosen such that $(j'-1)3^{k-1} < j \leq j'3^{k-1}$. Furthermore, before that the paths $P_j^k$ exhibit the \emph{$(K,k,j,\theta_0)$-blocking property}, which is:
			
			For any $k' \in \set{0, \dots, k-1}$, $j' \in \set{0, \dots, 3^{k-k'}-1}$ with $j \in \set{j'3^{k'}\!\!+1, \dots, (j'+1)3^{k'}}$ and $\hat{\theta} \coloneqq \theta_0+k'\cdot(3^{K+1}+1)+j'3^{K-k+k'+1}$ the waiting times on $P_j^k$ are
			\begin{itemize}
				\item at most $0.5\cdot 3^{K-k+k'}$ on the interval $[\hat{\theta}+2\cdot 3^{K-k}-0.5,\hat{\theta}+2\cdot 3^{K-k}]$, 
				\item at least $3.5\cdot 3^{K-k+k'}$ on the interval $\hat{\theta}+5\cdot3^{K-k}-0.5,\hat{\theta}+5\cdot3^{K-k}]$
				\item at most $6\cdot 3^{K-k+k'}$ on the interval $[\hat{\theta},\hat{\theta}+5\cdot3^{K-k}]$ and the rate at which the queue-length changes in never more than $2$.
			\end{itemize}
		\end{itemize}
		
		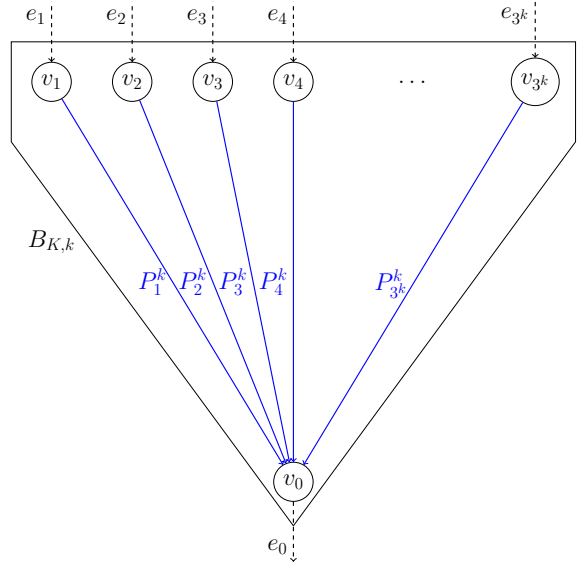
\begin{figure}[ht!]\centering
			\begin{adjustbox}{max width=.5\textwidth}
				\begin{tikzpicture}[font=\LARGE]
	\node () at (1,-5) {\LARGE $B_{K,k}$};
	
	\draw (0,-2.5) -- (0,0) -- (14,0) -- (14,-2.5) -- (7,-12.1) --cycle;
	
	\node[namedVertex] (1) at (1,-1) {$v_1$};
	\node[namedVertex] (2) at (3,-1) {$v_2$};
	\node[namedVertex] (3) at (5,-1) {$v_3$};
	\node[namedVertex] (4) at (7,-1) {$v_4$};
	\node () at (10,-1) {$\dots$};
	\node[namedVertex] (3k) at (13,-1) {$v_{3^k}$};
	
	\path[edge,<-, dashed] (1) -- node[left, near end]{$e_1$} +(0,2);
	\path[edge,<-, dashed] (2) -- node[left, near end]{$e_2$} +(0,2);
	\path[edge,<-, dashed] (3) -- node[left, near end]{$e_3$} +(0,2);
	\path[edge,<-, dashed] (4) -- node[left, near end]{$e_4$} +(0,2);
	\path[edge,<-, dashed] (3k) -- node[left, near end]{$e_{3^k}$} +(0,2);
	
	\node[namedVertex] (0) at (7,-11) {$v_0$};
	\path[edge, dashed] (0) -- node[left,near end]{$e_0$} +(0,-2);
	
	\path[edge,blue] (1) --node[left]{$P_1^k$} (0);
	\path[edge,blue] (2) --node[left]{$P_2^k$} (0);
	\path[edge,blue] (3) --node[left]{$P_3^k$} (0);
	\path[edge,blue] (4) --node[left]{$P_4^k$} (0);
	\path[edge,blue] (3k) --node[left]{$P_{3^k}^k$} (0);

\end{tikzpicture}
			\end{adjustbox}
			\caption{The Blocking-Gadget $B_{K,k}$}\label{fig:SlowTermGadgetBK}
		\end{figure}	
		
		For $k=1$ such a gadget just consists of three edges $v_1v_0$, $v_2v_0$ and $v_3v_0$ each with capacity $1$ and travel time $1$ (see \Cref{fig:SlowTermGadgetBK1}). 
		
		Thus, $\tau(B_{K,1}) = 3$. With an outflow between $\inL{\theta_0+(j-1)3^{K}}{K-1}$ and $\inU{\theta_0+(j-1)3^{K}}{K-1}$ on each of the edges $e_j, j=1,2,3$ \Cref{obs:StepFlowInto1Edge} immediately shows, that the edge $v_jv_0$ (which is the path $P^1_j$ for this gadget) has the $(K,1,j,\theta_0)$-blocking property. Note, that it also follows from \Cref{obs:StepFlowInto1Edge} that the inflow into edge $e_0$ lies between 
		\[\outL{\theta_0}{K-1}+\outL{\theta_0+3^{K}}{K-1}+\outL{\theta_0+2\cdot3^{K}}{K-1} = \inL{\theta_0+1+3^{K-1}-3^K}{K}\]
		and 
		\[\outU{\theta_0}{K-1}+\outU{\theta_0+3^{K}}{K-1}+\outU{\theta_0+2\cdot3^{K}}{K-1} = \inU{\theta_0+1+3^{K-1}-3^K}{K}.\]
		In particular, all flow has left the path $P^1_j$ by time $\theta_0+(j-1)3^K+14\cdot 3^{K-1}+1$. The specific form of the inflow into edge $e_0$ is not important for $B_{K,1}$, but will be used for the induction step of the construction.
		
		\begin{minipage}{.3\textwidth}
			\vspace{.5em}
			\begin{adjustbox}{max width=\textwidth}
				\begin{tikzpicture}
	\node () at (1,-3) {\LARGE $B_{K,1}$};
	
	\draw (0,-1) -- (0,0) -- (6,0) -- (6,-1) -- (3,-4) --cycle;
	
	\node[namedVertex] (1) at (1,-1) {$v_1$};
	\node[namedVertex] (2) at (3,-1) {$v_2$};
	\node[namedVertex] (3) at (5,-1) {$v_3$};
	
	\path[edge,<-, dashed] (1) -- node[left, near end]{$e_1$} +(0,2);
	\path[edge,<-, dashed] (2) -- node[left, near end]{$e_2$} +(0,2);
	\path[edge,<-, dashed] (3) -- node[left, near end]{$e_3$} +(0,2);
	
	\node[namedVertex] (v0) at (3,-3) {$v_0$};
	
	\path[edge, blue] (1) --node[below,sloped]{$\nu_e=1$} (v0);
	\path[edge, blue] (2) --node[below,sloped]{$\nu_e=1$} (v0);
	\path[edge, blue] (3) --node[below,sloped]{$\nu_e=1$} (v0);
	
	\path[edge, dashed] (v0) -- node[left,near end]{$e_0$} +(0,-2);
\end{tikzpicture}
			\end{adjustbox}
			\vspace{.5em}
		\end{minipage}\hspace{.03\textwidth}%
		\begin{minipage}{.64\textwidth}
			
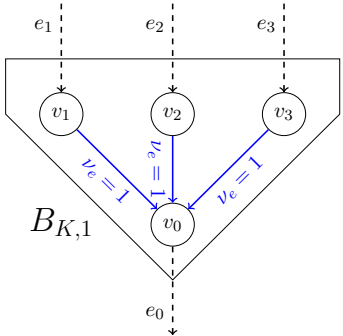
\captionof{figure}{The blocking gadget $B_{K,1}$ (the first building block for the inductive construction of the general Blocking-Gadget $B_{K,k}$.)}\label{fig:SlowTermGadgetBK1}
		\end{minipage}	
		
		\begin{minipage}{.55\textwidth}
			\vspace{.5em}
			\begin{adjustbox}{max width=\textwidth}
				\begin{tikzpicture}[font=\large]
	\node () at (1,-6) {\LARGE $B_{K,k}$};
	\draw (0,-3.5) -- (0,0) -- (14,0) -- (14,-3.5) -- (7,-12.5) --cycle;
	
	\draw[dashed] (.25,-1.25) -- (.25,-.25) -- (5.75,-.25) -- (5.75,-1.25) -- (3,-4.25) --cycle;
	\node () at (.6,-2.6) {\LARGE $B_{K,1}$};	
	
	\node[namedVertex] (1) at (1,-1) {$v_1$};
	\node[namedVertex] (2) at (3,-1) {$v_2$};
	\node[namedVertex] (3) at (5,-1) {$v_3$};
	\node[namedVertex] (4) at (7,-1) {$v_4$};
	\node () at (10,-1) {$\dots$};
	\node[namedVertex] (3k) at (13,-1) {$v_{3^k}$};
	
	\path[edge,<-, dashed] (1) -- node[left, near end]{$e_1$} +(0,2);
	\path[edge,<-, dashed] (2) -- node[left, near end]{$e_2$} +(0,2);
	\path[edge,<-, dashed] (3) -- node[left, near end]{$e_3$} +(0,2);
	\path[edge,<-, dashed] (4) -- node[left, near end]{$e_4$} +(0,2);
	\path[edge,<-, dashed] (3k) -- node[left, near end]{$e_{3^k}$} +(0,2);
	
	\node[namedVertex,blue] (w1) at (3,-3) {$w_1$};
	\node[namedVertex,blue] (w2) at (9,-3) {$w_2$};
	\node[namedVertex,blue] (w3k) at (11,-3) {$w_{3^{k-1}}$};
	
	\path[edge, blue] (1) --node[below,sloped]{$\nu_e=1$} (w1);
	\path[edge, blue] (2) --node[below,sloped]{$\nu_e=1$} (w1);
	\path[edge, blue] (3) --node[below,sloped]{$\nu_e=1$} (w1);
	\path[edge, blue] (4) --node[below,sloped]{$\nu_e=1$} (w2);
	\path[edge, blue] (3k) --node[below,sloped]{$\nu_e=1$} (w3k);	
		
	\draw (2.5,-5.5) -- (2.5,-4.5) -- (11.5,-4.5) -- (11.5,-5.5) -- (7,-11) --cycle;
	\node () at (7,-5.5) {\LARGE $B_{K,k-1}$};
	
	\node[vertex] (1b) at (3,-5) {};
	\node[vertex] (2b) at (9,-5) {};
	\node[vertex] (3kb) at (11,-5) {};
	
	\path[edge, blue] (w1) --node[left]{$e'_1$} (1b);
	\path[edge, blue] (w2) --node[left]{$e'_2$} (2b);
	\path[edge, blue] (w3k) --node[left]{$e'_{3^{k-1}}$} (3kb);
	
	\node[vertex] (0b) at (7,-10) {};

	\path[edge,blue,dashed] (1b) --node[right,near start]{$P_1^{k-1}$} (0b);
	\path[edge,blue,dashed] (2b) --node[left,near start]{$P_2^{k-1}$} (0b);
	\path[edge,blue,dashed] (3kb) --node[left,near start]{$P_{3^{k-1}}^{k-1}$} (0b);	
	
	\node[namedVertex] (0) at (7,-11.5) {$v_0$};
	\path[edge, blue] (0b) -- (0);
	\path[edge, dashed] (0) -- node[left,near end]{$e_0$} +(0,-2);
	
\end{tikzpicture}
			\end{adjustbox}
			\vspace{.5em}
		\end{minipage}\hspace{.03\textwidth}%
		\begin{minipage}{.39\textwidth}
			
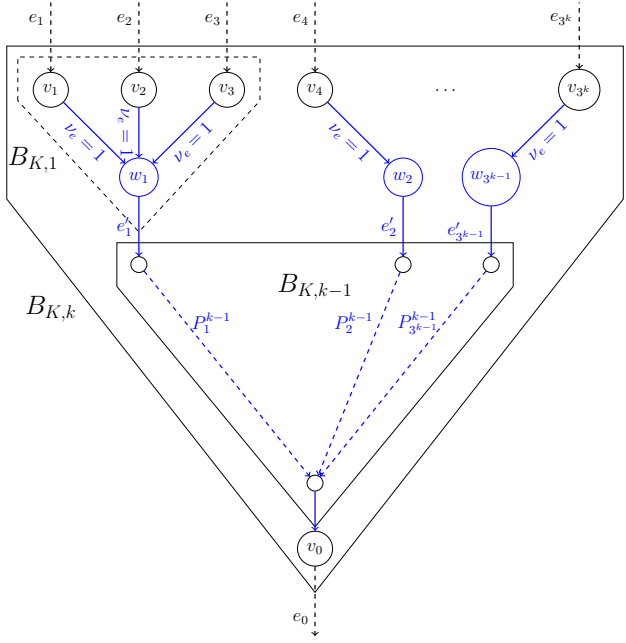
\captionof{figure}{The inductive construction the blocking gadget $B_{K,k}$ from $3^{k-1}$ copies of the gadget $B_{K,1}$ and one of gadget $B_{K,k-1}$. The connecting edges $e'_{j'}$ have capacity $3$ and travel time $3^{K+1}+3^{K-k+1}-3^{K-k}$.}\label{fig:SlowTermGadgetBKkInductionStep}
		\end{minipage}	
		
		Analogous to the proof for gadget $B_{K,1}$ it follows from \Cref{obs:StepFlowInto1Edge} that, given an outflow between $\inL{\theta_0+(j-1)3^{K-k+1}}{K-k}$ and $\inU{\theta_0+(j-1)3^{K-k+1}}{K-k}$ from edges $e_j$ the blocking property holds for $k'=0$ and all $j', j$ and the inflow into the edges $e'_{j'}$ ($j' \in \set{1,\dots,3^{k-1}}$) lies between $\inL{\theta_0+(j'-1)3\cdot 3^{K-k+1}+1+3^{K-k}-3^{K-k+1}}{K-k+1}$ and $\inU{\theta_0+(j'-1)3\cdot 3^{K-k+1}+1+3^{K-k}-3^{K-k+1}}{K-k+1}$. 
		Since those edges have capacity $3$ their outflow then has the same form, just $\tau_{e'_{j'}} = 3^{K+1}+3^{K-k+1}-3^{K-k}$ time steps later. So, by induction, $B_{K,k-1}$ has the $(K,k-1,j',\theta_0+3^{K+1}+1)$-blocking property, which together proves the $(K,k,j)$-blocking property for $P^k_j$ ($j \in \set{j'3+1,j'3+2,j'3+3}$) for all $k'\geq 1$. Additionally all flow has left $P_{j'}^{k-1}$ by time $(3^K+1)+(k-2)(3^K+1)+(j''-1)3^K+14\cdot3^{K-1}+1$, where $j'' \in \set{1,2,3}$ such that $(j''-1)3^{k-1-1} < j' \leq j''3^{k-1-1}$ (i.e. $(j''-1)3^{k-1} < j' \leq (j''-1)3^{k-1}$) by induction and, thus, also $P_j^k$ by this time. Finally we have 
		\[\tau(P_j^k) = 1 + 3^{K+1}+3^{K-k+1}-3^{K-k} + \tau(P_{j'}^{k-1}) = (k-1)(3^{K+1}+1) + 1 - 3^{K-k} + 3^{K-1}\]
		and the total travel time of all edges inside $B_{K,k}$ is 
		\begin{align*}
		\tau(B_{K,k}) 	&= 3^{k-1}\tau(B_{K,1}) + \sum_{j'=1}^{3^{k-1}}\tau_{e'_{j'}} + \tau(B_{K,k-1}) \\
		&= 3^k + 3^{k-1}(1 + 3^{K+1} + 3^{K-k+1} - 3^{K-k}) \\
			&\quad\quad+\sum_{k'=1}^{k-1} \big(4\cdot 3^{k'-1} + (3^{k'+1}+2)3^{K-1}\big)-11\cdot 3^{K-1}-1\\
		&= \sum_{k'=1}^k \big(4\cdot 3^{k'-1} + (3^{k'+1}+2)3^{K-1}\big)-11\cdot 3^{K-1}-1.
		\end{align*}
		This concludes the construction of the blocking gadget $B_K \coloneqq B_{K,K}$.
		
		Next we define $U_{K,L} \coloneqq \frac{1}{2}\big((L-1)(1+3^{K+2}+2K)+ 4L3^{K+2} + 3^{K+2} + 1\big)$, which will be the total flow volume in $G_{K,L}$. Additionally $2U_{K,L}$ will be the maximal flow rate occurring in $G_{K,L}$. Now we construct the cycling gadget $C_K$ (see \Cref{fig:SlowTermGadgetC}): It consists of 
		\begin{itemize}
			\item nodes $u_j,u'_j,v_j,v'_j,w_j,w'_j,x_j$ for $j \in \set{1,2, \dots, 3^K}$ and $u_{3^K+1}$ and $x_{3^K+1}$,
			\item edges $u_jv_j$, $v_jw_j$ and $w_ju_{j+1}$ for $j \in \set{1, 2, \dots, 3^K}$, each with capacity $2U_{K,L}$ and travel time $1$,
			\item edges $u_ju'_j$, $v_jv'_j$ and $w_jw'_j$ for $j \in \set{1, 2, \dots, 3^K}$, each with capacity 3 and travel time $1$,
			\item edges $u'_jx_j$, $v'_jx_j$ and $w'_jx_j$ for $j \in \set{1, 2, \dots, 3^K}$, each with capacity 1 and travel time $1$,
			\item edges $u_{(j-1)3^k+1}u_{j3^k+1}$ for $k \in \set{0, 1, \dots, K-1}$, $j \in \set{1,2, \dots, 3^{K-k}}$, each with capacity $2U_{K,L}$ and travel time $3^k$
			\item and edges $u_{3^K+1}x_{3^K+1}$ with capacity $1$ and travel time $2$ and $u_{3^K+1}u_1$ with capacity $2U_{K,L}$ and travel time $1$.
		\end{itemize}
		
		The total physical travel time of all edges in $C_K$ is therefor $\tau(C_K) = 3\cdot 3^K+3\cdot 3^K + 3\cdot3^K + K3^{K+1} + 3 = (3+K)3^{K+1}+3$.
		
		\begin{figure}[ht!]\centering
			\begin{adjustbox}{max width=.9\textwidth}
				\begin{tikzpicture}[font=\LARGE]
	\useasboundingbox (-2.5,7.5) rectangle (30.5,-10.5);

	\draw (-2,7) rectangle (30,-5);
	\node () at (-1,6) {\Huge $C_{K}$};
	
	\node[namedVertex](u1) at (0,0) {$u_1$};
	\node[namedVertex](v1) at (2,0) {$v_1$};
	\node[namedVertex](w1) at (4,0) {$w_1$};
	\node[namedVertex](u2) at (6,0) {$u_2$};
	\node[namedVertex](v2) at (8,0) {$v_2$};
	\node[namedVertex](w2) at (10,0) {$w_2$};
	\node[namedVertex](u3) at (12,0) {$u_3$};
	\node[namedVertex](v3) at (14,0) {$v_3$};
	\node[namedVertex](w3) at (16,0) {$w_3$};
	\node[namedVertex](u4) at (18,0) {$u_4$};
	\node() at (20,0) {$\dots$};
	\node[namedVertex](u3k) at (22,0) {$u_{3^K}$};
	\node[namedVertex](v3k) at (24,0) {$v_{3^K}$};
	\node[namedVertex](w3k) at (26,0) {$w_{3^K}$};
	\node[namedVertex](u3k1) at (28,0) {$u_{3^K+1}$};

	\node[namedVertex](u1s) at (0,-2) {$u_1'$};
	\node[namedVertex](v1s) at (2,-2) {$v_1'$};
	\node[namedVertex](w1s) at (4,-2) {$w_1'$};
	\node[namedVertex](u2s) at (6,-2) {$u_2'$};
	\node[namedVertex](v2s) at (8,-2) {$v_2'$};
	\node[namedVertex](w2s) at (10,-2) {$w_2'$};
	\node[namedVertex](u3s) at (12,-2) {$u_3'$};
	\node[namedVertex](v3s) at (14,-2) {$v_3'$};
	\node[namedVertex](w3s) at (16,-2) {$w_3'$};
	\node[namedVertex](u4s) at (18,-2) {$u_4'$};

	\node[namedVertex](u3ks) at (22,-2) {$u_{3^K}'$};
	\node[namedVertex](v3ks) at (24,-2) {$v_{3^K}'$};
	\node[namedVertex](w3ks) at (26,-2) {$w_{3^K}'$};

	\node[namedVertex](x1) at (2,-4) {$x_1$};
	\node[namedVertex](x2) at (8,-4) {$x_2$};
	\node[namedVertex](x3) at (14,-4) {$x_3$};

	\node[namedVertex](x3k) at (24,-4) {$x_{3^K}$};
	\node[namedVertex](x3k1) at (28,-4) {$x_{3^K+1}$};
	
	\node[namedVertex](t) at (14,-10) {$t$};
	
	\path[edge] (u1) -- (u1s);
	\path[edge] (v1) -- (v1s);
	\path[edge] (w1) -- (w1s);
	\path[edge] (u2) -- (u2s);
	\path[edge] (v2) -- (v2s);
	\path[edge] (w2) -- (w2s);
	\path[edge] (u3) -- (u3s);
	\path[edge] (v3) -- (v3s);
	\path[edge] (w3) -- (w3s);
	\path[edge] (u4) -- (u4s);
	\path[edge] (u3k) -- (u3ks);
	\path[edge] (v3k) -- (v3ks);
	\path[edge] (w3k) -- (w3ks);
	\path[edge] (u1) -- (v1);
	\path[edge] (v1) -- (w1);
	\path[edge] (w1) -- (u2);
	\path[edge] (u2) -- (v2);
	\path[edge] (v2) -- (w2);
	\path[edge] (w2) -- (u3);
	\path[edge] (u3) -- (v3);
	\path[edge] (v3) -- (w3);
	\path[edge] (w3) -- (u4);
	\path[edge] (u3k) -- (v3k);
	\path[edge] (v3k) -- (w3k);
	\path[edge] (w3k) -- (u3k1);

	\path[edge] (u1s) -- (x1);
	\path[edge] (v1s) -- (x1);
	\path[edge] (w1s) -- (x1);
	\path[edge] (u2s) -- (x2);
	\path[edge] (v2s) -- (x2);
	\path[edge] (w2s) -- (x2);
	\path[edge] (u3s) -- (x3);
	\path[edge] (v3s) -- (x3);
	\path[edge] (w3s) -- (x3);
	\path[edge] (u3ks) -- (x3k);
	\path[edge] (v3ks) -- (x3k);
	\path[edge] (w3ks) -- (x3k);

	\path[edge] (u3k1) -- (x3k1);	
	
	\path[edge] (u3k1) to [bend right=135] (u1);	
	
	\path[edge] (u1) to [bend left=30] (u2);
	\path[edge] (u2) to [bend left=30] (u3);
	\path[edge] (u3) to [bend left=30] (u4);
	\path[edge] (u3k) to [bend left=30] (u3k1);
	
	\path[edge] (u1) to [bend left=40] (u4);

	\path[edge,blue,dashed] (x1) --node[left]{$P_1$} (t);
	\path[edge,blue,dashed] (x2) --node[left]{$P_2$} (t);
	\path[edge,blue,dashed] (x3) --node[left]{$P_3$} (t);
	\path[edge,blue,dashed] (x3k) --node[left]{$P_{3^K}$} (t);
	\path[edge,blue,dashed] (x3k1) --node[left]{$P_{3^K+1}$} (t);
	
\end{tikzpicture}
			\end{adjustbox}
			\caption{The cycling gadget $C_K$. All horizontal edges (forming the cycle) have a capacity of $2U_{K,L}$, the vertical edges $u_ju'_j$, $v_jv'_j$ and $w_jw'_j$ have a capacity of $3$ and all other edges have a capacity of $1$. The paths $P_j$ connecting $C_K$ to the sink $t$ are not part of the gadget itself, but are required to be of equal length.}\label{fig:SlowTermGadgetC}
		\end{figure}
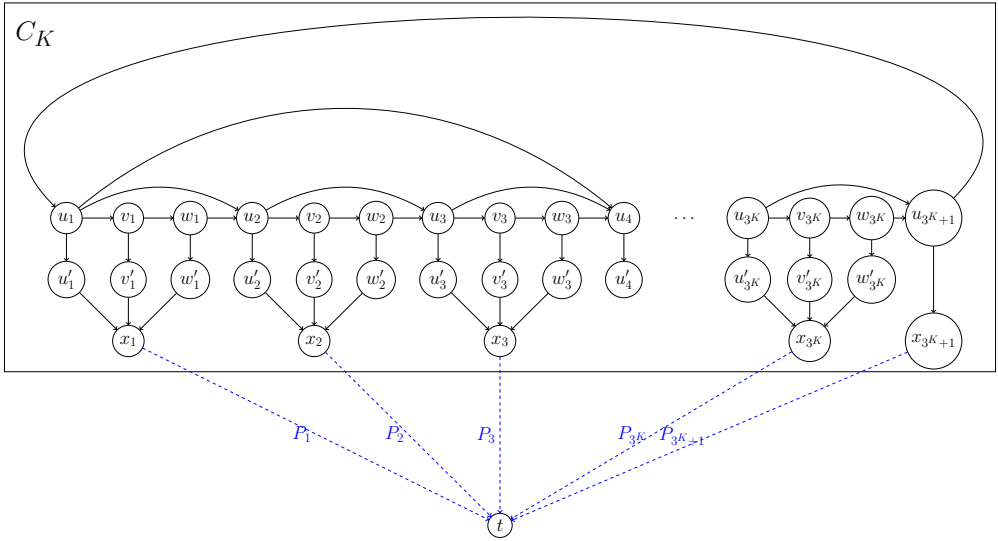
		
		We will now assume that the nodes $x_j$ from gadget $C_K$ are connected to the sink node $t$ via paths $P_j$ of equal length (see \Cref{fig:SlowTermGadgetC}) and start at some time $\theta_0$ with an inflow rate of $y\geq \max\set{3^{K+2}+1,2K}$ over the interval $[\theta_0-x,\theta_0]$ with $0.5 \geq x\geq \max\set{\frac{1}{y-1}+\sum_{i=1}^{3^{K+1}}\frac{3}{y-1-3i},\sum_{i=0}^{K-1}\frac{6\cdot3^i}{y-1-3i}}$ at node $u_{3^K+1}$ and an otherwise empty gadget $C_K$. We describe two possible evolutions of an IDE from here on out:
		
		If the paths $P_j$ have no waiting time between time $\theta_0+3j-3$ and $\theta_0+3j$ then the following flow dynamic inside $C_K$ is an IDE up to time $\theta_0+3^{K+1}$ (Charging-Phase):
		
		\begin{itemize}
			\item By \Cref{obs:SlowTermFlowSplit} the flow arriving at $u_{3^K+1}$ splits in such a way, that it enters the edge towards $u_1$ at a rate of $y-1$ over the interval $[\theta_0-x+\frac{1}{y-1},\theta_0]$ and arrives at $u_1$ over the interval $[\theta_0-x+\frac{1}{y-1}+1,\theta_0+1]$.
			\item For $j \in \set{1,2, \dots, 3^K}$: Flow arrives at node $u_j$ at a rate of $y-1-3^2(j-1)$ over the interval $[\theta_0-x+\frac{1}{y-1}+\sum_{i=1}^{3(j-1)}\frac{3}{y-1-3i}+3(j-1)+1,\theta_0+3(j-1)+1]$. This flow splits according to \Cref{obs:SlowTermFlowSplit} between the edges towards $u'_j$ and $v_j$: Namely, in such a way, that flow arrives at $u'_j$ at rate $3$ over the interval $[\theta_0-x+\frac{1}{y-1}+\sum_{i=1}^{3(j-1)}\frac{3}{y-1-3i}+3(j-1)+2,\theta_0+3(j-1)+3]$ and at $v_j$ at a rate of $y-1-3^2(j-1)-3$ over the interval $[\theta_0-x+\frac{1}{y-1}+\sum_{i=1}^{3(j-1)+1}\frac{3}{y-1-3i}+3(j-1)+2,\theta_0+3(j-1)+2]$. The same split then happens at $v_j$ and (one time step later) at $w_j$, so that finally over the interval $[\theta_0-x+\frac{1}{y-1}+\sum_{i=1}^{3j}\frac{3}{y-1-3i}+3j+1,\theta_0+3j+1]$ the flow reaches $u_{j+1}$ at a rate of $y-1-3^2j$.
		\end{itemize}	
		This flow then has the following two properties:
		\begin{itemize}
			\item For $j \in \set{1,2,\dots,3^K}$ the flow over the nodes $u'_j$, $v'_j$ and $w'_j$ arrives at $x_j$ at a rate between $\inL{\theta_0+3(j-1)+2}{0}$ and $\inL{\theta_0+3(j-1)+2}{0}$.
			\item Flow arrives again at node $u_{3^K+1}$ at a rate of $y-1-3^{K+2}$ over the interval $[\theta_0-x+\frac{1}{y-1}+\sum_{i=1}^{3^{K+1}}\frac{3}{y-1-3i}+3^{K+1}+1,\theta_0+3^{K+1}+1]$
		\end{itemize}
		If, on the other hand, the paths $P_j$, $j \in \set{1,2,\dots,3^K}$ exhibit the $(K,K,j,\theta_0-4)$-blocking property then the following flow evolution inside $C_K$ is an IDE up to $\theta_0+K(3^{K+1}+1)-1$ (Cycling-Phase):
		\begin{itemize}
			\item For $k \in \set{0,1,2,\dots,K-1}$
			\begin{itemize}
				\item Over the interval $[\theta_0+k(3^{K+1}+1)-x+\sum_{i=0}^{k-1}\frac{6\cdot 3^i}{y-1-3i},\theta_0+k(3^{K+1}+1)]$ flow arrives at $u_{3^K+1}$ with rate at least $y-2k$. Similar to \Cref{obs:SlowTermFlowSplit} at first all flow enters the edge towards $x_{3^K+1}$ until a queue of sufficient length has built up, such that the path over $u_1$ has the same instantaneous travel time. After this point, flow enters edge $u_{3^K+1}x_{3^K+1}$ at a rate of one more than the current change of waiting time on the path $P_1$ (which is at most $2$) and the rest of the flow enters the edge towards $u_1$. Since $P_1$ has the $(K,K,1,\theta_0-4)$-blocking property, the current waiting time is at most $6\cdot3^{k}$, so at most a flow volume of $6\cdot3^{k}+3x$ is lost. The rest of the flow arrives at node $u_1$ at least at a rate of $y-3(k+1)$ over the interval $[\theta_0+k(3^{K+1}+1)-x+\sum_{i=0}^{k}\frac{6\cdot 3^i}{y-1-3i}+1,\theta_0+k(3^{K+1}+1)+1]$.
				\item For $j \in \set{1,2,\dots,3^{K-k}-1}$: All flow arrives at node $u_{(j-1)3^k+1}$ within the interval $[\theta_0+k(3^{K+1}+1)-x+1+(j-1)3^{k+1},\theta_0+k(3^{K+1}+1)+1+(j-1)3^{k+1}]$. Since the paths $P_{j'}$ have the $(K,K,j',\theta_0-4)$-blocking property the path  $u_{(j-1)3^k+1}, u_{j3^k+1}, u'_{j3^k+1},$ $x_{j3^k+1}, P_{j3^k+1}$ has a shorter current travel time (current waiting time $\leq 0.5\cdot 3^k$) than $u_{(j-1)3^k+1},u'_{(j-1)3^k+1},x_{(j-1)3^k+1},P_{(j-1)3^k+1}$ or any of the paths inbetween (current waiting time $\geq 3.5\cdot 3^k$). Thus, the edge $u_{(j-1)3^k+1},u_{j3^k+1}$ is active and we can send all flow over this edge, so it will arrive at node $u_{j3^k+1}$ within the interval $[\theta_0+k(3^{K+1}+1)+1+j3^{k+1}-x,\theta_0+k(3^{K+1}+1)+1+j3^{k+1}]$.
				\item For $j = 3^{K-k}$: All flow arrives at node $u_{3^K-3^k+1}$ within the interval $[\theta_0+k(3^{K+1}+1)+1+(3^{K-k}-1)3^{k+1}-x,\theta_0+k(3^{K+1}+1)+1+(3^{K-k}-1)3^{k+1}]$. Since no new flow has arrived at $u_{3^K+1}$ since $\theta_0+k(3^{K+1}+1)$ the queue of at most $6\cdot 3^k \leq (3^{K-k}-1)3^{k+1}$ on the edge $u_{3^K+1}x_{3^K+1}$ has vanished by now. Thus, again, $u_{3^K-3^k+1},u_{3^K+1},x_{3^K+1},P_{3^K+1}$ has the currently shortest instantaneous travel time and we can send all flow over the edge towards $u_{3^K+1}$ where it will arrive within the interval $[\theta_0+(k+1)(3^{K+1}+1)-x,\theta_0+(k+1)(3^{K+1}+1)]$.
			\end{itemize} 
		\end{itemize}
		This flow then has the property, that flow finally arrives at $u_{3^K+1}$ with rate of at least $y-2K$ over at least the interval $[\theta_0+K(3^{K+1}+1)-x+\sum_{i=0}^{K-1}\frac{6\cdot 3^i}{y-1-3i},\theta_0+K(3^{K+1}+1)]$.
		
		We can now construct the network $G_{K,L}$ by connecting the gadgets $C_K$ and $B_K$ as indicated in \Cref{fig:SlowTermGraph}. 	The sum of all edge length in $G_{K,L}$ is then
		\begin{align*}
		\tau(G_{K,L}) 	&= \underbrace{\tau(C_K)}_{=(3+K)3^{K+1}+3} + \underbrace{\tau(B_K)}_{3^k+(1+3^K)\sum_{k'=1}^{k-1}3^{k'}} + \underbrace{\sum_{j=1}^{3^K}\tau_{e_j}}_{=3^K(3^{K+1}-5)} + \underbrace{\tau_{e_0}}_{=1} + \underbrace{\tau_{e_{3^K+1}}}_{=3^K-5 + (k-1)(3^K+1) + 1 +1}
		\end{align*}
		and, thus, in $\BigO(3^{2K})$.
		
		\begin{minipage}{.45\textwidth}
			\vspace{.5em}
			\begin{adjustbox}{max width=\textwidth}
				\begin{tikzpicture}
	\draw (0,1) rectangle (10,-1);
	\node () at (2,0.5) {\LARGE $C_{K}$};
	
	\node[namedVertex](x1) at (1,0) {$x_1$};
	\node[namedVertex](x2) at (3,0) {$x_2$};
	\node(xd) at (5,0) {$\dots$};
	\node[namedVertex](x3k) at (7,0) {$x_{3^K}$};
	\node[namedVertex](x3k1) at (9,0) {$x_{3^K+1}$};
	
	\node () at (4,-5) {\LARGE $B_{K}$};
	\draw (0,-3.5) -- (0,-2) -- (8,-2) -- (8,-3.5) -- (4,-8) --cycle;
	
	\node[namedVertex] (v1) at (1,-3) {$v_1$};
	\node[namedVertex] (v2) at (3,-3) {$v_2$};
	\node (vd) at (5,-3) {$\dots$};
	\node[namedVertex] (v3k) at (7,-3) {$v_{3^k}$};	
	
	\node[namedVertex] (v0) at (4,-6.5) {$v_0$};
	
	\path[edge,dashed] (v1) -- (v0);
	\path[edge,dashed] (v2) -- (v0);
	\path[edge,dashed] (v3k) -- (v0);
	
	\node[namedVertex](t) at (4,-9) {$t$};
	
	\path[edge] (x1) --node[left]{$e_1$} (v1);
	\path[edge] (x2) --node[left]{$e_2$} (v2);
	\path (xd) --node[left]{$\dots$} (vd);
	\path[edge] (x3k) --node[left]{$e_{3^K}$} (v3k);
	
	\path[edge] (v0) --node[left, near end]{$e_0$} (t);
	\path[edge] (x3k1) to [bend left=30] node[left]{$e_{3^K+1}$} (t);
	
\end{tikzpicture}
			\end{adjustbox}
			\vspace{.5em}
		\end{minipage}\hspace{.03\textwidth}%
		\begin{minipage}{.5\textwidth}
			
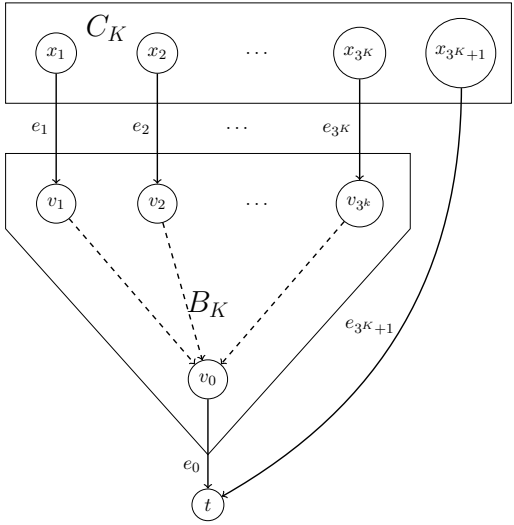
\captionof{figure}{The whole graph $G_{K,L}$ constructed by combining cycling gadget $C_K$ with blocking gadget $B_k$. The edges $e_j$ connecting both gadgets have a capacity of $3$ and a physical travel time of $3^{K+1}-5$. Edge $e_0$ connecting $B_K$ with the sink node $t$ has capacity $3$ and travel time $1$ and edge $e_{3^K+1}$ connecting node $x_{3^{K+1}+1}$ from gadget $C_K$ with $t$ has capacity $1$ and travel time equal to $\tau_{e_1}+\tau(P_1)+1$.}\label{fig:SlowTermGraph}
		\end{minipage}		
		
		Furthermore we define a constant inflow rate of $2U_{K,L} = (L-1)(1+3^{K+2}+2K)+ 4L3^{K+2} + 3^{K+2} + 1$ at node $u_{3^K+1}$ over the interval $[-0.5,0]$ (thus, $U_{K,L} \in \BigO(L3^K)$). For all $\ell \in \set{0,\dots, L-1}$, define
		\begin{itemize}
			\item $\theta_0^{\ell} \coloneqq \ell (K+1)(3^{K+1}+1))$,
			\item $y_{\ell} \coloneqq 2U_{K,L} -\ell(1+3^{K+2}+2K) \geq 4L3^{K+2} + 3^{K+2} + 1$ and
			\item $x_{\ell} \coloneqq \frac{1}{2} - \frac{\ell}{2L}$.
		\end{itemize} 
		Then for all $y \geq y_{L-1} = 4L3^{K+2} + 3^{K+2} + 1$ we have
		\begin{align}
			\frac{1}{y-1} &+ \sum_{i=1}^{3^{K+1}}\frac{3}{y-1-3i} + \sum_{i=0}^{K-1}\frac{6\cdot 3^i}{y-1-3i} 
				\leq \frac{1}{4L3^{K+2}} + \sum_{i=1}^{3^{K+1}}\frac{3}{4L3^{K+2}} + \sum_{i=0}^{K-1}\frac{6\cdot 3^i}{4L3^{K+2}} \notag\\		
				&= \frac{1}{4L3^{K+2}}(1+3\cdot 3^{K+1} + 6\frac{3^K-1}{3-1}) \leq \frac{2\cdot 3^{K+1}}{4L3^{K+2}} = \frac{1}{2L}. \label{eq:BoundForIntervalReduction}
		\end{align}
		We can now describe an IDE up to time $L(3^{K+1}+1+K(3^{K+1}+1))$ as follows: 
		\begin{itemize}
			\item At all times $\theta_0^{\ell}$ gadget $C_K$ is at the start of a Charging-Phase. Flow arrives at node $u_{3^K+1}$ within the interval $[\theta_0^{\ell}-0.5,\theta_0^{\ell}]$ and at a rate of at least $y_{\ell}$ over the interval $[\theta_0^{\ell}-x_{\ell},\theta_0^{\ell}]$. Thus, for all $j \in \set{1,2,\dots,3^K}$ emerges from edge $e_j$ at a rate between $\inL{\theta_0^{\ell}+3(j-1)-3+3^{K+1}}{0}$ and $\inU{\theta_0^{\ell}+3(j-1)-3+3^{K+1}}{0}$. Therefore the paths $P_j$ inside gadget $B_K$ will exhibit the $(K,K,j,\theta_0^{\ell}+3^{K+1}+1-4)$-blocking property. 
			\item At times $\theta_0^{\ell}+3^{K+1}+1$ gadget $C_K$ is at the start of a Cycling-Phase. Flow arrives at node $u_{3^K+1}$ within the interval $[\theta_0^{\ell}+3^{K+1}+1-0.5,\theta_0^{\ell}+3^{K+1}+1]$ and at a rate of at least $y_{\ell} -1-3^{K+2}$ over the interval $[\theta_0^{\ell}+3^{K+1}+1-x_{\ell}+\frac{1}{y-1}+\sum_{i=1}^{3^{K+1}}\frac{3}{y-1-3i},\theta_0^{\ell}+3^{K+1}+1]$. Thus, flow will again arrive at node $u_{3^K+1}$ within the interval $[\theta_0^{\ell}+3^{K+1}+1+3^K+K(3^{K+1}+1)-0.5,\theta_0^{\ell}+3^{K+1}+1+K(3^{K+1}+1)]$ and at a rate of at least $y_{\ell}-1-3^{K+2}-2K=y_{\ell+1}$ over the interval $[\theta_0^{\ell}+3^{K+1}+1+3^K+K(3^{K+1}+1)-x_{\ell}+\frac{1}{y-1}+\sum_{i=1}^{3^{K+1}}\frac{3}{y-1-3i}+\sum_{i=0}^{K-1}\frac{6\cdot 3^i}{y-1-3i},\theta_0^{\ell}+3^{K+1}+1+3^K+K(3^{K+1}+1)] \overset{\text{\eqref{eq:BoundForIntervalReduction}}}{\supseteq} [\theta_0^{\ell+1}-x_{\ell+1},\theta_0^{\ell+1}]$.
		\end{itemize}	
		Extending this flow to an IDE flow on $\IR_{\geq 0}$ (which is always possible by \cite[Lemma 5.3 and Theorem 5.5]{GHS18}), gives us an IDE flow, that does not terminate before $\theta_0^L \coloneqq \theta_0^{\ell} \coloneqq L(K+1)(3^{K+1}+1)$. This IDE flow also has the property that at time $\theta_0^{\ell}$ there is a flow volume of at least $x_{\ell} y_{\ell}$ still in the network (the flow that just arrived at node $u_{3^K+1}$), i.e. $\edgeLoad(\theta) \geq x_{\ell} y_{\ell}$ for all $\theta \in [\theta_0^{\ell-1},\theta_0^{\ell}]$. Thus, we get the following lower bound on the total travel times:
			\begin{align*}
				\sumOfTraveltimes(f) 
					&\overset{\text{\Cref{lemma:SumOfTravelTimesDefWithEdgeLoad}}}{\geq}
						\sum_{\ell=1}^{L-1}(\theta_0^{\ell}-\theta_0^{\ell-1})\cdot x_{\ell}\cdot y_{\ell} 
						\geq \sum_{\ell=1}^{L-1}K3^{K+1}\cdot(L-1-\ell)3^{K+2}\cdot \frac{L-\ell}{2L}\\
				 	&= K3^{2K+3}\frac{1}{2L}\sum_{\ell'=1}^{L-1}(\ell'-1)\ell' = K3^{2K+3}\frac{(L-1)^3-(L-1)}{2L}\qedhere
			\end{align*}
	\end{proof}

In order to derive from \Cref{ex:SlowTerminationExample} a lower bound on the price of anarchy, we will slightly modify the instance used there in such a way that the makespan of the worst case IDE flow remains asymptotically the same, while there exists an optimal makespan in $\bigO(1)$ and optimal total times in $\BigO(U)$. 

\begin{theorem}\label{thm:PoALowerBound}
	For any pair of positive integers $(U,\tau)$ with $U \geq 2\tau$, we have $\MPoA(U,\tau) \in \bigOm(U\log \tau)$ as well as $\TPoA(U,\tau) \in \bigOm(U\log \tau)$.
\end{theorem}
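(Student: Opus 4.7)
The plan is to take the family of instances $G_{K,L}$ from \Cref{ex:SlowTerminationExample} and augment them with a small auxiliary structure so that in the resulting instances every IDE flow still exhibits the slow-termination dynamics of \Cref{thm:SlowTermination}, while a system optimal flow can ship essentially all of its volume to the sink in constant time. Concretely, I introduce a new source $s^*$ (to which the inflow previously placed at $u_{3^K+1}$ is redirected), an edge $s^*\to u_{3^K+1}$ of capacity $2U_{K,L}$ and travel time $1$ so that the inflow pattern arriving at $u_{3^K+1}$ is preserved, and a constant-length, high-capacity bypass edge from $s^*$ directly to $t$ whose travel time is calibrated---together with an auxiliary blocker inflow entering slightly earlier at another new source and pre-loading a queue on the bypass before the main inflow reaches $s^*$---so that, as seen from $s^*$, the instantaneous length of the bypass strictly exceeds the length of the route through $u_{3^K+1}$ during the entire time the main inflow is active. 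Consequently, no IDE flow ever routes any volume onto the bypass and the argument of \Cref{ex:SlowTerminationExample} applies verbatim to yield an IDE makespan of at least $L(K+1)(3^{K+1}+1)$ and total travel times of at least $\tfrac{K}{2L}3^{2K+3}\big((L-1)^3-(L-1)\big)$. An optimal flow, in contrast, routes the main inflow via the bypass, avoiding the blocker queue, reaching $t$ in $\bigO(1)$ time with total travel times $\bigO(U)$.

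Given $(U,\tau)$ with $U\geq 2\tau$, I then choose $K\coloneqq \lfloor(\log_3\tau)/2\rfloor$ and $L\coloneqq\lfloor U/(C\cdot 3^K)\rfloor$ for a suitable constant $C$. The assumption $U\geq 2\tau$ guarantees $L\geq \Omega(3^K)$, placing us in the regime where the construction of \Cref{ex:SlowTerminationExample} applies, and it ensures that the total edge travel time and total inflow volume of the augmented instance remain within the allowed orders $\Theta(\tau)$ and $\Theta(U)$. Plugging in, the IDE makespan is $\Omega(LK 3^K)=\Omega(UK)=\Omega(U\log\tau)$ and the IDE total travel times are $\Omega(U^2\log\tau)$, while $\termTime[OPT]\in \bigO(1)$ and $\sumOfTraveltimes[OPT]\in \bigO(U)$. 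Taking ratios then yields the claimed lower bounds $\MPoA(U,\tau),\TPoA(U,\tau)\in\bigOm(U\log\tau)$.

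The main obstacle lies in the simultaneous engineering of the bypass and its blocker. One must ensure that the pre-loaded queue on the bypass is long enough throughout the entire time window during which the main inflow enters at $s^*$ to guarantee that an IDE's current-length comparison at $s^*$ always favours the edge into the cycling gadget---keeping in mind that the instantaneous $u_{3^K+1}$-to-$t$ distance fluctuates as queues inside $G_{K,L}$ build up and subside---and one must simultaneously arrange matters so that the blocker does not spill over onto the route used by OPT, since the optimal makespan has to remain constant. Balancing these two conflicting requirements, while keeping the total edge travel time and total inflow volume within the prescribed orders of magnitude, is the delicate part; once this construction is in place the remaining verifications reduce to the parameter calculations sketched above together with an invocation of \Cref{thm:SlowTermination}.
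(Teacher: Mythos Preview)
Your overall plan---augment $G_{K,L}$ with a bypass that OPT can exploit but IDE cannot---matches the paper's, but the mechanism you propose (an auxiliary \emph{blocker inflow} that pre-loads a queue on a high-capacity bypass) runs into a genuine scaling problem, and the paper's construction avoids it entirely.

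Here is the issue. To keep IDE off the bypass you need the waiting time on it to exceed the instantaneous $s^*$--$t$ distance through $u_{3^K+1}$, which in $G_{K,L}$ is $\Theta(K\,3^K)$. On a bypass of capacity $\nu_b$ this requires a standing queue of volume $\Theta(K\,3^K\cdot\nu_b)$. But for OPT to clear the main inflow (rate $2U_{K,L}\in\Theta(L\,3^K)$) via the bypass in constant time you need $\nu_b\geq 2U_{K,L}$. Hence the blocker volume is $\Theta(K L\,3^{2K})$, which dwarfs $U_{K,L}$; the total inflow $U$ is then $\Theta(KL\,3^{2K})$ rather than $\Theta(L\,3^K)$, and the IDE makespan $\Theta(LK\,3^K)$ is no longer $\Omega(U\log\tau)$. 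Shrinking $\nu_b$ fixes the blocker volume but destroys the constant OPT makespan. This is precisely the ``balancing'' you flag as delicate, and as stated it does not close.

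The paper sidesteps the problem by not blocking the bypass at all. It inserts two nodes $s,v$ and four edges: high-capacity edges $sv$, $vu_{3^K+1}$ (travel time $1$ each), a high-capacity edge $st$ of travel time $3$, and a capacity-$1$ edge $vt$ of travel time $1$. Because the path $s,v,t$ has physical length $2<3$, every IDE routes everything to $v$; the long edge $st$ is never active, so no blocker is needed. At $v$ the capacity-$1$ edge $vt$ builds a queue that self-regulates to length $\approx\tau_{e_{3^K+1}}$, after which flow enters $u_{3^K+1}$ at rate $2U_{K,L}$ and the argument of \Cref{ex:SlowTerminationExample} applies verbatim. The extra volume absorbed by this queue is only $\Theta(K\,3^K)\ll U_{K,L}$ (accounted for by a slightly longer inflow interval), so the parameters stay in the right regime. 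OPT simply sends everything along $st$ in three time units. The two ideas you are missing are: (i) use \emph{physical} path lengths, not a pre-loaded queue, to make the bypass inactive for IDE; (ii) place the diverting bottleneck on a capacity-$1$ edge so the required queue volume is independent of $U$.
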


\begin{proof}
	Wlog assume that there exist positive integers $L \geq 3^K$ such that $U = (L+3^K)3^K$ and $\tau = 3^{2K}$. Then we take the graph $G_{K,L}$ constructed in the proof of \Cref{thm:SlowTermination} and modify it by adding two new vertices $s$ and $v$ and four new edges as indicating in \Cref{fig:PoA-LowerBound-Graph}.	
		
	\begin{figure}[ht!]\centering
		\begin{adjustbox}{max width=.85\textwidth}
			\begin{tikzpicture}
	\draw (0,4) rectangle (10,-1);
	\node () at (2,1) {\LARGE $C_{K}$};
	
	\node[namedVertex](x1) at (1,0) {$x_1$};
	\node[namedVertex](x2) at (3,0) {$x_2$};
	\node(xd) at (5,0) {$\dots$};
	\node[namedVertex](x3k) at (7,0) {$x_{3^K}$};
	\node[namedVertex](x3k1) at (9,0) {$x_{3^K+1}$};
	
	\node[namedVertex](u3k1) at (9,3) {$u_{3^K+1}$};
		
	\node () at (4,-4.5) {\LARGE $B_{K}$};
	\draw (0,-3.5) -- (0,-2) -- (8,-2) -- (8,-3.5) -- (4,-8) --cycle;
	
	\node[namedVertex] (v1) at (1,-3) {$v_1$};
	\node[namedVertex] (v2) at (3,-3) {$v_2$};
	\node (vd) at (5,-3) {$\dots$};
	\node[namedVertex] (v3k) at (7,-3) {$v_{3^k}$};	
	
	\node[namedVertex] (v0) at (4,-6.5) {$v_0$};
	
	\path[edge,dashed] (v1) -- (v0);
	\path[edge,dashed] (v2) -- (v0);
	\path[edge,dashed] (v3k) -- (v0);
	
	\node[namedVertex](t) at (4,-9) {$t$};
	
	\path[edge] (x1) -- (v1);
	\path[edge] (x2) -- (v2);
	\path (xd) --node[left]{$\dots$} (vd);
	\path[edge] (x3k) -- (v3k);
	
	\path[edge] (v0) -- (t);
	\path[edge] (x3k1) to [bend left=30] node[left]{$e_{3^K+1}$} (t);
	
	\node[namedVertex](v) at (14,3) {$v$};
	\node[namedVertex](s) at (18,3) {$s$};
	
	\path[edge] (u3k1) -- (x3k1);

	\path[edge] (s) -- node[above]() {$\nu_{sv}=2U_{K,L}+1$}node[below]() {$\tau_{sv}=1$} (v);
	\path[edge] (v) -- node[above]() {$\nu_{sv}=2U_{K,L}+1$}node[below]() {$\tau_{sv}=1$} (u3k1);
	
	\path[edge] (s) to[bend left=40] node[above,sloped]() {$\nu_{sv}=2U_{K,L}+1$}node[below,sloped]() {$\tau_{sv}=3$} (t);
	\path[edge] (v) to[bend left=40] node[above,sloped]() {$\nu_{sv}=1$}node[below,sloped]() {$\tau_{sv}=1$} (t);
	
\end{tikzpicture}
		\end{adjustbox}
		\caption{The instance from the proof of \Cref{thm:SlowTermination} (on the left), two additional nodes $s$ and $v$ as well as four new edges. 
				If flow enters the network at node $s$ at a rate of at most $2U_{K,L}+1$ before time $\theta=0$, the whole network can be cleared before time $\theta=3$, if all flow particles use the direct edge $st$ towards the sink $t$. In an IDE flow, however, this is not possible as this edge is not active at the start (since the path $s,v,t$ is strictly shorter). Thus, in an IDE flow all flow particles will travel towards $v$. Since the capacity of edge $vt$ is very small, a queue will built up on this edge and divert most of the flow further to the left into node $u_{3^K+1}$ of the cycling gadget $C_K$.
			}\label{fig:PoA-LowerBound-Graph}
	\end{figure}
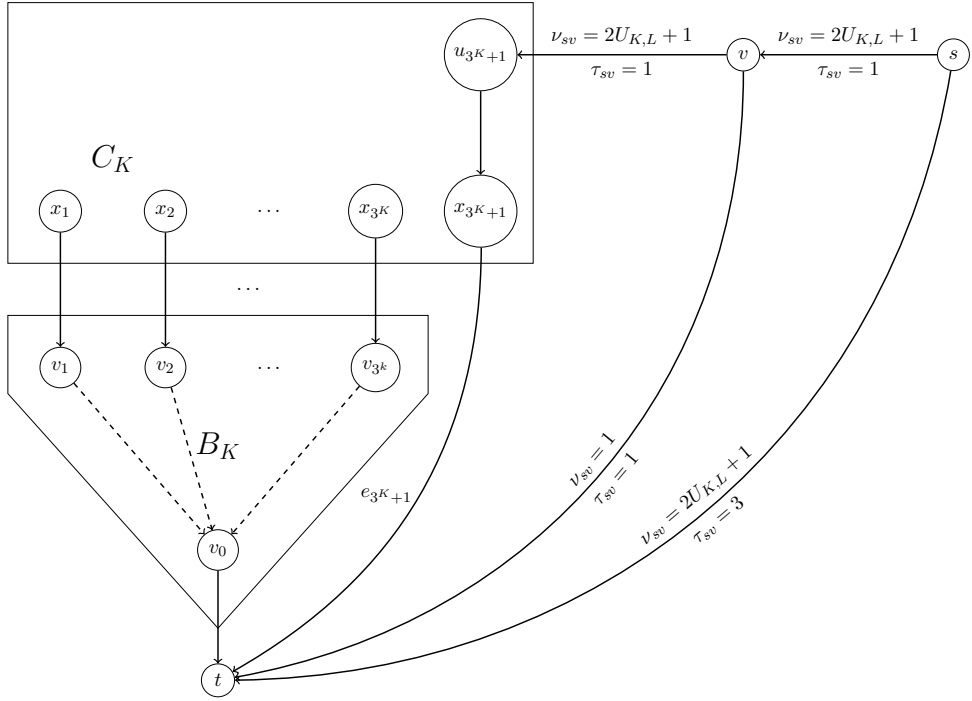

	We now use $s$ as our new source node and a constant inflow rate of $2U_{K,L}+1$ over the interval $[-0.5-\varepsilon,0]$, where $\varepsilon \coloneqq \frac{4+\tau_{e_{3^K+1}}}{2U_{K,L}} \leq \frac{1}{2}$. The optimal makespan of the resulting network $\mathcal{N}$ is then at most $3$, since we can just send all flow on the direct edge from $s$ to $t$, where the last particle arrives $3$ time steps later. Consequently, the optimal total travel times are bounded by $3\cdot (2U_{K,L}+1)$.
	
	In an IDE flow, however, all flow particles travel first to $v$ (as the path $s$-$v$-$t$ is shorter than the path $s$-$t$), where they enter the edge towards $t$. This continues until a queue of length $\tau_{e_{3^K+1}}+3$ has built up at edge $vt$ and the edge towards $u_{3^K+1}$ becomes active, which happens at time $\theta=0.5$. From there on, flow splits between the two edges, entering edge $vt$ at a rate of $1$ and edge $vu_{3^K+1}$ at a rate of $2U_{K,L}$ throughout the interval $[0.5,1]$. Thus, over the interval $[1.5,2]$ flow arrives at node $u_{3^K+1}$ at a rate of $2U_{K,L}$. Continuing with the flow described before the statement of \Cref{thm:SlowTermination} is then again an IDE flow. This shows that $\termTime[IDE](\mathcal{N}) \geq L(K+1)(3^{K+1}+1)$ and $\sumOfTraveltimes[IDE](\mathcal{N}) \geq K3^{2K+3}\frac{1}{2L}\big((L-1)^3-(L-1)\big)$.
	
	From $\tau(\mathcal{N}) = \tau(G_{K,L}) + 4 \in \bigO(3^{2K})$ and $U_{\mathcal{N}} = \Big(\frac{1}{2}+\varepsilon\Big)(2U_{K,L} + 1)  \in \bigO(L3^K + 3^{2K}) = \bigO(L3^K)$
	we get $3^{2K} \in \bigOm(\tau(\mathcal{N}))$ and $L3^K \in \bigOm(U_{\mathcal{N}})$, which implies
		$\termTime[IDE](\mathcal{N}) \geq LK(3^K+1) \in \bigOm\big(U_{\mathcal{N}}\log\tau(\mathcal{N})\big)$.
	Thus, in particular, 
		$\MPoA(U,\tau) \geq \frac{\termTime[IDE](\mathcal{N})}{\termTime[OPT](\mathcal{N})} \in \bigOm(U\log\tau)$.
	Similarly, for the total travel times we get
		$\TPoA(U,\tau) \geq \frac{\sumOfTraveltimes[IDE](\mathcal{N})}{\sumOfTraveltimes[OPT](\mathcal{N})} \in \bigOm\Big(\frac{K3^{2K}L^2}{(L+3^K)3^K}\Big) = \bigOm(LK3^K) = \bigOm(U\log\tau)$.
\end{proof}

Expanding the network constructed in the proof of \Cref{thm:PoALowerBound} into an acyclic network yields in an instance with constant optimal makespan, but IDE makespan of $\tPmax \gg \tPmin$, where $\tPmin$ is the physical length of a shortest path from the source to the sink node. From this, we get the following tighter bounds for acyclic networks:

\begin{theorem}
	Denoting by $\MPoA\big|_{\mathrm{acyclic}}$ the makespan price of anarchy when only ranging over \emph{acyclic} networks we get
		\[\MPoA\big|_{\mathrm{acyclic}} \in  \bigOm(\tPmax) \cap \bigO(U + \tPmax).\]
	These bounds hold even for multi-commodity flows.
\end{theorem}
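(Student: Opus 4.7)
The $\bigO(U+\tPmax)$ bound follows directly from \Cref{lemma:TerminationAcyclicNetworksMulti}. That lemma guarantees $\termTime[IDE](\mathcal{N}) \leq \theta_0 + \tPmax + U$ for any (single- or multi-commodity) IDE in an acyclic network. Combining this with the trivial lower bound $\termTime[OPT](\mathcal{N}) \geq \theta_0 + 1$ (the makespan is at least $\theta_0$, and after the last particle enters, at least one further time unit is needed since all travel times are at least $1$), the plan is to distinguish two cases: if $\theta_0 \geq \tPmax + U$ the ratio is at most $2$; otherwise $\theta_0 < \tPmax + U$ and the ratio is at most $\frac{2(\tPmax+U)}{\theta_0+1} \leq 2(\tPmax+U)$. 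In either case the ratio lies in $\bigO(U+\tPmax)$, and the argument applies verbatim in the multi-commodity setting because \Cref{lemma:TerminationAcyclicNetworksMulti} does.

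\textbf{Lower bound.} For the $\bigOm(\tPmax)$ part, I would start from the instance $\mathcal{N}$ constructed in the proof of \Cref{thm:PoALowerBound} and unroll the single cycle inside the cycling gadget $C_K$. Concretely, I would make $KL$ sequential copies of the horizontal path $u_1,v_1,w_1,\dots,u_{3^K+1}$, replacing each cycle-closing edge $u_{3^K+1}u_1$ with a transition edge of the same travel time and capacity pointing to the $u_1$-vertex of the next copy, and duplicating the blocking gadget $B_K$ appropriately so that every copy has access to its own vertical drops and its own blocking structure. The added source $s$ and its short direct edge to $t$ from \Cref{fig:PoA-LowerBound-Graph} are kept unchanged, so that a system optimal flow can still bypass the whole construction in $\bigO(1)$ time. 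In the resulting acyclic instance the longest $s$-$t$ path has length $\tPmax = \Theta(KL\cdot 3^{K+1})$ while the shortest stays at $\Theta(1)$, and the IDE dynamics will replay the charging/cycling alternation of \Cref{ex:SlowTerminationExample} lap by lap, so that $\termTime[IDE] = \Theta(\tPmax)$ and hence $\MPoA \geq \termTime[IDE]/\termTime[OPT] \in \bigOm(\tPmax)$.

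\textbf{Main obstacle.} The delicate part is verifying that the IDE analysis of \Cref{ex:SlowTerminationExample} -- in particular the $(K,k,j,\theta_0)$-blocking property together with the local flow split from \Cref{obs:SlowTermFlowSplit} -- still goes through after unrolling. One has to carefully track the timing offsets between successive copies so that each (duplicated) blocking gadget is hit during exactly the interval in which its prior queue profile produces the required differential in instantaneous travel times, and to check that the volume of flow reaching the end of each copy is still large enough to sustain the cascade for the next copy. Since the local behavior at every copy is identical to one lap through the original cycle, no new phenomenon beyond what is already handled in the cyclic construction arises; I expect this step to be bookkeeping-heavy but not conceptually harder. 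The fact that these bounds hold even for multi-commodity flows is then immediate, since only \Cref{lemma:TerminationAcyclicNetworksMulti} is used for the upper bound and the lower bound instance is already single-commodity.
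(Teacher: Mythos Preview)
Your proposal is correct and matches the paper's own argument: the upper bound is exactly the application of \Cref{lemma:TerminationAcyclicNetworksMulti} (with the same trivial lower bound on $\termTime[OPT]$), and the lower bound is obtained by unrolling the cyclic instance from \Cref{thm:PoALowerBound} into an acyclic one, precisely as the paper states. The paper's proof is in fact even terser than yours---it simply asserts that the expansion works---so your identification of the bookkeeping needed to re-verify the blocking property across copies is a fair description of what is being swept under the rug.
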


\begin{proof}
	We obtain the lower bound by expanding the network from the proof of \Cref{thm:PoALowerBound} into an acyclic network. The upper bound follows immediately from \Cref{lemma:TerminationAcyclicNetworksMulti}.
\end{proof}

\begin{remark}
	For multi-commodity flows in general networks the price of anarchy (both makespan and total travel times) is unbounded as there are instances wherein flow cycles forever (cf. \cite[Theorem 6.1]{GHS18}).
\end{remark}

%!TEX root = ../article-PoA.tex

\section{Conclusions and Open Questions}

We studied the efficiency of IDE flows and derived the first upper and lower bounds on the price of anarchy of IDE flows with respect to two different objectives (makespan and total travel times). These bounds are of order $\bigO(U\tau)$ and $\Omega(U\log \tau)$, respectively. Comparing these bounds to the (conjectured) constant bound of $\frac{e}{e-1}$ for dynamic equilibria in the full information setting (cf.~Correa, Cristi and Oosterwijk~\cite{CCO19})  shows in some sense a ``price of shortsightedness''.  While  instantaneous dynamic equilibria may be significantly less efficient than dynamic equilibria, in many situations this might be a price one has to pay as the full information needed for dynamic equilibria might just not be available. 

Generally, it would  be interesting to test the different equilibria on real instances and see how their efficiency compares there. A large-scale computational study seems more feasible for IDEs compared to  dynamic equilibria, as already calculating a single $\alpha$-extension is much more difficult for the full information model, while it is easy
for IDE flows using a simple water-filling procedure (see \cite{GHS18}). Indeed,
for  calculating a single extension phase the only positive result 
for dynamic equilibria is based  on a recent work of  Kaiser
\cite{Kaiser20} showing that  for series-parallel graphs a single phase can be computed
in polynomial time. The question of whether a finite number of such extensions is enough to compute a complete equilibrium flow is still open for dynamic equilibria, while we were able to answer this question positively for IDE flows in \cite{GH20Algo}.

Another promising topic for future research is to find out whether our upper bound results can also be applied to different flow models. Those could be extensions of the Vickrey bottleneck model studied in this paper (e.g. adding spillback as in \cite{Sering2019}) or different ways of modelling congestion (e.g. by using differential equations as in \cite{BoyceRL95}). We hope that the very modular structure of our proofs in \cref{sec:UpperBounds} (as represented in \Cref{fig:LogicalStructureSec3}) should make it fairly easy to check which parts still apply to a different model and which have to be adjusted. More concretely, there are two main places at which the special properties of the Vickrey bottleneck model come into play: First, we need some lower bound on the amount of flow volume required to divert other flow away from a physically shortest paths (which is guaranteed by \cite[Lemma 4.4]{GHS18} in our case). This property most likely holds in many different flow models, maybe just with different values for the lower bound leading to different constants in the derived upper bounds on makespan and total travel times. Second, in the proofs of \Cref{lemma:FlowOnEdgesLeavesFastVar,lemma:TerminationAcyclicNetworksV2} we make use of the exact way queues operate in the Vickrey model (i.e. governed by \cref{eq:FeasibleFlow-QueueOpAtCap}) in order to provide upper bounds on how long flow volume can stay on an edge or to the left of any cut in an acyclic network, respectively. Again, it seems rather intuitive that many flow models will exhibit such a property guaranteeing similar upper bounds. The rest of the proofs then just relies on these three lemmas as well as fundamental properties of flows like flow conservation and, thus, should be quite universally applicable.

\section*{Acknowledgments} 

An extended abstract of this paper appeared in the proceedings of the
International Conference on Web and Internet Economics
WINE 2020: Web and Internet Economics, pp 237-251.

The research of the authors was funded by the Deutsche Forschungsgemeinschaft (DFG, German Research Foundation) - HA 8041/1-1 and HA 8041/4-1.

We thank Kathrin Gimmi for sparking the initial idea that allowed us to prove the explicit upper bound on the termination time of IDE flows. We are also grateful to the anonymous reviewers for their valuable feedback.

\clearpage

\bibliographystyle{plainnat}
\bibliography{literature}

\clearpage
\appendix
%!TEX root = ../article-PoA.tex

\section{List of symbols}

\begin{longtable}{lp{2.5cm}p{9.3cm}}
	\textbf{Symbol}			& \textbf{Name}
		& \textbf{Description} \\\hline\endhead
	$G = (V,E)$							& directed graph
		& a directed graph with node set $V$ and edge set $E$ \\
	$\delta_v^+ \subseteq E$			& outgoing edges
		& the set of edges leaving node $v$, i.e. $\delta_v^+ \coloneqq \set{e \in E | e = vw}$ \\ 
	$\delta_v^- \subseteq E$			& incoming edges
		& the set of edges entering node $v$, i.e. $\delta_v^- \coloneqq \set{e \in E | e = uv}$ \\
	$\delta_W^+ \subseteq E$			& outgoing edges
		& the set of edges leaving the subgraph induced by $W$, i.e. $\delta_W^+ \coloneqq \set{wv \in E | w \in W, v \in V\setminus W}$ \\ 
	$\delta_W^- \subseteq E$			& incoming edges
		& the set of edges entering the subgraph induced by $W$, i.e. $\delta_W^- \coloneqq \set{vw \in E | v \in V\setminus W, w \in W}$ \\
	$\tau_e \in \INs$ 				& physical travel time 
		& the physical travel time on edge $e$\\
	$\nu_e \in \INs$ 				& edge capacity
		& the capacity of edge $e$\\
	$t \in V$					& sink (node)
		& the common destination of all particles \\
	$\theta \in \IR_{\geq 0}$ 				& time
		& the time is usually denoted by $\theta$ \\
	$u_v(\theta)\in \IR_{\geq 0}$			& network inflow rate
		& inflow rate of particles at node $v$ at time $\theta$ \\
	$U_v(\theta)\in \IR_{\geq 0}$			& cumulative network inflow
		& the cumulative network inflow at node $v$ up to time $\theta$, i.e. $U_v(\theta) \coloneqq \int_{0}^{\theta} u_v(\zeta)d\zeta$ \\
	$\network$								& network 
		& a network $\network =(G,(\nu_e)_{e \in E},(\tau_e)_{e \in E},(u_v)_{v \in V\setminus\set{t}}, t)$ consisting of a graph, edge travel times, edge capacities, network inflow rates and a designated sink node \\
	$f = (f^+,f^-)$			& flow
		& a flow given by a tuple of edge inflow rates $f^+$ and edge outflow rates $f^-$\\
	$f^+_{e}(\theta)\in \IR_{\geq 0}$ 	& (edge) inflow rate
		& the rate at which particles of flow $f$ enter edge $e$ at time $\theta$ \\
	$f^-_{e}(\theta)\in \IR_{\geq 0}$ 	& (edge) outflow rate
		& the rate at which particles of flow $f$ leave edge $e$ at time $\theta$\\
	$F^+_{e}(\theta) \in \IR_{\geq 0}$ & cumulative (edge) inflow 
		& the total volume of flow having entered edge $e$ up to time $\theta$, i.e. $F_{e}^+(\theta) \coloneqq \int_{0}^{\theta}f_{e}^+(\zeta)d\zeta$ \\ 
	$F^-_{e}(\theta) \in \IR_{\geq 0}$ & cumulative (edge) outflow 
		& the total volume of flow having left edge $e$ up to time $\theta$, i.e. $F_{e}^-(\theta) \coloneqq \int_{0}^{\theta}f_{e}^-(\zeta)d\zeta$\\
	$\edgeLoad[e](\theta) \in \IR_{\geq 0}$ & edge load 
		& the flow volume on edge $e$ at time $\theta$, i.e. $\edgeLoad[e](\theta) \coloneqq F^+_e(\theta) - F^-_e(\theta)$\\
	$\edgeLoad(\theta) \in \IR_{\geq 0}$ & total flow volume 
		& the total volume of flow in the network at time $\theta$, i.e. $\edgeLoad(\theta) \coloneqq \sum_{e \in E}\edgeLoad[e](\theta)$\\
	$Z(\theta) \in \IR_{\geq 0}$ & flow volume at the sink
		& the total volume which has arrived at the sink by time $\theta$, i.e. $Z(\theta) \coloneqq \sum_{e \in \edgesEntering{t}}F^-_e(\theta) - \sum_{e \in \edgesLeaving{t}}F^+_e(\theta)$\\
	$q_e(\theta) \in \IR_{\geq 0}$ 			& queue length 
		& the length of the queue on edge $e$ at time $\theta$, defined as $q_e(\theta) \coloneqq F^+_e(\theta)-F^-_e(\theta+\tau_e)$ \\
	$c_e(\theta) \in \IR_{>0}$			& instantaneous travel time 
		& the current or instantaneous travel time at time $\theta$ over edge $e$, defined as $c_e(\theta) \coloneqq \tau_e + q_e(\theta)/\nu_e$ \\
	$\ell_{v}(\theta) \in \IR_{\geq 0}$	& node labels
		& the current distance from $v$ to $t$ at time $\theta$, i.e. the length of a shortest $v$-$t$ path w.r.t. $c_e(\theta)$. \\
	$E_{\theta} \subseteq E$ 			& active edges 
		& the set of all edges active at time $\theta$, i.e. \newline $E_{\theta} \coloneqq \set{vw \in E | \ell_{ v}(\theta) = \ell_{w}+c_{vw}(\theta)}$ \\ 
	$\theta_0 \in \IR_{\geq 0}$				& end of network inflow 
		& a time such that all network inflow rates are $0$ after $\theta_0$ \\
	$\termTime(f)$			& makespan 
		& the first time after $\theta_0$ with no flow volume inside the network (see \Cref{def:makespan}) \\
	$\sumOfTraveltimes(f)$ 	& total travel times 
		& the sum of all particles' travel times in $f$ (see \Cref{def:totalTravelTime})\\
	$\termTime[IDE](\network)$			& worst case IDE makespan 
		& the worst case makespan of an IDE in $\network$ i.e. $\termTime[IDE](\mathcal{N}) \coloneqq \sup\set{\termTime(f) | f \text{ an IDE flow in } \mathcal{N}}$ \\
	$\termTime[OPT](\network)$			& optimal makespan 
		& the optimal makespan of any feasible flow in $\network$ i.e. $\termTime[OPT](\mathcal{N}) \coloneqq \inf\set{\termTime(f) | f \text{ a feasible flow in } \mathcal{N}}$ \\
	$\sumOfTraveltimes[IDE](\network)$ 	& worst case IDE total travel times
		& the worst case total travel times of an IDE in $\network$, i.e. $\sumOfTraveltimes[IDE](\mathcal{N}) \coloneqq \sup\set{\sumOfTraveltimes(f) | f \text{ an IDE flow in } \mathcal{N}}$\\
	$\sumOfTraveltimes[OPT](\network)$ 	& optimal total travel times
		& the optimal total travel times of any feasible flow in $\network$, i.e. $\sumOfTraveltimes[OPT](\mathcal{N}) \coloneqq \inf\set{\sumOfTraveltimes(f) | f \text{ a feasible flow in } \mathcal{N}}$\\
	$\MPoA(U,\tau)$								& makespane price of anarchy (PoA)
		& the worst case ratio of the makespan of an IDE to the optimal makespan in a network with total flow volume $U$ and total edge length $\tau$ (see \Cref{def:PoA}) \\
	$\TPoA(U,\tau)$								& total travel times PoA
		& the worst case ratio of the total travel times of an IDE to the optimal total travel times in a network with total flow volume $U$ and total edge length $\tau$ (see \Cref{def:PoA}) \\
	$\bigO(g)$					& big O 
		& the set of all functions growing asymptotically at most as fast as $g$ \\
	$\bigOm(g)$					& big omega 
		& the set of all functions growing asymptotically at least as fast as $g$ \\
\end{longtable}

%!TEX root = ../article-PoA.tex
\section{Omitted Proofs}\label{appendix}

\begin{applemma}{\ref{lemma:SumOfTraveltimesAltDef}}
	For any terminating feasible flow we have
	\[\sumOfTraveltimes(f) = \int_{0}^{\termTime(f)}\theta\cdot Z'(\theta)d\theta - \sum_{v \in V\setminus\{t\}}\int_{0}^{\theta_0} \theta \cdot u_v(\theta)d\theta = \int_0^{\termTime(f)} F^\Delta(\theta)d\theta.\]
\end{applemma}

\begin{proof}
	We start with the first equality. Using a telescope sum argument and flow conservation this essentially reduces to the following equality for individual edges.
			
	\begin{claim}\label{claim:SumOfTravelTimesAltDefEdgeWise}
		For any edge $e$ we have
		\[\int_0^{\termTime(f)}c_e(\theta)f_e^+(\theta)d\theta = \int_0^{\termTime(f)}(\theta+\tau_e)f_e^-(\theta+\tau_e) + \theta f_e^+(\theta)d\theta.\]
	\end{claim}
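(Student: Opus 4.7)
The plan is to prove the per-edge identity by interpreting both sides as the total ``particle-time'' spent on edge $e$. On the one hand, $\int_0^{\termTime(f)} c_e(\theta) f_e^+(\theta)\, d\theta$ sums the travel time $c_e(\theta)$ experienced by each infinitesimal mass element entering $e$ at time $\theta$. On the other hand, using the queue-at-capacity rule (\Cref{eq:FeasibleFlow-QueueOpAtCap}) and the constant physical transit time, the queue is FIFO and a particle entering at $\theta$ leaves at time $\theta + c_e(\theta)$; a Fubini swap then shows
\[\int_0^{\termTime(f)} c_e(\theta) f_e^+(\theta)\, d\theta = \int_0^{\termTime(f)} \edgeLoad[e](\theta)\, d\theta,\]
which is the form of Little's law relevant here.

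Next, I would apply integration by parts to the right-hand side using $\edgeLoad[e] = F_e^+ - F_e^-$:
\[\int_0^{\termTime(f)} \edgeLoad[e](\theta)\, d\theta = \termTime(f)\cdot \edgeLoad[e](\termTime(f)) + \int_0^{\termTime(f)} \theta f_e^-(\theta)\, d\theta - \int_0^{\termTime(f)} \theta f_e^+(\theta)\, d\theta.\]
The boundary term vanishes because $\edgeLoad[e](\termTime(f)) = 0$ by \Cref{cor:FlowVolumeReduction} and the definition of $\termTime(f)$. In the surviving $f_e^-$ integral I would then substitute $\theta = \zeta + \tau_e$. The substitution is justified by two vanishing conditions: $f_e^-(\theta) = 0$ for $\theta < \tau_e$ by \Cref{eq:FeasibleFlow-NoOutflowBeforeTau}, and $f_e^-(\theta) = 0$ for $\theta \geq \termTime(f)$ (since no edge inflow can occur after termination, which by queue-operation-at-capacity forces the outflow to vanish as well). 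After the substitution we obtain $\int_0^{\termTime(f)} \theta f_e^-(\theta)\, d\theta = \int_0^{\termTime(f)} (\zeta + \tau_e) f_e^-(\zeta + \tau_e)\, d\zeta$, and combining with the integration-by-parts formula yields the claimed identity.

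The main obstacle is the Little's-law equality in the first step: although intuitively clear, it requires some care at times when $c_e$ fails to be smooth (e.g., when queues form or deplete). A more robust alternative avoids Little's law entirely: write
\[c_e(\theta) f_e^+(\theta) = (\theta + c_e(\theta)) f_e^+(\theta) - \theta f_e^+(\theta),\]
and then use FIFO plus a change of variable $\theta' = \theta + c_e(\theta)$, which -- given that the map $\theta \mapsto \theta + c_e(\theta)$ is non-decreasing as a consequence of \Cref{eq:FeasibleFlow-QueueOpAtCap} -- converts $\int_0^{\termTime(f)} (\theta + c_e(\theta)) f_e^+(\theta)\, d\theta$ into $\int_0^{\termTime(f)} \theta' f_e^-(\theta')\, d\theta'$. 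Applying the same $\theta' = \zeta + \tau_e$ substitution as above then gives the claim directly, bypassing any separate Little's-law argument.
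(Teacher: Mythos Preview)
Your argument is correct, and it takes a genuinely different route from the paper's own proof. The paper proceeds by a direct integration-by-parts computation on $\int c_e(\theta)f_e^+(\theta)\,d\theta$, expanding $c_e=\tau_e+q_e/\nu_e$, and then eliminating the resulting term $\frac{1}{\nu_e}\int q_e(\theta)(f_e^+(\theta)-\nu_e)\,d\theta$ by invoking two queue-specific facts: that $q_e'(\theta)=f_e^+(\theta)-\nu_e$ whenever $q_e(\theta)>0$, and that $\int_a^b q_e q_e'\,d\theta=0$ whenever $q_e(a)=q_e(b)=0$. It is a somewhat opaque chain of manipulations that leans on the detailed ODE description of the Vickrey queue.

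Your second approach, by contrast, isolates the single structural property that actually matters, namely the FIFO identity $F_e^+(\theta)=F_e^-(\theta+c_e(\theta))$ together with the monotonicity of $T(\theta)\coloneqq\theta+c_e(\theta)$. Writing $c_e f_e^+=(\theta+c_e(\theta))f_e^+-\theta f_e^+$ and pushing the measure $f_e^+(\theta)\,d\theta$ forward along $T$ to obtain $f_e^-(\theta')\,d\theta'$ immediately gives $\int_0^{\termTime(f)}(\theta+c_e(\theta))f_e^+(\theta)\,d\theta=\int_{\tau_e}^{\termTime(f)+\tau_e}\theta' f_e^-(\theta')\,d\theta'$, after which the shift $\theta'=\zeta+\tau_e$ lands exactly on the right-hand side (with the limits working out cleanly because $q_e(0)=q_e(\termTime(f))=0$). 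This is shorter, more transparent, and would transfer to any FIFO edge model without reference to the specific form of \cref{eq:FeasibleFlow-QueueOpAtCap}. Your first approach via Little's law ($\int c_e f_e^+=\int F_e^\Delta$) is also valid and is essentially the same argument viewed through Fubini; the paper does not use it at all. One small remark: where $T$ is constant on an interval (queue depleting with zero inflow) the change of variables is not invertible, but on such intervals $f_e^+=0$, so the integrand vanishes and there is nothing to worry about.
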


	\begin{proofClaim}
		The proof of this equality is basically a calculation using integration by parts together with several elementary properties of feasible flows. These properties are
		\begin{enumerate}[label=(\alph*)]
			\item For any time $\theta$ we have $F^+_e(\theta) = F^-_e(\theta+c_e(\theta))$.\label{prop:EqualityOfCumInandOutflow}
			\item The functions $F_e^+$, $F_e^-$, $q_e$ and $c_e$ are differentiable almost everywhere.
			\item For almost all $\theta$ we have
			$q'_e(\theta) = \begin{cases}
				f^+_e(\theta)-\nu_e, &\text{ if } q_e(\theta) > 0 \\
				\max\Set{f^+_e(\theta)-\nu_e,0}, &\text{ if } q_e(\theta) = 0
			\end{cases}$.\label{prop:DerivativeOfQueueLength}
			\item Given two times $a < b$ with empty queue, we have
			$\int_a^b q_e(\theta)q_e'(\theta)d\theta = 0$.\label{prop:IntegralOfQueueTimesQueueDerIsZero}
		\end{enumerate}
		Proofs of the first three properties can be found in \cite[Lemma 3.1]{SeringThesis}.\footnote{Note, that in \cite{SeringThesis}, queues are placed at the end of the edge (instead of at the beginning like in our model) and their length is denoted by $z_e$. The notation $q_e(\theta)$ is used there to denote the waiting time for particles entering edge $e$ at time $\theta$ and, thus, corresponds to $\frac{q_e(\theta)}{\nu_e}$ in our notation.}
		Property \ref{prop:IntegralOfQueueTimesQueueDerIsZero} can be shown using integration by parts:
		\[0 = \left[q_e(\theta)^2\right]_a^b = \int_a^b q_e'(\theta) q_e(\theta) d\theta + \int_a^b q_e(\theta) q_e'(\theta) d\theta = 2\int_a^b q_e(\theta) q_e'(\theta) d\theta.\]
		Using these properties we can now show the desired equality:
		\begin{align*}
			&\int_0^{\termTime(f)}c_e(\theta)f_e^+(\theta)d\theta \\
			&\quad= \left[c_e(\theta)F^+_e(\theta)\right]_0^{\termTime(f)} - \int_0^{\termTime(f)}c_e'(\theta)F_e^+(\theta)d\theta  \\
			&\quad= \left[\tau_e F_e^+(\theta)\right]_0^{\termTime(f)} -  \int_0^{\termTime(f)}\frac{q_e'(\theta)}{\nu_e}F_e^+(\theta)d\theta \\
			&\quad = \left[(\theta+\tau_e) F_e^+(\theta) - \theta F_e^+(\theta)\right]_0^{\termTime(f)} - \left[\frac{q_e(\theta)}{\nu_e}F_e^+(\theta)\right]_0^{\termTime(f)} + \int_0^{\termTime(f)}\frac{q_e(\theta)}{\nu_e}f_e^+(\theta)d\theta \\
			&\quad \overset{\text{\ref{prop:EqualityOfCumInandOutflow}}}{=} \left[(\theta+\tau_e) F_e^-(\theta+\tau_e)\right]_0^{\termTime(f)} - \left[\theta F_e^+(\theta)\right]_0^{\termTime(f)} - 0 + \frac{1}{\nu_e}\int_0^{\termTime(f)}q_e(\theta)f_e^+(\theta)d\theta \\
			&\quad = \int_0^{\termTime(f)}(\theta+\tau_e)f_e^-(\theta+\tau_e)d\theta + \int_0^{\termTime(f)}F_e^-(\theta+\tau_e)d\theta - \int_0^{\termTime(f)}\theta f_e^+(\theta)d\theta \\
				&\quad\quad\quad- \int_0^{\termTime(f)} F_e^+(\theta)d\theta + \frac{1}{\nu_e}\int_0^{\termTime(f)}q_e(\theta)f_e^+(\theta)d\theta \\
			&\quad = \int_0^{\termTime(f)}(\theta+\tau_e)f_e^-(\theta+\tau_e) - \theta f_e^+(\theta)d\theta -  \int_0^{\termTime(f)}F_e^+(\theta)-F_e^-(\theta+\tau_e)d\theta \\
				&\quad\quad\quad+ \frac{1}{\nu_e}\int_0^{\termTime(f)}q_e(\theta)f_e^+(\theta)d\theta \\
			&\quad = \int_0^{\termTime(f)}(\theta+\tau_e)f_e^-(\theta+\tau_e) - \theta f_e^+(\theta)d\theta + \frac{1}{\nu_e}\int_0^{\termTime(f)}q_e(\theta)f_e^+(\theta) - \nu_e q_e(\theta)d\theta \\
			&\quad = \int_0^{\termTime(f)}(\theta+\tau_e)f_e^-(\theta+\tau_e) - \theta f_e^+(\theta)d\theta + \frac{1}{\nu_e}\int_0^{\termTime(f)}q_e(\theta)\left(f_e^+(\theta) - \nu_e\right)d\theta \\
			&\quad \overset{\text{\ref{prop:DerivativeOfQueueLength}}}{=} \int_0^{\termTime(f)}(\theta+\tau_e)f_e^-(\theta+\tau_e) - \theta f_e^+(\theta)d\theta + \frac{1}{\nu_e}\int_0^{\termTime(f)}q_e(\theta)q_e'(\theta)d\theta \\
			&\quad \overset{\text{\ref{prop:IntegralOfQueueTimesQueueDerIsZero}}}{=} \int_0^{\termTime(f)}(\theta+\tau_e)f_e^-(\theta+\tau_e) - \theta f_e^+(\theta)d\theta\qedhere
		\end{align*}
	\end{proofClaim}

	Using this claim and the fact that $f^+_e(\theta) = 0$ for all $\theta > \termTime(f)$, we can now deduce the first part of the lemma using flow conservation at the nodes:
	\begin{align*}
		\sumOfTraveltimes(f) &= \sum_{e \in E}\int_0^{\termTime(f)} c_e(\theta)f^+_e(\theta) d\theta \overset{\text{\Cref{claim:SumOfTravelTimesAltDefEdgeWise}}}{=} \sum_{e \in E}\int_0^{\termTime(f)} \theta f^-_e(\theta) d\theta - \int_0^{\termTime(f)} \theta f^+_e(\theta) d\theta \\
		&= \sum_{v \in V}\sum_{e \in \edgesEntering{v}}\int_0^{\termTime(f)} \theta f^-_e(\theta) d\theta - \sum_{v \in V}\sum_{e \in \edgesLeaving{v}}\int_0^{\termTime(f)} \theta f^+_e(\theta) d\theta \\
		&= \int_0^{\termTime(f)} \theta\left(\sum_{e \in \edgesEntering{t}} f^-_e(\theta) - \sum_{e \in \edgesLeaving{t}} f^+_e(\theta) \right)d\theta \\
			&\quad\quad+ \sum_{v \in V \setminus \set{t}}\int_0^{\termTime(f)} \theta\left(\sum_{e \in \edgesEntering{v}} f^-_e(\theta) - \sum_{e \in \edgesLeaving{v}} f^+_e(\theta) \right)d\theta \\
		&\overset{\mathclap{\text{\eqref{eq:FeasibleFlow-FlowConservation},\eqref{eq:DefZ}}}}{=}\,\, \sum_{e \in \edgesEntering{t}}\int_0^{\termTime(f)} \theta Z'(\theta) d\theta - \sum_{v \in V \setminus \set{t}}\int_0^{\termTime(f)} \theta u_v(\theta)d\theta \\
		&=\int_{0}^{\termTime(f)}\theta\cdot Z'(\theta)d\theta - \sum_{v \in V\setminus\{t\}}\int_{0}^{\theta_0} \theta \cdot u_v(\theta)d\theta
	\end{align*}
	where the last equality follows directly from the definition of $\theta_0$.
	
	For the second part of the lemma, we use the fact that $F^{\Delta}(\termTime(f)) = 0$ together with partial integration to obtain
	\begin{align*}
		0 = \left[F^\Delta(\theta)\theta\right]_0^{\termTime(f)} = \int_0^{\termTime(f)}{F^\Delta}'(\theta)\theta d\theta + \int_0^{\termTime(f)}{F^\Delta}(\theta)d\theta.
	\end{align*}
	From this we can show the second equality using \Cref{lem:GeIsTotalFlowInGraph}:
	\begin{align*}
		\int_0^{\termTime(f)}{F^\Delta}(\theta)d\theta &= -\int_0^{\termTime(f)}{F^\Delta}'(\theta)\theta d\theta = -\int_0^{\termTime(f)}\left(\sum_{v \in V\setminus{t}}U'_v(\theta) - Z'(\theta)\right)\theta d\theta \\
		&= \sum_{v \in V\setminus{t}}\int_0^{\termTime(f)} \theta \cdot u_v(\theta)d\theta - \int_0^{\termTime(f)} \theta \cdot Z'(\theta)d\theta.\qedhere
	\end{align*}
\end{proof}

\end{document}